\documentclass[12pt]{article}

\usepackage{amsmath, amssymb, amsthm}
\usepackage{setspace}
\usepackage{bm}
\usepackage{comment}
\usepackage{graphicx}
\usepackage[margin=1in]{geometry}
\usepackage{setspace}
\usepackage{natbib}
\usepackage{booktabs}
\usepackage{tabularx}
\usepackage{algorithm}
\usepackage{algpseudocode}
\usepackage{multirow}

\newcommand{\bbZ}{\mathbb Z}

\newcommand{\cind}{\stackrel{d}{\to}}
\newcommand{\cinp}{\stackrel{p}{\to}}
\newcommand{\Cov}{\operatorname{Cov}}

\newcommand{\DMMD}{\mathcal D_{\operatorname{MMD}(\kappa)}}
\newcommand{\Fhat}{\widehat F}
\newcommand{\fdrf}{\texttt{fDRF}}
\newcommand{\Gam}{\operatorname{Gamma}}
\newcommand{\iid}{\stackrel{\textnormal{iid}}{\sim}}
\newcommand{\E}{\mathbb E}
\newcommand{\GP}{\operatorname{GP}}
\newcommand{\Identity}{\mathrm{I}}
\newcommand{\Leaves}{\mathcal L}
\newcommand{\median}{\operatorname{median}}
\newcommand{\MMD}{\operatorname{MMD}}
\newcommand{\Normal}{\operatorname{Normal}}

\newcommand{\Reals}{\mathbb R}

\newcommand{\sG}{\mathcal G}
\newcommand{\sH}{\mathcal H}
\newcommand{\sK}{\mathcal K}

\newcommand{\sN}{\mathcal N}
\newcommand{\sP}{\mathcal P}

\newcommand{\sS}{\mathcal S}
\newcommand{\sX}{\mathcal X}
\newcommand{\sY}{\mathcal Y}
\newcommand{\test}{\texttt{test}}
\newcommand{\train}{\texttt{train}}
\newcommand{\Var}{\operatorname{Var}}

\newcommand{\Tree}{\mathcal T}
\newcommand{\Uniform}{\operatorname{Uniform}}

\newcommand{\Frechet}{Fr{\'e}chet}
\newcommand{\Holder}{H\"{o}lder}

\newtheorem{lemma}{Lemma}
\newtheorem{proposition}{Proposition}
\newtheorem{theorem}{Theorem}

\theoremstyle{definition}
\newtheorem{assumption}{Assumption}

\newtheorem{remark}{Remark}

\usepackage[citecolor = blue, colorlinks]{hyperref}

\title{Conditional Distribution Estimation for Functional Responses with Random Forests}

\author{
  Poorbita Kundu$^{*}$ and
  Antonio R. Linero$^{\dagger}$
}

\begin{document}

\maketitle

\begin{abstract}
  Many functional data analyses reduce random functions to scalar summaries or conditional mean curves. This is limiting when we wish to understand how covariates affect the distribution of entire functional responses, including their shape, timing, or variability. We study the problem of estimating conditional laws of functional outcomes and show that these objects can be estimated and evaluated in a practical nonparametric framework. To do this, we introduce \emph{functional distributional random forests}, which estimate each conditional law as a covariate-dependent distribution over sampled functions by training a random forest to minimize a kernel-based maximum mean discrepancy within the leaf nodes of the decision tree. This supports inference on arbitrary functionals of the conditional distribution while keeping predictive samples tied to realistic curves. We consider a variety of kernels defined on function spaces, including Sobolev and operator-induced kernels. We also provide conditions for consistency of our estimator and develop scoring rules for comparing it to baseline estimators. In simulations, our method recovers distributional changes that are missed by baseline methods. In an application to NHANES accelerometer data, it identifies interesting covariate-associated changes in both median activity profiles and predictive dispersion.
\end{abstract}

\footnote{$^{*}$Fred Hutchinson Cancer Center. $^{\dagger}$Department of Statistics and Data Sciences, University of Texas at Austin. \href{mailto:antonio.linero@austin.utexas.edu}{antonio.linero@austin.utexas.edu}}

\doublespacing

\section{Introduction}

Functional data are often collected because it is believed that the shape of a profile carries scientific information. For example, a daily activity profile is a structured object whose timing, variability, and extremes may themselves be of interest. Nevertheless, many analyses simplify functional responses before modeling them by reducing each curve to a small number of scalar summaries. Those that do not typically focus on coarse quantities like the conditional mean (or \Frechet\ mean, \citealp{petersen2019frechet}) rather than providing a predictive distribution for curves. These approaches are useful, but they can miss differences between populations that appear mainly in the spread of trajectories, the timing of peaks, the roughness of curves, or the probability of unusual functional patterns. 

This motivates treating the conditional law of a functional response as a primary estimand. This makes it possible to ask more direct questions about how covariates affect the distribution of random functions, but also introduces statistical and computational challenges. First, the response space is infinite-dimensional, so a conditional distribution cannot be summarized by a small number of parameters without imposing structure that may be inappropriate. Second, functional observations often have geometries that should be respected; examples include smoothness, monotonicity, or convexity. Such considerations are central in functional data analysis (FDA), particularly in sparse functional settings and analyses of random objects \citep{yao2005functional, wang2016functional, petersen2019frechet}. Third, distributional changes may occur in features that are not well captured by pointwise means or variances, such as phase variation or tail behavior, which have long been recognized as important sources of structure in FDA \citep{tang2008pairwise, marron2015functional, dai2018intrinsic}. Finally, distribution estimation is itself nontrivial, and standard prediction errors assess point estimates rather than full conditional distributions. These issues call for methods that can represent flexible distributions over functions, adapt to covariates in a nonparametric way, and be assessed using criteria appropriate for probabilistic predictions.

To address these challenges, we develop a framework for estimating and evaluating conditional distributions of functional responses. Our main contributions are as follows.
\begin{enumerate}
\item We propose the \emph{functional distributional random forest} (\fdrf) as an estimator of conditional laws of functional outcomes. We also propose a procedure for uncertainty quantification and assess the performance of confidence intervals empirically.
\item We provide theoretical and empirical evidence of the usefulness of \fdrf{s}. Theoretically, we provide conditions for consistency of the estimated conditional laws, viewed both as probability measures on $L_2([0,1])$ and $C([0,1])$. Empirically, we demonstrate strong performance on both synthetic data examples and real data from the National Health and Nutrition Examination Survey (NHANES).
\item To compare the \fdrf\ with competing methods, we also propose, as a contribution of independent interest, proper scoring rules for evaluating the quality of conditional distribution estimates for functional outcomes.
\end{enumerate}

The \fdrf\ represents each conditional law as a covariate-dependent weighted empirical distribution over observed functions, and is an extension of the distributional random forest introduced by \citep{cevid2022distributional}. As a discrete distribution over observed outcomes, the \fdrf\ is tied to realistic functional observations while still adapting flexibly to changes in covariate space. Splits are chosen to maximize an estimate of a maximum mean discrepancy between the distributions in the child nodes, with the discrepancy tuned to emphasize distributional differences in features such as shape, roughness, or other operator-induced summaries of the response. This approach makes conditional-law estimation for functional data practical, supports inference on summaries of the estimated law, and allows detection of distributional structure missed by methods focused on conditional means.

We support these claims through a combination of theory establishing consistency of estimates, simulations showing effective learning of conditional laws relative to baselines, and an application to accelerometer data from the National Health and Nutrition Examination Survey (NHANES). On the theoretical side, establishing consistency is not immediate from existing results because (i) convergence of distributions in maximum mean discrepancy only implies convergence in distribution for random elements taking values in a locally compact Polish space, and (ii) because we are often interested in quantities, such as the maximum of a function and point-evaluation of a function, that are not continuous functionals on $L_2([0,1])$. 

\subsection{Related Work}

Random forests were originally developed for scalar regression and classification \citep{breiman2001random}, but a growing literature has adapted tree ensembles to settings with both functional predictors and functional outcomes. For functional predictors, \citet{gregorutti2015grouped} proposed grouped variable importance for random forests after representing functional covariates in a wavelet basis, and \citet{moller2016random} constructed forests using the average of the functional over random intervals as covariates. For functional responses, \citet{rahman2019functional} developed functional random forests for predicting dose-response curves, and \citet{fu2021functional} proposed FunFor for curve responses with high-dimensional scalar predictors. These approaches are closest to ours in their use of forests with functional objects, but their primary targets are point prediction, variable ranking, or conditional mean-type curve estimation. More broadly, functional regression methods based on linear, additive, or mixed-effect structures \citep{ramsay2005functional,morris2015functional,greven2017general} and random-object methods based on conditional Fr{\'e}chet means \citep{petersen2019frechet,qiu2024random} provide flexible ways to estimate central features of functional or object outcomes, but do not estimate the full conditional law of a random function.

A separate line of work uses forests for uncertainty quantification and conditional distribution estimation. Most directly related is the distributional random forest of \citet{cevid2022distributional}, which grows forests using an MMD-based split criterion and represents the fitted conditional law as a weighted empirical distribution. We extend this target-free distributional-forest from multivariate Euclidean responses to functional responses, drawing on methods for performing two-sample kernel tests for distributions over functions \citep{gretton2012kernel,wynne2022kernel}. This distinction is important because the scientific targets in functional data often concern distributions of shape, timing, roughness, extrema, or other nonlinear functionals, rather than just pointwise means or scalar quantiles. Beyond this, quantile regression forests \citep{meinshausen2006quantile} use forest-based weights to estimate the conditional distribution of a scalar response, while random survival forests \citep{ishwaran2008random} estimate conditional distributions in the presence of censoring. Other methods include parametric distributional regression forests \citep{schlosser2019distributional}, conditional density forests \citep{pospisil2019frfcde}, Wasserstein random forests \citep{du2021wasserstein}, and generalized random forests for targets defined by local estimating equations \citep{athey2019generalized}.

Finally, there is a large literature on generative modeling for functionals. Recent neural approaches, including TimeGAN \citep{yoon2019time}, variational autoencoding neural operators \citep{seidman2023variational}, and function-space autoencoders \citep{bunker2025autoencoders}, are flexible and can learn complex laws over curves. These are promising alternatives when the goal is to sample entirely new functions, but they typically require more modeling and tuning decisions, including but not limited to the choice of architecture, latent dimension, likelihood, or objective. By contrast, the \fdrf\ is designed for conditional distribution estimation and downstream inference, and automatically preserves realistic curve geometry while also keeping functionals of the conditional law easy to compute.

\section{Functional Distributional Random Forests}

\subsection{Notation and Basic Concepts}

We let $Y_1, \ldots, Y_N$ be random elements taking values in a separable Banach space $\sY$ with norm $\|\cdot\|_{\sY}$. We assume that $\sY$ embeds continuously into $L_2([0,1])$, but for much of the theoretical development this is unnecessary. Let $\sP(\sY)$ denote the set of probability distributions on $\sY$ under the Borel $\sigma$-field induced by $\|\cdot\|_{\sY}$. We also let $X_1, \ldots, X_N$ be random vectors on $\sX \subseteq \Reals^P$ and assume the $(X_i, Y_i)$'s are iid from a joint law $F_{X,Y}$. Let $\{F_x\}$ be such that $[Y_i \mid X_i = x] \sim F_x$, i.e., $\{F_x : x \in \sX\}$ is a regular conditional distribution of $Y_i$ given $X_i$.

Next, we let $\kappa : \sY \times \sY \to \Reals$ be a positive definite kernel function. Associated with $\kappa$ is a \emph{reproducing kernel Hilbert space} (RKHS) $\sH_\kappa$ of functions on $\sY$ equipped with inner product $\langle \cdot, \cdot \rangle_{\sH_\kappa}$ and norm $\|\cdot\|_{\sH_\kappa}$. The defining property of $\sH_\kappa$ is that, for each $y \in \sY$, the function $\kappa(y,\cdot)$ belongs to $\sH_\kappa$, and the reproducing property
\begin{math}
    f(y) = \langle f, \kappa(y,\cdot) \rangle_{\sH_\kappa}
\end{math}
holds for every $f \in \sH_\kappa$. The feature map $y \mapsto \kappa(y,\cdot)$ embeds elements of $\sY$ into $\sH_\kappa$.

For any $F \in \sP(\sY)$, the \emph{kernel mean embedding} of $F$ into $\sH_\kappa$ is defined by $F \mapsto \mu_F(\cdot) = \int \kappa(y, \cdot) \ F(dy)$, provided that the integral is well-defined. Define the \emph{maximum mean discrepancy} (MMD) between $F, G \in \sP(\sY)$ as $\MMD(F, G) = \|\mu_F - \mu_G\|_{\sH_\kappa}$ where
\begin{align*}
  \|\mu_F - \mu_G\|^2_{\sH_\kappa}
  =
  \int \kappa(y, y') \ F(dy) \ F(dy')
  + \int \kappa(y, y') \ G(dy) \ G(dy')
  - 2 \int \kappa(y, y') \ F(dy) \ G(dy').
\end{align*}
The kernel $\kappa$ is called \emph{characteristic} if $\MMD(F, G) = 0$ implies that $F = G$. When $\kappa$ is characteristic, $\MMD(F,G)$ is a metric on $\sP(\sY)$.

\subsection{Random Forest Tree-Growing Algorithm}
\label{sec:random-forest}

The \fdrf\ estimates the conditional law $F_x = F_{Y \mid X}(\cdot \mid x)$ by using a random forest to construct a covariate-dependent empirical distribution over the observed functional responses. The full algorithm is given in Algorithm~\ref{alg:fdrf-fit-predict} and Algorithm~\ref{alg:fdrf-grow-tree} of the Supplementary Material, and also includes an additional ``honesty'' subsampling step (see Section~\ref{sec:theoretical-properties}).
Each split is chosen to separate the training sample into child nodes whose empirical distributions are as different as possible, as measured by the MMD induced by the kernel $\kappa$. 
The resulting forest can then be used for estimating many different summaries of the conditional law.

Consider a node $b$ of a tree $\Tree$. Let $R_b\subseteq \Reals^P$ denote the covariate cell associated with the node,
\begin{math}
  I_b=\{i:X_i\in R_b\}
\end{math}
with $n_b = |I_b|$ and $Y_b = \{Y_i : i \in I_b\}$. For a candidate split along coordinate $j \in \{1,\ldots,P\}$ at cutpoint $c$, write
\begin{math}
  R_{bL}(j,c)=R_b\cap\{x:x_j\le c\},
\end{math}
and 
\begin{math}
  R_{bR}(j,c)=R_b\cap\{x:x_j>c\},
\end{math}
with associated index sets $I_{bL}(j,c)$ and $I_{bR}(j,c)$, sizes
$n_{bL}(j,c)$ and $n_{bR}(j,c)$, and response samples
$Y_{bL}(j,c)$ and $Y_{bR}(j,c)$. 
Among the admissible splits $(j,c)\in\mathcal A_b$, where $\mathcal A_b$ incorporates the usual random-forest restrictions such as randomly selected coordinates and minimum child-node sizes, the split is chosen according to
\begin{align}
  \label{eq:fdrf-split}
  (j_b,c_b)
  \in
  \arg\max_{(j,c)\in\mathcal A_b}
  \frac{n_{bL}(j,c)n_{bR}(j,c)}{n_b^2}
  \,
  \DMMD\{Y_{bL}(j,c),Y_{bR}(j,c)\}.
\end{align}
Here, for two finite samples of functions $U = \{u_1, \ldots, u_{|U|}\}$ and $V = \{v_1, \ldots, v_{|V|}\}$, the empirical MMD appearing in \eqref{eq:fdrf-split} is
\begin{align}
  \label{eq:fdrf-empirical-mmd}
  \DMMD(U,V)
  &=
  \frac{1}{|U|^2}\sum_{r,s=1}^{|U|}
    \kappa(u_r,u_s)
  +
  \frac{1}{|V|^2}\sum_{r,s=1}^{|V|}
    \kappa(v_r,v_s)
  -
  \frac{2}{|U||V|}
  \sum_{r=1}^{|U|}\sum_{s=1}^{|V|}
    \kappa(u_r,v_s).
\end{align}
Thus the split in \eqref{eq:fdrf-split} favors a partition for which the conditional distribution of $Y$ maximally changes across the two children. The factor $n_{bL}n_{bR}/n_b^2$ discourages highly unbalanced splits and is the analogue of the scaling that appears in the usual CART variance-reduction criterion. The criterion in \eqref{eq:fdrf-split} can be viewed as a version of CART in which the response $Y_i$ is replaced by its RKHS feature representation $\varphi(Y_i)$. For a general functional kernel, the split can respond to changes in the entire distribution of the curve, including changes in variability, roughness, timing, or other features encoded by $\kappa$, rather than only changes in a pointwise or integrated mean. 

Exact computation of \eqref{eq:fdrf-empirical-mmd} is expensive. Following \citet{cevid2022distributional}, we will use a random-feature representation of the MMD splitting criterion, tailored to be appropriate for functional outcomes. For simplicity, assume that the kernel is given by $\kappa(y, y') = \exp\left\{ -\gamma\|y - y'\|_{L_2}^2 / 2 \right\}$. Let $W(y)$ be the Itô integral $\int_0^1 y(t) \ dB(t)$ where $B(t)$ is standard Brownian motion on $[0,1]$ so that $W(\cdot)$ is an isonormal Gaussian process on $L_2([0,1])$, i.e.,  $W(y) \sim \Normal(0, \|y\|^2_{L_2})$ and $\Cov\{W(y), W(y')\} = \langle y, y' \rangle_{L_2}$ (\citealp{nualart2009malliavin}, Chapter 1). Then the kernel can be expressed as
\begin{align}
  \label{eq:fdrf-random-feature-representation}
  \kappa(y,y') = \E\left\{ \varphi_{W,b}(y) \, \varphi_{W,b}(y') \right\}
  \quad \text{where} \quad
  \varphi_{W,b}(y) = \sqrt{2} \cos\{\sqrt{\gamma} W(y) + b\},
\end{align}
and $b \sim \Uniform(0,2\pi)$. Consequently, $\kappa(y,y')$ can be approximated using random features $\varphi_{W_j,b_j}(y)$
with $W_j \iid \GP(0, \langle \cdot, \cdot \rangle_{L_2})$ and $b_j \sim \Uniform(0, 2\pi)$. To implement this in practice, we add another layer of random featurization and approximate
\begin{align}
  \label{eq:random-features}
  \varphi_{W_j, b_j}(y) \approx \sqrt 2 \cos\left\{ \sqrt{\frac{\gamma}{R_j}} \sum_{r = 1}^{R_j} \xi_{rj} y(t_r) + b_j\right\},
  \qquad \xi_{r j} \iid \Normal(0,1 ),
\end{align}
for $j = 1,\ldots,B$ with $B$ denoting the number of random features we wish to use. Letting $\varphi(y) = (\varphi_{W_j, b_j}(y) : j = 1,\ldots,B)^{\top}$, the MMD criterion can then be approximated as
\begin{align*}
  \DMMD(U,V)
  \approx 
  \frac{1}{B}
  \left\|
  \frac{1}{|U|}\sum_{u \in U} \varphi(u) - \frac{1}{|V|} \sum_{v \in V} \varphi(v)
  \right\|^2,
\end{align*}
which is equivalent to the usual criterion for constructing a random forest to predict a multivariate mean. In our illustrations, we take $R_j \equiv R$ and use a large dense grid of function evaluations with the $t_r$'s evenly spaced, but we could instead take $t_r \iid \Uniform(0,1)$ and $R_j$ as large as possible. The $(\xi_{rj}, b_j, R_j, t_r)$'s are sampled independently for each split. 

While we assumed a Gaussian kernel, the argument above applies more generally. The role of $\kappa$ is to define the distributional differences in the functional response that the splitting rule is sensitive to, while the role of \eqref{eq:fdrf-random-feature-representation} is to provide random test functions whose averaged squared empirical differences reproduce the desired MMD. Consequently, this construction can be applied to functional kernels with tractable random-feature representations, such as transform-based Gaussian kernels that we discuss in Section~\ref{sec:kernel-options}. If the Gaussian kernel is not desired, we can obtain other kernels by varying the distribution of $\xi_{rj}$; in the Supplementary Material, we show how to obtain rational-quadratic, Laplacian, and Cauchy kernels.

\subsection{Estimating Conditional Distributions}
\label{sec:estimating-conditional-distributions}

We now use a fitted tree ensemble $\Tree_1, \ldots, \Tree_M$ to form estimates of the conditional distribution. Let $\ell_m(x)\in\Leaves(\Tree_m)$ be the leaf of tree $m$ containing a point $x$. If $S_m$ denotes the training subsample used for tree $m$, define
\begin{math}
  I_m(x)
  =
  \{i\in S_m:X_i\in R_{\ell_m(x)}\},
\end{math}
and $n_m(x)=|I_m(x)|$. We then let the weight assigned by tree $m$ to observation $i$ be
\begin{math}
  w_{m,i}(x)
  =
  \frac{1\{i\in I_m(x)\}}{n_m(x)}. 
\end{math}
The forest weights and conditional distribution estimate are obtained by averaging over trees,
\begin{align}
  \label{eq:fdrf-weights}
  w_i(x)
  =
  \frac{1}{M}\sum_{m=1}^M w_{m,i}(x),
  \quad \text{and} \quad
  \Fhat_x(\cdot)
  =
  \sum_{i=1}^N w_i(x)\delta_{Y_i}(\cdot).
\end{align}
The $w_i(x)$'s can be interpreted as adaptive nearest-neighbor weights: $Y_i$ receives large weight at $x$ when $X_i$ often falls in the same terminal node as $x$ across the forest. This $\Fhat_x(\cdot)$ recovers the natural estimate of the kernel mean embedding from the random forest as
\begin{math}
  \widehat\mu(x)
  =
  \int \varphi(y)\,\Fhat_x(dy)
  =
  \sum_{i=1}^N w_i(x)\varphi(Y_i).
\end{math}
Once the weights have been computed, a functional $\tau(F_x)$ of the conditional law can be estimated by the plug-in estimate
\begin{math}
  \widehat\tau(x)
  =
  \tau(\Fhat_x).
\end{math}
For example, we can estimate the predictive distribution of $\max_t Y(t)$ at $X = x$ as
\begin{math}
  \sum_{i = 1}^N w_i(x) \delta_{\max_t Y_i(t)}.
\end{math}
The same fitted forest can be used to estimate conditional means, quantiles, probabilities of events involving the whole function, or other downstream functionals. Because the estimator in \eqref{eq:fdrf-weights} is supported on the observed functions $Y_1,\ldots,Y_N$, predictive draws from the fitted conditional law can be obtained by resampling $Y_i$'s with probabilities $w_1(x),\ldots,w_N(x)$.

\subsection{Kernel Options}
\label{sec:kernel-options}

The choice of kernel in the \fdrf\ defines what distributional differences the random forest algorithm will be sensitive to. For functional responses, we might want the forest to be sensitive to both location and smoothness, which is not the case for the $L_2$-Gaussian kernel. Indeed, \citet{wynne2022kernel} show that the sensitivity for kernel-based two-sample tests for functional outcomes can vary substantially depending on the choice of kernel. We see this in Section~\ref{sec:kernel-options}, where it is shown that the $L_2$-kernel is outperformed when functions differ by smoothness.

The \fdrf\ can accommodate a squared-exponential transform kernel \citep{wynne2022kernel}
\begin{align}
  \label{eq:wynne}
  \kappa(y, y') = \exp\left\{ -\frac{\|T(y) - T(y')\|^2_{\sG}}{2\ell^2} \right\},
  \qquad \text{where} \qquad
  T: \sY \to \sG
\end{align}
and $\sG$ is a real separable Hilbert space. This kernel is also characteristic if $T$ is Borel measurable, continuous, and one-to-one (\citealp{wynne2022kernel}, Theorem~9).

Many kernels of the form \eqref{eq:wynne} are possible, and Table~\ref{tab:kernels} in the Supplementary Material gives some possibilities and their use cases. We will use the transformation $T(y) = (y, Dy)$ where $Dy$ is the weak derivative of $y$. This choice is equivalent to taking $\sY$ to be contained in the Sobolev space
\begin{math}
  H^1([0,1])
  =
  \left\{
    y \in L_2([0,1]) :
    Dy \in L_2([0,1])
  \right\},
\end{math}
with $\|T(y) - T(y')\|^2_{\sG}$ a weighted squared Sobolev norm $\lambda \, a \, \|y - y'\|_2^2 + (1 - \lambda) \, b \, \|Dy - Dy'\|_2^2$. In our illustrations, we take $\lambda = 1/2$ and tune $(a,b)$ so that the overall scale of $a \|Y_i - Y_{i'}\|^2$ and $b \|DY_i - DY_{i}'\|^2$ are the same. The \fdrf\ tree growing algorithm accommodates this norm, and generally can be applied to norms that can be expressed as (sums of) $L_2$-norms in $\sG$ using the random feature construction. Other options include kernels based on FPCA features or moment expansions for detecting changes in scale. We use the median distance heuristic to choose the length scale $\ell = \median\{\|T(Y_i) - T(Y_j)\|_{\sG} : i < j\}$.

\subsection{Inference on Target Functionals}
\label{sec:inference-target-functionals}

In addition to estimating the full conditional law $F_x$, we often want uncertainty quantification for a scalar functional
\begin{math}
  \theta(x) \equiv \theta(F_x) \in \Reals.
\end{math}
In our illustrations we consider the circular mean and standard deviation of the peak time of physical activity of NHANES participants, as well as conditional quantiles and exceedance probabilities.

We use the plug-in $\widehat \theta(x) = \theta(\Fhat_x)$, so to perform inference we need an estimate of its standard error. Following \citet{naf2023confidence} (see also \citealp{athey2019generalized}), we use the ``little bags'' bootstrap. Given a forest consisting of $M$ trees, we divide the trees into $L_{\texttt{bag}}$ groups (bags) each consisting of $M_{\texttt{bag}}$ trees. For each of the bags $\ell = 1,\ldots,L_{\texttt{bag}}$, we subsample half of the data and fit an \fdrf\ on the subsample. For $M_{\texttt{bag}} = \infty$, the variance approximation
\begin{align*}
  \widehat V_{\texttt{between}}(x) = \frac{1}{L_{\texttt{bag}} - 1} \sum_{b} \left\{ \widehat \theta^{(b)}(x) - \widetilde \theta(x) \right\}^2
  \quad \text{where} \quad
  \widetilde \theta(x) = \frac{1}{L_{\texttt{bag}}} \sum_{b} \widehat \theta^{(b)}(x)
\end{align*}
is valid, where $\widehat \theta^{(b)}(x)$ is the estimate in the $b^{\text{th}}$ bag and $\widetilde \theta(x) = \frac{1}{L_{\texttt{bag}}} \sum_b \widehat \theta^{(b)}(x)$. We could then form confidence intervals of the form $\widehat\theta(x) \pm z_{1 - \alpha/2} \sqrt{\widehat V_{\texttt{between}}(x)}$ where $z_{1-\alpha/2}$ is the relevant cutoff from the standard normal distribution.

The procedure above is not exact because $M_{\texttt{bag}}$ is finite; $\widehat V_{\texttt{between}}(x)$ instead overestimates the standard error of the full forest. To account for this, one can subtract an estimate of the within-bag variability, i.e., the variability due to $M_{\texttt{bag}}$ being finite. While possible, we found empirically that this often introduced new problems that reduced the performance of interval estimates. For example, removing an estimate of the extra variability can lead to negative variance estimates, and we found this occured regularly in practice. Additionally, we found empirically that inflation of the standard errors can offset part of the finite-sample bias of random forest estimators. For that reason, our illustrations use $\widehat V_{\texttt{between}}(x)$ as our variance estimator with large values of $M_{\texttt{bag}}$ and $L_{\texttt{bag}}$. We note that an alternative approach recommended by \citet{athey2019generalized} is to use a Bayesian shrinkage estimator to ensure positivity. In Section~\ref{sec:simulation-coverage} of the Supplementary Material we perform simulation experiments to assess the adequacy of these intervals; broadly, we find good coverage in large samples, with the use of an appropriate kernel having a sometimes large impact coverage.

\section{Evaluating Distribution Estimators for Functionals}

Similar to how our construction of the random forest is tailored to conditional distribution estimation, the way in which we \emph{evaluate} estimators should be sensitive to the full conditional law $F_x$. It would be insufficient to use metrics like mean-squared $L_2$ prediction error, as it is not sensitive to features like calibration of the predictive distribution or dispersion.

To measure whether \fdrf\ gives a meaningful improvement over other approaches, we instead use \emph{proper scoring rules} \citep{gneiting2007strictly}. A function $S: \sP(\sY) \times \sY \to \Reals$ is called a proper scoring rule if
$Y_i \sim F$ implies 
\begin{math}
  \E\{S(F, Y_i)\} \le \E\{S(Q, Y_i)\}
\end{math}
for all $Q \in \sP(\sY)$.
The rule is called \emph{strictly proper} if the inequality is sharp when $Q \ne F$. For a review of proper scoring rules, see \citet{waghmare2025proper}. Given a proper scoring rule, it is natural to use the average score over the sample
\begin{align}
  \label{eq:proper-scoring-rules-eval}
  \bar S(\Fhat_x) = \frac{1}{N_{\test}} \sum_{i \in \test} S_{i}(\widehat F_x).
  \quad \text{where} \quad
  S_i(\widehat F_x) = S(\widehat F_{X_i}, Y_i),
\end{align}
\test\ denotes a subset of the data not used to construct $\Fhat_x$, and $N_{\test} = |\test|$. Given estimators $\widehat F_x$ and $\widehat F^\star_x$, we can also use the $\{S_i(\widehat F_x), S_i(\widehat F^\star_x)\}$ to explicitly test for $\E\{S_i(\widehat F_x) \mid \train\} \le \E\{S_i(\widehat F^\star_x) \mid \train\}$ using (for example) a paired $T$-test, where $\train$ denotes the training data used to build $\widehat F_x$ and $\widehat F^\star_x$.

In models where the $F_x$'s have densities $f_x$, it is natural to use the scoring rule $S(F_{X_i}, Y_i) = -\log f_{X_i}(Y_i)$, which compares models by their log-likelihood on held-out data. In our setting, however, we lack densities. As an alternative, we can use the \emph{energy score}
\begin{align}
  \label{eq:energy}
  S(F, y) = \E_{Y \sim F} \|T(Y) - T(y)\|_{\sG} - \frac{1}{2} \E_{Y, Y' \sim F} \|T(Y) - T(Y')\|_{\sG},
\end{align}
where $T(\cdot)$ is as defined in Section~\ref{sec:kernel-options}. This is a strictly proper scoring rule on $\{F : \E_{Y \sim F}\|T(Y)\|_{\sG} < \infty\}$ when $(\sG, \|\cdot\|_{\sG})$ is a real separable Hilbert space \citep{bulte2025probabilistic} and $T(\cdot)$ is one-to-one. We will use \eqref{eq:energy} with $T(y) = y$ and $T(y) = (y, Dy)$. \citet{gneiting2007strictly} also define the generic \emph{kernel score} as
\begin{math}
  S(F, y) = -\E_{Y \sim F} \kappa(Y, y) + \frac{1}{2} \E_{Y, Y' \sim F} \kappa(Y, Y').
\end{math}
When $\kappa$ is a characteristic kernel, this scoring rule is strictly proper. As with the choice of kernel in fitting the \fdrf, ideally the scoring rule should reflect the features that we want to be sensitive to; we investigate this empirically in Section~\ref{sec:simulation-study}.

Both \eqref{eq:energy} and the kernel score can be approximated by Monte Carlo integration. For example, for the energy score, for a given $(x,y)$ we can approximate
\begin{align}
  \label{eq:monte}
  S(\Fhat_x, y) \approx \frac{1}{B} \sum_{b = 1}^B \|y - \widetilde Y_b\|_{\sY}
  - \frac{1}{B(B-1)} \sum_{b < b'} \|\widetilde Y_{b} - \widetilde Y_{b'}\|_{\sY}.
\end{align}
For any $B > 1$, this is an unbiased approximation with error of order $O_P(B^{-1/2})$ for each test point; consequently, the overall error is of order $(B N_{\test})^{-1/2}$. In our illustrations, we take $B = 10$. For the \fdrf, one can also avoid Monte Carlo altogether because $\widehat F_x = \sum_{i=1}^N w_i(x)\delta_{Y_i}$ is discrete. In that case we have
\begin{math}
  \E_{Y \sim \widehat F_x} h(Y) = \sum_{i=1}^N w_i(x) h(Y_i)
\end{math}
and
\begin{math}
  \E_{Y,Y' \sim \widehat F_x} h(Y,Y') = \sum_{i=1}^N \sum_{j=1}^N w_i(x) w_j(x) h(Y_i,Y_j).
\end{math}
This avoids simulation error, but the naive computation time per test point is $O(N^2)$, which is expensive.

\section{Simulation Study}
\label{sec:simulation-study}

\subsection{Comparison of Methods By Scoring Rules}

In this section, we benchmark the \fdrf\ against alternative methods. Our goal is to determine empirically when the \fdrf\ provides improvements over alternatives, as measured by proper scoring rules.

\paragraph{Data Generating Mechanisms}

Across all of the simulations, for each observation we draw $X_i \sim \Normal(0, \Identity_5)$ and define $\eta_i = X_i^\top\beta$ where $\beta \propto (1, 0.8, -0.6, 0.4, -0.2)$ normalized so that $\|\beta\| = 1$. The functional responses are observed on an equally spaced grid for $t \in [0,1]$ with 100 grid points. We consider five different data generating mechanisms. In each case, $Y_i$ is conditionally a Gaussian process with covariance of the form $\Sigma(t, t')= \sigma^2 e^{-(t - t')^2 / (2\ell^2)}$.

\paragraph{Heteroskedastic Gaussian Process}
We sample $Y_i \sim \GP(\mu, \Sigma)$ where $\mu(t) = \sin(2\pi \, t)$, $\sigma^2 = (0.3 + e^{\eta_i/2})^2$, and $\ell = 0.2$. This will allow us to assess whether the \fdrf\ can capture changes in scale across the predictors.

\paragraph{Mean-shift Gaussian Process}
We sample $Y_i \sim \GP(\mu, \Sigma)$ where $\mu(t) = \{1+\exp(-1.5\eta_i)\}^{-1}
\times 3\exp\left\{-\frac{(t-0.5)^2}{0.04}\right\}$, $\ell = 0.15$, and $\sigma^2 = 0.4$. For this DGP, $\eta_i$ changes the height of a localized bump centered at $t = 0.5$.
 
\paragraph{Mixture GP}
Let $p(\eta_i) = (1 + e^{-2\eta_i})^{-1}$. With probability $p(\eta_i)$, we sample $Y_i$ from a Gaussian process with mean $\sin(2\pi t)$, $\ell = 0.3$, and $\sigma^2 = 1$; otherwise, we sample $Y_i$ from a Gaussian process with mean $\cos(4\pi t)$, $\ell = 0.07$, and $\sigma^2 = 1$. This DGP varies both the roughness and the overall mean of the process.

\paragraph{Warped Gaussian Process}
  
Let the baseline signal be $g(s) = e^{-(s - 0.5)^2 / 0.04}$. We draw $\xi_i \sim \Normal(0, 1)$ and define the amplitude as $a(\eta_i) = 0.22 (1 + e^{-\eta_i})^{-1}$. The warped time argument is 
\begin{math}
      s_i(t)
      =
      \min\left\{
        \max\left[
          t+a(\eta_i) \xi_i\sin(2\pi t),
          0
        \right],
        1
      \right\}.
\end{math}
We then set $Y_i(t) = g\{s_i(t)\} + \varepsilon_i(t)$ where $\varepsilon_i(\cdot)$ is a mean-zero Gaussian process with $\ell = 0.1$ and $\sigma^2 = 0.01$. This DGP makes $\eta_i$ control the magnitude of random phase variation.

\paragraph{Roughness DGP}
The response is generated as a mean-zero Gaussian process with variance
$\sigma^2=1$ and index-dependent length scale
\begin{math}
  \ell(\eta_i)
  =
  0.35(1+e^{\eta_i})^{-1}+0.03.
\end{math}
Since the length scale controls smoothness, this mechanism changes the
roughness of the functional response as a function of $\eta_i$ while keeping
the mean fixed at zero.
    
\vspace{1em}

\noindent Figure~\ref{fig:dgm-faceted} displays samples from these mechanisms, with a median curve and predictive bands. Estimates produced by the \fdrf\ on these datasets are given in the Supplementary Material, which show that it can capture the variability in the predictive distribution across all settings.

\begin{figure}[p]
  \centering
  \hspace*{-4em}
  \includegraphics[height=.89\textheight]{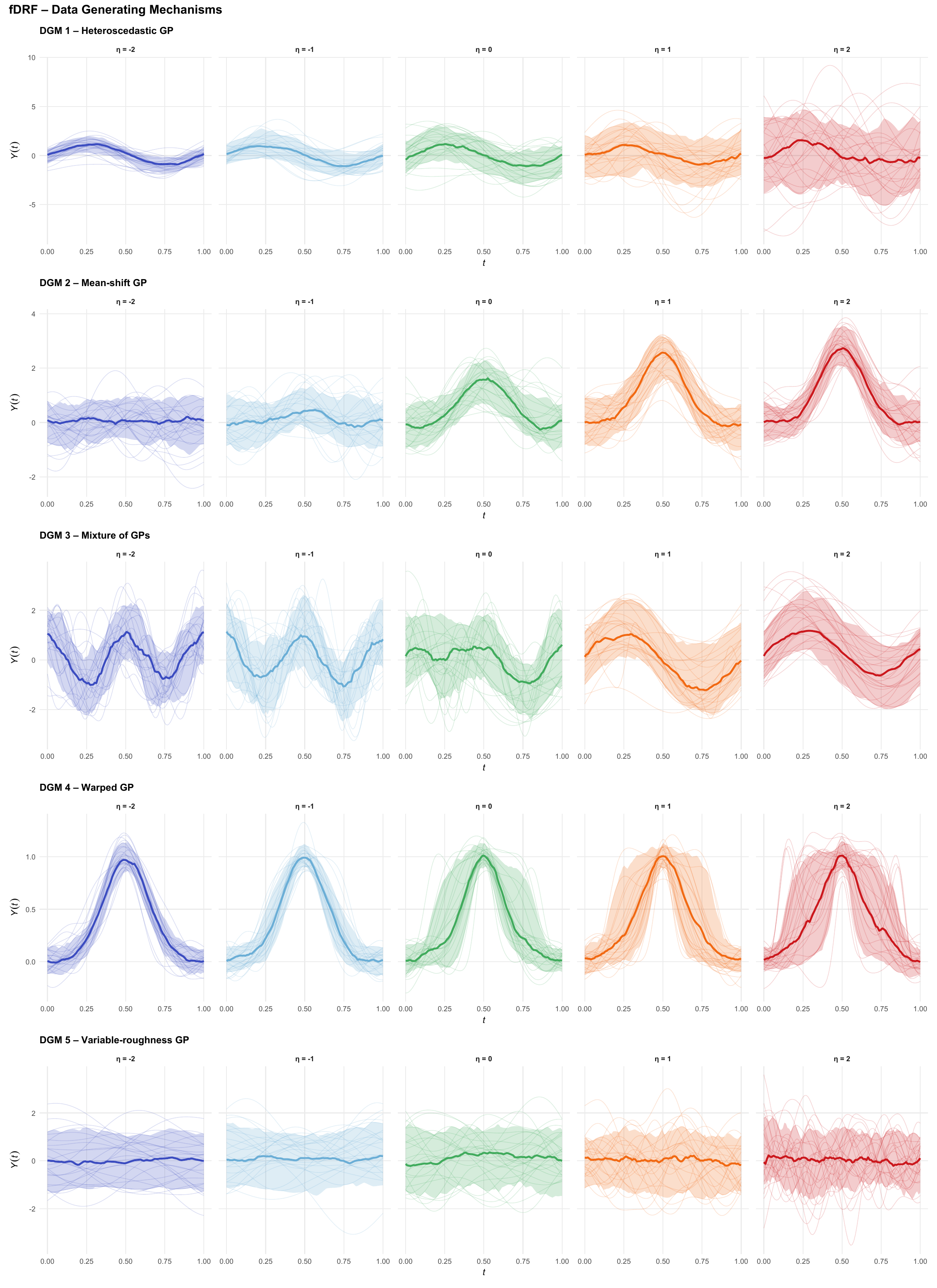}
  \caption{Data-generating mechanisms used in the simulation study. Each row shows one conditional law for $Y(t)$ across the index values $\eta = x^\top \beta \in \{-2,-1,0,1,2\}$. Thin curves are individual draws, shaded bands are pointwise 10th--90th percentiles, and thick curves are pointwise medians. }
  \label{fig:dgm-faceted}
\end{figure}

\begin{figure}[p]
  \centering
  \includegraphics[width=\textwidth]{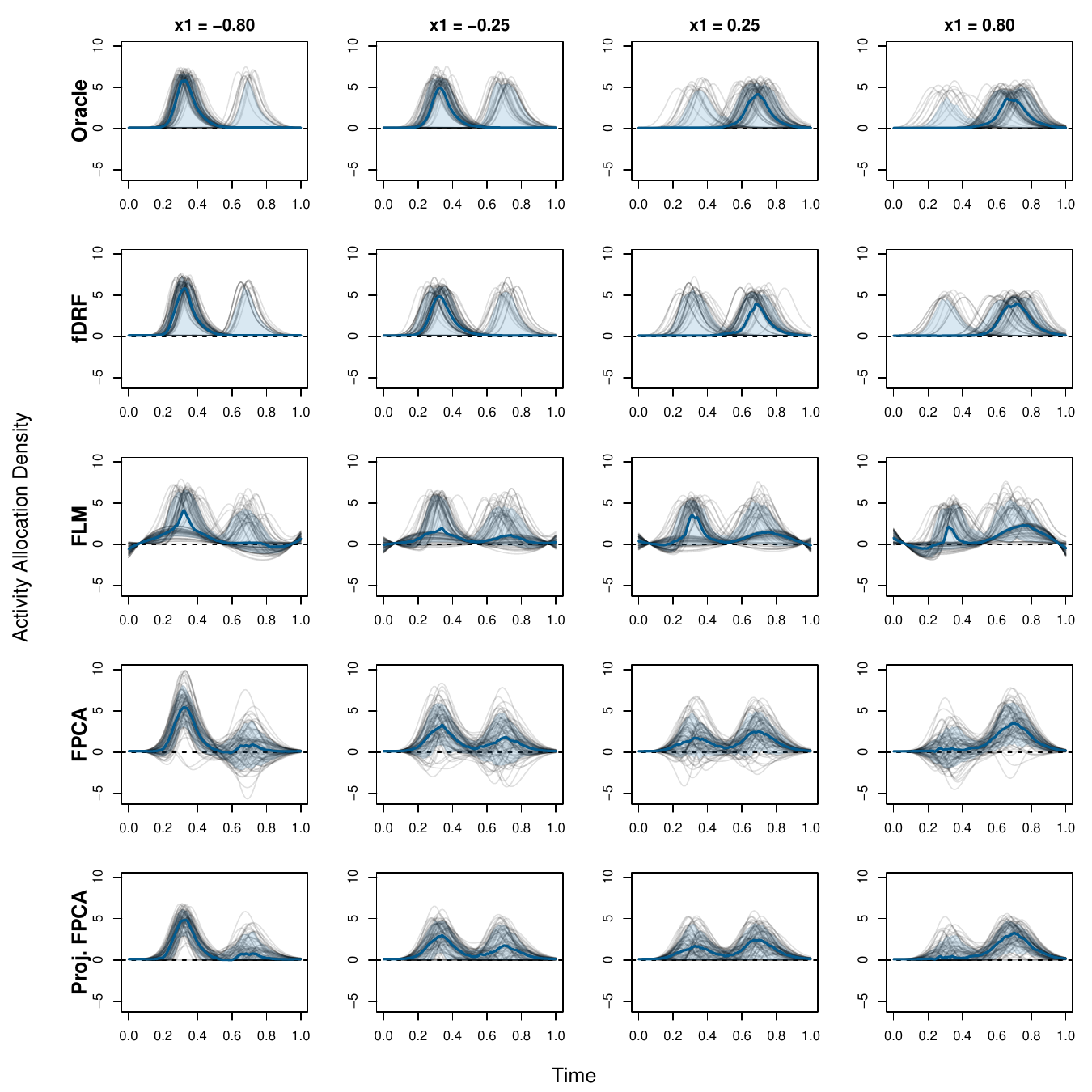}
  \caption{Samples from the conditional predictive distributions of density-valued functional data as described in Section~\ref{sec:manifold-faithfulness-experiment}; the median of the predictive distribution is given by the dark/solid line.\label{fig:manifold-density-simulation}}
\end{figure}

We compare the following methods.

\paragraph{fDRF}
We use the Gaussian kernel $\kappa(y,y') = \exp\{-\|y - y'\|^2_{\sY} / (2\omega^2)\}$, with $\omega$ chosen using the median distance heuristic. We consider
$\|y\|^2_{\sY} = \lambda \, \|y\|^2_2 + (1 - \lambda) \, \|Dy\|^2_2$, where $(Dy)(t) = \frac{\partial}{\partial t} y(t)$ with $\lambda = 1/2$ so that $y$ and $Dy$ contribute equally to the norm.
We do not tune hyperparameters and use $M = 400$ trees in all of our examples. We also fit an \fdrf\ using the functional principal component (FPCA) kernel described in Supplementary Material Table~\ref{tab:kernels}, with the number of components selected to account for 95\% of the variability in the functions.

\paragraph{$K$-Nearest Neighbors}
To check whether the data-adaptive nature of the random forest is useful, we consider a $K$-nearest neighbors (KNN) approach. We take $\Fhat_x = \sum_i w_i(x) \, \delta_{Y_i}$ where $w_i(x) = 1(X_i \in \sN_x) / \sum_{i'} 1(X_{i'} \in \sN_x)$ and $\sN_x$ consists of $K$ of the $X_i$'s that are closest to $x$. We choose $K$ to optimize the energy score by leave-one-out cross-validation on the training set, with $K$ chosen on a grid from $K = 5$ to $K = \lfloor 2 \sqrt N \rfloor$.

\paragraph{Functional Linear Models}
As a simple baseline, we consider a functional linear model (flm) implemented in the \texttt{pffr} function in the \texttt{refund} package. We fit the model
\begin{math}
  \E\{Y_i(t) \mid X_i = x\} = \alpha(t) + x^\top\beta(t)
\end{math}
with $\alpha(t)$ and $\beta(t)$ estimated via smoothing splines with smoothing parameter chosen via GCV. To sample new trajectories, we resample from residuals $r_i(t) = Y_i - \widehat \alpha(t) - X_i^\top\widehat\beta(t)$ with replacement so that the model for a new pair $(X, Y)$ is estimated as $[Y(\cdot) \mid X = x] \sim \frac{1}{N} \sum_i \delta_{\widehat \alpha(\cdot) + x^\top \widehat \beta(\cdot) + r_i(\cdot)}$.

\paragraph{Metrics for Comparison}

We evaluate methods using the held-out average score $\bar S(\Fhat_x)$ in \eqref{eq:proper-scoring-rules-eval}. We take $S(\cdot, \cdot)$ to be energy and Gaussian kernel scores computed using $T(y) = (y, Dy)$. Results are based on $20$ replicated datasets for each setting.

\paragraph{Results}

Results are summarized in Figure~\ref{fig:simulation}, where the top panels show pairwise comparisons under the energy score (left) and the kernel score (right); cells are shaded according to the average value of $\bar S(\Fhat_{\text{row}}) - \bar S(\Fhat_{\text{column}})$ (blue means the row method performed better, red means the column method performed better) and cells are tagged with $(H, M, L)$ if the median of the corresponding $P$-values for the paired $T$-test is less than $0.001$, $0.01$, and $0.05$ respectively. The bottom panels record the average rank of the methods, with an average rank of 1 indicating that the given method performed best across all simulated datasets. 

\begin{figure}[t]
  \centering
  \includegraphics[width=.9\textwidth]{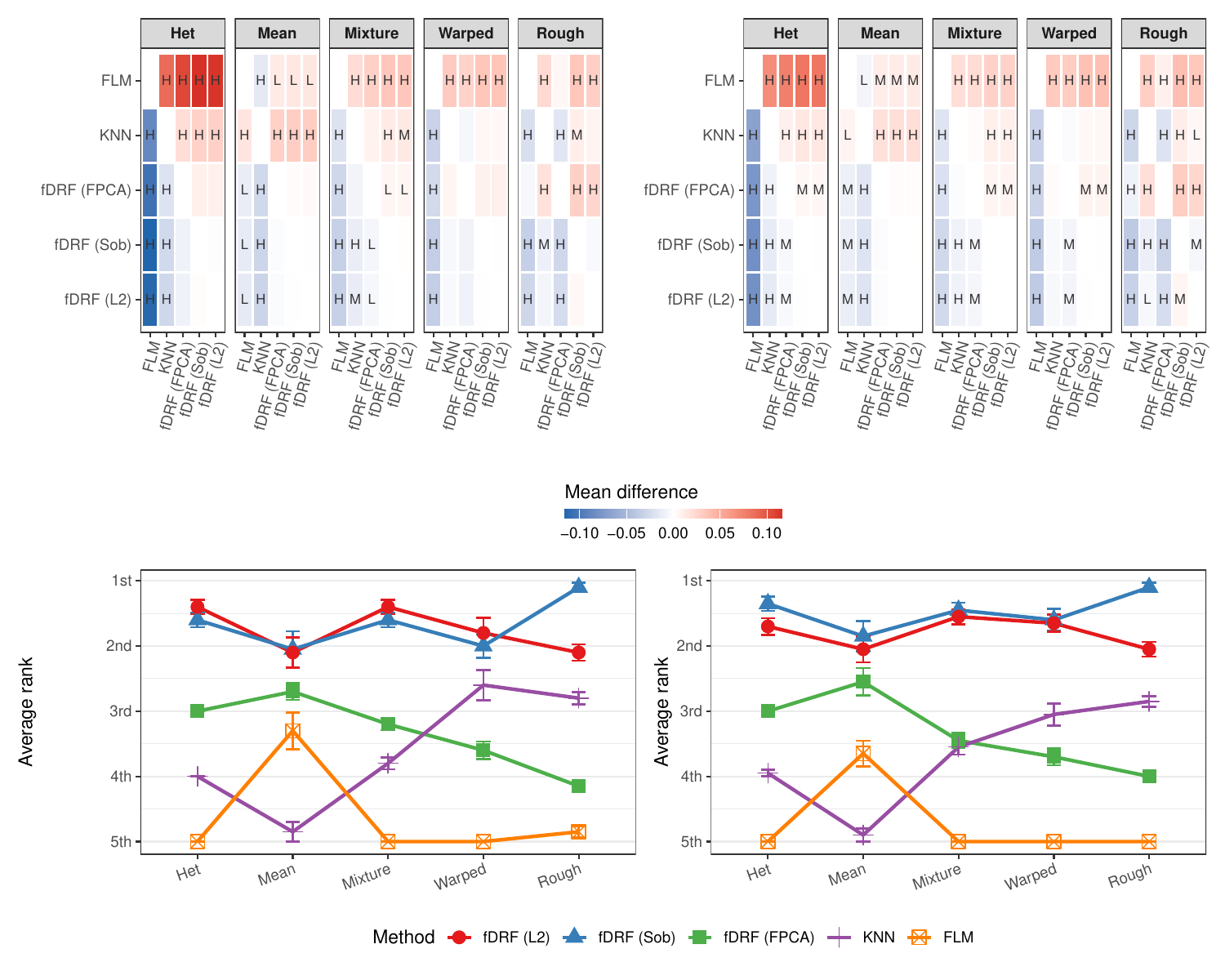}
  \caption{Top: pairwise comparison heatmap of different methods under the different simulation scenarios using the energy score (left) and kernel score (right); see text for interpretation. Bottom: average ranking of each method on each simulation setting under the energy score (left) and kernel score (right). Error bars show the mean rank $\pm$ one standard error.}
  \label{fig:simulation}
\end{figure}

We find that the \fdrf\ methods perform better than $K$-nearest neighbors and the functional linear model, that this pattern appeared repeatedly across the within-replication paired comparisons, and that the weighted Sobolev kernel tended to outperform the $L_2$-based and FPCA-based kernels. The only exception to this trend is that the \fdrf\ based on FPCA scores performed worse on the Warped GP and Roughness DGPs. Overall, the weighted Sobolev kernel performed best on all settings except the warped Gaussian process under the energy score.

When the $L_2$ norm was used instead of the Sobolev norm in constructing the scores, the results were somewhat different due to the fact that the $L_2$ norm is less sensitive to smoothness. In this case, the \fdrf\ methods perform similarly, but their improvement over the functional linear model is smaller. This is because the functional linear model can approximate the mean function well but performs very poorly in capturing the overall shape and roughness of the curves; we interpret this as evidence that the Sobolev norm is generally preferable to the $L_2$ for comparing models. Details are in the Supplementary Material.

\subsection{Manifold Faithfulness Experiment}
\label{sec:manifold-faithfulness-experiment}

An advantage of the \fdrf\ relative to competing approaches is that if $F_x$ is naturally supported on some manifold in $\sY$ then $\Fhat_x$ is as well. Hence, it is not necessary for us to bake in (say) that the $Y_i$'s are monotonically increasing, convex, or non-negative.

To illustrate this, we perform an additional experiment that compares the performance of {\fdrf}s to various functional linear models. For each subject we sample covariates $(X_{i1}, X_{i2})$ with $X_{i1} \sim \Uniform(-1, 1)$ and $X_{i2} \sim \Normal(0,1)$. For brevity, we defer the description of how the $Y_i$'s are simulated to Section~\ref{sec:appendix-dgp} of the Supplementary Material, except to note that the $Y_i$'s are densities on $[0,1]$ and that these densities depend on $X_i$ only through $X_{i1}$; sampled values of the $Y_i$'s are available in the first row of Figure~\ref{fig:manifold-density-simulation}.

For comparison, we consider the functional linear model with residual resampling to obtain a predictive distribution. In addition to this, we consider two other models:
\begin{description}
\item[FPCA Linear Model] 
  We extract functional principal components from the observed $Y_i$'s and then regress the factor scores on functions of the $X_i$'s. Conditional draws are generated by sampling factor scores at a new covariate value from a multivariate Gaussian distribution, and then reconstructing the resulting curves in the estimated principal component basis. This gives a low-rank approximation to the distribution of $Y_i$.

\item[Projected FPCA]
  To incorporate the density constraint into the fitted FPCA model, we also include a model that post-hoc corrects each sampled curve. Negative values are truncated to zero and each curve is renormalized to have a unit integral. This enforces the density constraints satisfied by the observed functional responses, while retaining the covariate-dependent model for the PC scores. 
\end{description}

We sampled $N = 800$ observations from the data generating mechanism and then fit the different predictive models to $120$ test observations with specified values of $X_{i1}$.  The estimated predictive distributions for each method (as well as the true ``oracle'' predictive distribution) are given in Figure~\ref{fig:manifold-density-simulation}, along with the median curve. Overall, we see that \fdrf\ reproduces the oracle very well. By contrast, the functional linear model and FPCA linear model perform quite poorly, and do a poor job respecting the fact that the sampled curves are densities. By comparison, Projected FPCA performs much better, but incorrectly estimates the median curve as being bimodal and is generally less accurate than the \fdrf.

\section{Theoretical Properties}
\label{sec:theoretical-properties}

We now study the theoretical properties of an ``ideal'' \fdrf\ that uses $M \to \infty$ trees and $w_i(x) = \E\{w_{m,i}(x) \mid (X_i, Y_i)_{i=1}^N\}$ (with the expectation being taken over the randomness in the random forest). Our goal is to show that the \fdrf\ leads to consistent estimation of the relevant features of $F_x$. Because the estimator returns a weighted empirical measure $\Fhat_x = \sum_{i=1}^N w_i(x) \, \delta_{Y_i}$ rather than a finite-dimensional parameter or mean curve, the relevant target is convergence of probability measures as elements of $\sP(\sY)$. This form of consistency justifies using the same forest for many downstream targets, including conditional means, quantiles of scalar summaries, and tail probabilities.

Our starting point is the consistency theory for distributional random forests established by \citet{cevid2022distributional}, which translates more-or-less without modification to the functional setting and provides convergence of probability measures with respect to MMD. After establishing this, we prove weak convergence of the laws $\Fhat_x$ to $F_x$, treating $\Fhat_x$ as either an element of $\sP(\sY)$ or $\sP(C[0,1])$ under the supremum norm; this latter type of convergence is required to perform inference on functionals such as point evaluations, suprema, peak times, and so forth. 

Assumptions~\ref{assumption:1-rf}---\ref{assumption:4-kernel} are used to establish that $\MMD(F_x, \Fhat_x) \cinp 0$. Assumption~\ref{assumption:1-rf}, which describes how the trees in the random forest are generated, is based on \citet{wager2018estimation} and is taken verbatim from \citet{cevid2022distributional}.

\begin{assumption}
  \label{assumption:1-rf}
  The tree-growing process satisfies the following conditions:
  \begin{description}
  \item[P1: Subsampling]
    The bootstrap sampling with replacement, usually used in forest-based methods, is replaced with a subsampling step, where for each tree we choose a random subset of size $s_N$ out of $N$ training points. Additionally, $s_N \asymp n^{\beta}$ for some $0 < \beta < 1$.
  \item[P2: Honesty]
    The data used for constructing each tree is split into two parts: the first is used for determining the splits and the second for constructing the weights $w_{m,i}(x)$.
  \item[P3: $\alpha$-regularity]
    Each split leaves at least a fraction $0 < \alpha \le 0.2$ of the available training sample on each side. Moreover, the trees are grown until every leaf contains between $\kappa$ and $2\kappa - 1$ observations for some fixed $\kappa \in \bbZ_+$.
  \item[P4: Symmetry]
    The (randomized) output of a tree does not depend on the ordering of the training samples.
  \item[P5: Random-split]
    At every split point, the probability that the split occurs along the feature $X_j$ is bounded below by $\pi / P$ for some $\pi > 0$ and all $j = 1,\ldots,P$.
  \end{description}
\end{assumption}

Assumptions~\ref{assumption:2-outcome}---\ref{assumption:3-covariate} concern the data generating mechanism, while Assumption~\ref{assumption:4-kernel} concerns the MMD kernel.

\begin{assumption}[Outcome Conditional Distribution]
  \label{assumption:2-outcome}
  The mapping $x \mapsto \mu(F_x) \in \sH$ is Lipschitz, i.e.,
  \begin{math}
    \|\mu(F_x) - \mu(F_{x'})\|_{\sH}
    \le L \|x - x'\|_2
  \end{math}
  for all $x, x' \in [0,1]^P$. 
\end{assumption}

\begin{assumption}[Covariate Distribution]
  \label{assumption:3-covariate}
  The covariates $X_i$ are iid random vectors with a continuous joint density $g : [0,1]^P \to \Reals$ that is bounded away from $0$.
\end{assumption}

\begin{assumption}[Kernel]
  \label{assumption:4-kernel}
  The kernel $\kappa : \sY \times \sY \to \Reals$ is bounded and continuous with respect to the topology on $\sY \times \sY$ induced by $\|\cdot\|_{\sY}$. Additionally, $\kappa$ is characteristic.
\end{assumption}

These assumptions are sufficient to port over Theorem 2 of \citet{cevid2022distributional}, without modification, to the case where $\sY$ is a separable Banach space.

\begin{theorem}[Theorem 2 of \citet{cevid2022distributional}]
  \label{thm:mmd}
  Suppose that Assumptions~\ref{assumption:1-rf}---\ref{assumption:4-kernel} hold. Then we obtain consistency with respect to the RKHS norm in the sense that, for all $x \in [0,1]^P$, $\|\mu(\widehat F_x) - \mu(F_x)\|_{\sH} = O_P(N^{-\gamma})$ for $\gamma = \frac{1}{2} \min\left( 1 - \beta, \frac{\log((1-\alpha)^{-1})}{\log(\alpha^{-1})} \times \frac{\pi}{P} \times \beta \right)$.
\end{theorem}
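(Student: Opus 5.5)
The plan is to reproduce the argument of \citet{cevid2022distributional} and check that each step uses only Hilbert-space properties of $\sH_\kappa$ and measurability of the feature map. No structure of $\sY$ beyond separability is needed. First, the feature map $\varphi(y) = \kappa(y,\cdot)$ is Bochner measurable into $\sH_\kappa$. It is continuous because $\|\varphi(y) - \varphi(y')\|^2_{\sH} = \kappa(y,y) - 2\kappa(y,y') + \kappa(y',y')$ and $\kappa$ is continuous (Assumption~\ref{assumption:4-kernel}). Since $\sY$ is separable, $\sH_\kappa$ is separable as well. Bounded $\kappa$ gives $\|\varphi(y)\|_{\sH} \le \sqrt{\sup \kappa} < \infty$, so $\varphi(Y_i)$ is a bounded random element of a separable Hilbert space and $\mu(F_x) = \E\{\varphi(Y) \mid X = x\}$ is well defined. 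From this point the problem is a forest regression of the bounded Hilbert-valued response $\varphi(Y_i)$ on $X_i$, and $\mu(\Fhat_x) = \sum_i w_i(x)\varphi(Y_i)$ is exactly the forest estimate.

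Next I would use the standard bias–variance split
\[
\mu(\Fhat_x) - \mu(F_x) = \sum_i w_i(x)\{\varphi(Y_i) - \mu(F_{X_i})\} + \sum_i w_i(x)\{\mu(F_{X_i}) - \mu(F_x)\}.
\]
\begin{description}
\item[Bias term.] By Assumption~\ref{assumption:2-outcome} this is bounded by $L \sum_i w_i(x)\|X_i - x\|_2 \le L\,\E\{\operatorname{diam}(R_{\ell(x)})\}$. The leaf-diameter bound from \citet{wager2018estimation} uses only P3, P5 and Assumption~\ref{assumption:3-covariate}, and makes no reference to the response. It gives diameter $O_P\bigl(s_N^{-\frac{\log((1-\alpha)^{-1})}{\log(\alpha^{-1})}\frac{\pi}{P}}\bigr)$, up to the constant factor used in that paper. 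Substituting $s_N \asymp N^\beta$ yields the second term in $\gamma$.
\item[Variance term.] Honesty (P2) makes the weights independent of the $Y_i$ used in the average, conditionally on the $X$'s. Hence the conditional mean of $\varphi(Y_i) - \mu(F_{X_i})$ is zero, and I would view the ideal forest as a Hilbert-valued U-statistic of order $s_N$ with a bounded kernel (P1, P4). A Hoeffding-type variance bound for U-statistics holds verbatim in a Hilbert space, because the Hoeffding decomposition and orthogonality of projections depend only on inner products. It gives $\E\|\cdot\|^2_{\sH} \lesssim s_N/N$, hence $O_P(N^{-(1-\beta)/2})$.
\end{description}
Combining the two terms gives the rate $N^{-\gamma}$.

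The main obstacle, in my view, is the variance step: checking that the Euclidean/real-valued arguments of \citet{cevid2022distributional} and \citet{wager2018estimation} extend. These include the U-statistic variance bound and any use of coordinate-wise or scalar concentration. I would handle it by expanding the squared $\sH$-norm through inner products. Each cross term $\E\langle \varphi(Y_i) - \mu(F_{X_i}), \varphi(Y_j) - \mu(F_{X_j})\rangle$ vanishes for $i \ne j$ under honesty by conditional independence. What remains is a sum of diagonal terms bounded by $4\sup\kappa \cdot \E\sum_i w_i(x)^2$. The bound $\E\sum_i w_i(x)^2 \lesssim s_N/N$ is a statement purely about the weights: each tree places weight at most $1/\kappa$ on any single point, and any given point enters a tree's subsample with probability $s_N/N$. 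It is therefore inherited unchanged. The characteristic property of $\kappa$ is not needed for the rate itself; it only serves to make the result meaningful as consistency in the MMD metric.
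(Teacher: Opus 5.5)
Your overall plan matches the argument of \citet{cevid2022distributional} that the paper invokes: the Lipschitz embedding plus the \citet{wager2018estimation} leaf-diameter bound for the bias, and a Hilbert-valued $U$-statistic bound for the variance. However, the step you single out as your way past the main obstacle fails. Honesty is a per-tree property, not a property of the aggregated weights. The ideal weight $w_j(x)=\E\{w_{m,j}(x)\mid (X_k,Y_k)_{k=1}^N\}$ averages over all subsamples and honest partitions. This includes partitions in which $i$ lies in the splitting half and $j$ in the estimation half, and in those trees the leaf containing $x$, hence $w_{m,j}(x)$, is a function of $Y_i$. Conditioning on all covariates and on $\{Y_k:k\neq i\}$ therefore fixes $w_i(x)$ and $\varphi(Y_j)-\mu(F_{X_j})$, but not $w_j(x)$. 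So
\[
\E\bigl[w_i(x)\,w_j(x)\,\langle\varphi(Y_i)-\mu(F_{X_i}),\varphi(Y_j)-\mu(F_{X_j})\rangle_{\sH}\bigr]
\]
is in general nonzero for $i\neq j$. Your sentence that honesty makes the weights independent of the $Y_i$ given the $X$'s is true only tree by tree. Within a single tree the cross terms do vanish, but there the second moment of the noise term is of order one, not $s_N/N$. Passing from forest to tree by Jensen throws away exactly the averaging you need. Your diagonal bound $\sum_i w_i(x)^2\le s_N/(\kappa N)$ (with $\kappa$ the leaf-size constant of P3) is correct, but on its own it does not control $\E\|\cdot\|_{\sH}^2$.

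The repair is the $U$-statistic route you mention in your variance bullet, applied to the right kernel. Your noise term $A=\sum_i w_i(x)\{\varphi(Y_i)-\mu(F_{X_i})\}$ is itself a complete $U$-statistic of order $s_N$. Its symmetric (P4) kernel is $h(Z_S)=\E_\xi\sum_{i\in S}w_{m,i}(x)\{\varphi(Y_i)-\mu(F_{X_i})\}$, where $\E_\xi$ averages over the within-subsample randomization. Three facts then give the rate:
\begin{itemize}
\item Single-tree honesty gives $\E h=0$.
\item With $c=\sup_y\kappa(y,y)$, you have $\|h\|_{\sH}\le 2c^{1/2}$.
\item Hoeffding's variance inequality \citep{hoeffding1948class}, which holds for square-integrable Hilbert-valued kernels, gives $\E\|A\|_{\sH}^2\le (s_N/N)\,\E\|h\|_{\sH}^2\le 4c\,s_N/N$.
\end{itemize}
Equivalently, you can center at $\E\mu(\Fhat_x)$, as \citet{cevid2022distributional} do.

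On the bias side, $\sum_i w_i(x)\|X_i-x\|_2$ is a tree-average of leaf diameters, so you need an $L^1$ bound rather than an $O_P$ bound for one tree. The \citet{wager2018estimation} tail bound, together with $\operatorname{diam}\le\sqrt P$ and $\alpha\le 0.2$, gives $\E\operatorname{diam}\{R_{\ell(x)}\}=O(s_N^{-\lambda/2})$ with $\lambda=\frac{\pi}{P}\frac{\log((1-\alpha)^{-1})}{\log(\alpha^{-1})}$. It is this exponent $\lambda/2$, not $\lambda$, that produces the factor $\tfrac12$ on the second term of $\gamma$.
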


Theorem~\ref{thm:mmd} is not sufficient for most of our purposes for two reasons. First, convergence in MMD does not imply $\Fhat_x \to F_x$ in distribution, which is required to perform inference for most of the quantities we are interested in --- in general, $\MMD(\cdot,\cdot)$ does not metrize convergence in distribution. 
When $Y_i$ takes values in a locally compact Polish space then this correspondence \emph{does} hold under mild conditions, but unfortunately for us infinite-dimensional separable Banach spaces are not locally compact. We will instead use arguments that do not rely on establishing that $\MMD(\cdot,\cdot)$ metrizes convergence in distribution.

The second reason is that many quantities of interest need a stronger result than $\Fhat_x \to F_x$ in distribution as elements of $\sP(\sY)$. For example, if $\sY = L_2([0,1])$ then weak convergence as elements of $\sY$ will not be sufficient for estimating the predictive distribution of either the pointwise evaluations $Y_i \mapsto Y_i(t)$ or the supremum $Y_i \mapsto \sup_t Y_i(t)$, as these functionals are not continuous with respect to $\|\cdot\|_2$. Establishing convergence for these quantities requires further conditions on the smoothness of the $Y_i$'s.

We first address convergence in distribution as elements of $\sP(\sY)$ (see \citealp{billingsley2013convergence} for a thorough treatment of convergence of distributions). More precisely, we want $d(\Fhat_x, F_x) \cinp 0$ where $d(\cdot,\cdot)$ metrizes convergence in distribution; in this case we will say $\Fhat_x \to F_x$ weakly-in-probability. As an additional condition, we make the following \emph{uniform tightness} assumption.

\begin{assumption}[Uniform Tightness]
  \label{assumption:5-tightness}
  For all $x \in [0,1]^P$ and $\epsilon > 0$, there exists a neighborhood $U$ of $x$ and a compact set $K_\epsilon$ such that $\sup_{z \in U} \Pr(Y_i \notin K_\epsilon \mid X_i = z) \le \epsilon$.
\end{assumption}

\begin{theorem}
  \label{thm:weak}
  Suppose that Assumptions~\ref{assumption:1-rf}---\ref{assumption:5-tightness} hold. Then for all $x$ we have $\Fhat_x \to F_x$ weakly-in-probability as elements of $\sP(\sY)$ with respect to the weak topology induced by $\|\cdot\|_{\sY}$.
\end{theorem}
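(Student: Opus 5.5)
The plan is to combine Theorem~\ref{thm:mmd} with Assumption~\ref{assumption:5-tightness} through a Prohorov-type argument. This avoids needing MMD to metrize weak convergence on the non-locally-compact space $\sY$. Fix $x$ and let $d$ be a metric for weak convergence on $\sP(\sY)$, for instance the bounded-Lipschitz or Prohorov metric, which is valid because $\sY$ is separable. I will use the subsequence characterization of convergence in probability: it suffices to show that every subsequence of $N$ has a further subsequence along which $d(\Fhat_x, F_x) \to 0$ almost surely.

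\textbf{Step 1 (tightness in probability of $\Fhat_x$).} Fix $\epsilon>0$ and take $U$ and $K_\epsilon$ from Assumption~\ref{assumption:5-tightness}. Write
\begin{align*}
\Fhat_x(K_\epsilon^c) = \sum_i w_i(x)\, 1(Y_i \notin K_\epsilon) \le \sum_i w_i(x) 1(X_i\in U) 1(Y_i\notin K_\epsilon) + \sum_i w_i(x) 1(X_i \notin U).
\end{align*}
By honesty (P2), the weights of the estimation half are functions of the $X$'s and of the splitting half only. Hence, conditionally on these, each $Y_i$ in the estimation half has law $F_{X_i}$. The expectation of the first term is therefore at most $\epsilon$. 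The second term is at most the indicator that the leaf $R_{\ell_m(x)}$ is not contained in $U$, averaged over trees. The leaf-diameter lemma from \citet{wager2018estimation}, which \citet{cevid2022distributional} already use under P1, P3, P5 and Assumption~\ref{assumption:3-covariate}, shows this expectation tends to $0$. Markov's inequality then gives $\Pr\{\Fhat_x(K_\epsilon^c) > 2\sqrt{\epsilon}\} \le \sqrt{\epsilon}/2 + o(1)$. I expect this step to be the main obstacle, because it is the only place the forest structure re-enters beyond Theorem~\ref{thm:mmd}. The bookkeeping of honesty and subsampling has to be done carefully so that the conditional-law statement is legitimate.

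\textbf{Step 2 (almost-sure subsequence).} Take $\epsilon_j = 4^{-j}$ with compacts $K_j$. Given any subsequence, Theorem~\ref{thm:mmd} and Step 1 let me extract a further subsequence $N_k$ along which two things hold: $\MMD(\Fhat_x,F_x)\to0$ almost surely, and, for each $j$, $\sum_k \Pr\{\Fhat_x(K_j^c) > 2^{-j+1}\} < \infty$ up to the $\sqrt{\epsilon_j}$ slack. For the second property I choose $N_k$ so that the $o(1)$ terms for $j\le k$ are below $2^{-k}$, and then diagonalize. By Borel--Cantelli, almost surely for every $j$ we have $\Fhat_x(K_j^c) \le 2^{-j+1}$ for all large $k$. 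The sequence is then almost surely tight, since finitely many early terms are each tight on a Polish space. (If the extraction is fiddly, it can be simplified to handling each $\epsilon$ separately and using that tightness only needs to hold for a countable family.)

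\textbf{Step 3 (identification of the limit).} On this probability-one event, Prohorov's theorem shows every subsequence of $\Fhat_x$ has a weakly convergent sub-subsequence with some limit $G$. Since $\kappa$ is bounded and continuous (Assumption~\ref{assumption:4-kernel}), weak convergence of the product measures gives convergence of each of the three double integrals defining $\MMD^2$. Therefore $\MMD(\Fhat_x, F_x) \to \MMD(G,F_x)$, so $\MMD(G,F_x)=0$. Because $\kappa$ is characteristic, $G=F_x$. Every limit point is $F_x$, so $\Fhat_x \to F_x$ weakly along the sub-subsequence almost surely, and hence $d(\Fhat_x,F_x)\cinp 0$.
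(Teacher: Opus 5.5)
Your Step 1 is sound and matches the paper's asymptotic-tightness lemma (localization term plus honest term, then Markov). Your Step 3 matches the paper's identification via continuity of $P \mapsto \MMD(P,F_x)$ and characteristicness.

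\textbf{The gap is in Step 2.} For fixed $j$, Step 1 only gives $\Pr\{\Fhat_x(K_j^c) > 2^{-j+1}\} \le 2^{-j-1} + o(1)$. The leading term $2^{-j-1}$ does not vanish as $N \to \infty$, so this bound gives no control of $\sum_k \Pr\{\Fhat_x(K_j^c) > 2^{-j+1}\}$ along any subsequence. Borel--Cantelli therefore cannot be invoked, and the conclusion that almost surely $\Fhat_x(K_j^c) \le 2^{-j+1}$ for all large $k$ is unsupported. Neither diagonalization nor handling each $\epsilon$ separately helps, because the slack is paid at every $k$.

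This is not a bookkeeping issue: bounds of the form $\sup_N \Pr\{\Fhat_x(K^c) > \eta\} \le \delta$ can never yield almost-sure tightness along a subsequence by themselves. Take $\delta_{Z_N}$ with $Z_N$ iid real-valued with unbounded support. Every such bound holds (take $K=[-M,M]$), yet for every fixed subsequence, almost surely $\sup_k |Z_{N_k}| = \infty$. So the pathwise Prohorov argument in Step 3 has nothing to act on.

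\textbf{The missing idea} is to lift tightness to the \emph{laws} of $\Fhat_x$ on $\sP(\sY)$ rather than to sample paths, which is what the paper does.
\begin{enumerate}
\item Apply Step 1 with $\eta = 2^{-m}$ and tolerance $\delta 2^{-m}$ to get compacts $K_m$.
\item Set $\sK = \bigcap_m \{P \in \sP(\sY) : P(K_m) \ge 1 - 2^{-m}\}$. This set is weakly closed by Portmanteau and uniformly tight, hence compact by Prohorov.
\item A union bound over $m$ at each fixed $N$ gives $\Pr(\Fhat_x \notin \sK) \le \delta$. Summability is needed over $m$, where you choose the tolerances, not over $N$.
\item Conclude either as the paper does (Prohorov on the laws, continuous mapping for $\MMD(\cdot,F_x)$, Theorem~\ref{thm:mmd}, characteristicness, subsequence principle), or directly. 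For the direct route, take any open weak neighborhood $V$ of $F_x$. Then $c = \inf_{\sK \setminus V} \MMD(\cdot,F_x) > 0$ by compactness and continuity, so $\Pr(\Fhat_x \notin V) \le \delta + \Pr\{\MMD(\Fhat_x,F_x) \ge c\} \to \delta$.
\end{enumerate}

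\textbf{Alternative repair.} Your almost-sure scheme can be kept if you upgrade Step 1 to $\Pr\{\Fhat_x(K_\epsilon^c) > 3\epsilon\} \to 0$. For the ideal subsampled forest, the honest term $\sum_i w_i(x)1(X_i\in U)1(Y_i\notin K_\epsilon)$ is a $[0,1]$-valued complete $U$-statistic of order $s_N$. By Hoeffding's bound its variance is at most $s_N/N \to 0$, so it concentrates around its mean, which is at most $\epsilon$. With vanishing probabilities, your diagonal Borel--Cantelli extraction goes through. Markov's inequality alone, as you use it, is not enough.
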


\begin{remark}
  One simple sufficient condition for checking Assumption~\ref{assumption:5-tightness} is to assume that the $Y_i$'s take values in a smoother function space $\sS$ that embeds compactly into $\sY$ and that $\sup_x \E(\|Y_i\|^p_{\mathcal S} \mid X_i = x) < \infty$ for some $p > 0$; for example, if $\sY = L_2([0,1])$ then we might take $\sS = H^s$ to be a Sobolev space with smoothness index $s > 0$ \citep{di2012hitchhikerʼs}.
\end{remark}

Assumption~\ref{assumption:6-ctightness} makes upgrading to convergence in $C([0,1], \|\cdot\|_\infty)$ straightforward. The first case of Assumption~\ref{assumption:6-ctightness} is used to handle the case where $\sY = L_2([0,1])$ while the second is used to handle the case where $\sY$ is a space of smooth functions (e.g., \Holder\ or Sobolev).

\begin{assumption}[Uniform Tightness on $C$]
  \label{assumption:6-ctightness}
  Either (i) $C$ is continuously embeddable into $\sY$, i.e., the inclusion mapping $\iota: C \to \sY$ is continuous, and Assumption~\ref{assumption:5-tightness} holds with $K_\epsilon \subseteq C([0,1])$ interpreted as a subset of $C$ relative to $\|\cdot\|_\infty$, or (ii) $\sY$ is continuously embeddable into $C$.
\end{assumption}

\begin{theorem}
  \label{thm:continuous}
  Suppose that Assumptions~\ref{assumption:1-rf}---\ref{assumption:6-ctightness} hold. Then for all $x$ we have $\Fhat_x \to F_x$ weakly-in-probability as elements of $\sP(C)$ with respect to the weak topology induced by $\|\cdot\|_\infty$.
\end{theorem}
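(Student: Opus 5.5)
We will derive Theorem~\ref{thm:continuous} from Theorem~\ref{thm:weak} by showing that the estimated laws, viewed on $C$, are tight in probability. Tightness plus identification of the limit gives weak convergence in probability. The two cases of Assumption~\ref{assumption:6-ctightness} are handled separately. Throughout we use the subsequence characterization of convergence in probability: $d(\Fhat_x,F_x)\cinp 0$ holds if and only if every subsequence has a further subsequence along which $d(\Fhat_x,F_x)\to 0$ almost surely. This reduces the problem to a deterministic statement about a sequence of (random, but a.s.\ fixed) measures.

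\textbf{Case (ii).} Here $\iota:\sY\to C$ is continuous, so each $\Fhat_x$ and $F_x$ induce laws $\Fhat_x\circ\iota^{-1}$ and $F_x\circ\iota^{-1}$ on $C$. If $h:C\to\Reals$ is bounded and continuous, then $h\circ\iota$ is bounded and continuous on $\sY$. Hence weak convergence on $\sY$ transfers directly by the continuous mapping theorem. Concretely, we pass to an a.s.\ subsequence from Theorem~\ref{thm:weak}, apply continuous mapping pathwise, and conclude. This case is routine.

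\textbf{Case (i).} The embedding goes the other way, $C\hookrightarrow\sY$, so $\sY$-weak convergence does not directly give $C$-weak convergence. The plan has three steps.

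First, we show $\Fhat_x$ concentrates on $C$. Fix $\epsilon$, and let $U$ and $K_\epsilon\subseteq C$ (compact in $\|\cdot\|_\infty$) be as in Assumption~\ref{assumption:5-tightness}. We need $\Fhat_x(K_\epsilon^c)=\sum_i w_i(x)1(Y_i\notin K_\epsilon)$ to be small in probability. By honesty (P2), conditional on the splitting data and the covariates, the $Y_i$ in the estimation half are independent with laws $F_{X_i}$. We also need the leaf diameters to shrink. This is the step from \citet{cevid2022distributional} and \citet{wager2018estimation}, using P3, P5 and Assumption~\ref{assumption:3-covariate}, which gives $\sup_i\{\|X_i-x\|: w_i(x)>0\}\cinp 0$. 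Then, with probability tending to one, all weighted $X_i$ lie in $U$. This yields $\E\{\Fhat_x(K_\epsilon^c)\mid \text{covariates, splits}\}\le\epsilon$ on that event, and Markov's inequality gives $\Pr\{\Fhat_x(K_\epsilon^c)>\sqrt\epsilon\}\le\sqrt\epsilon+o(1)$. Taking $\epsilon_k\downarrow0$ and a diagonal subsequence, we obtain almost surely a sequence of measures that is tight on $C$, and each of these measures puts full mass on $C$ up to null sets.

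Second, we identify the limit. By Prokhorov's theorem on the Polish space $C$, every subsequence has a further subsequence converging weakly on $C$ to some $G$. Continuity of $\iota$ then gives $\Fhat_x\circ\iota^{-1}\to G\circ\iota^{-1}$ weakly on $\sY$. By Theorem~\ref{thm:weak} this limit must be $F_x$, so $G\circ\iota^{-1}=F_x$. Since $\iota$ is injective and continuous between Polish spaces, it is a Borel isomorphism onto its image by the Lusin--Souslin theorem. Also $F_x(\iota(C))=1$, because $F_x(K_\epsilon)\ge1-\epsilon$ for all $\epsilon$. Therefore $G$ is uniquely determined as $F_x$ viewed on $C$.

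Third, we conclude. All subsequential limits coincide, so along the a.s.\ subsequence we have $\Fhat_x\to F_x$ weakly on $C$. Converting back through the subsequence criterion with the bounded-Lipschitz metric gives weak convergence in probability.

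The main obstacle is the first step of case (i): controlling $\Fhat_x(K_\epsilon^c)$. Theorem~\ref{thm:mmd} is only an MMD statement, so we cannot get this from it. We must reopen the forest structure, namely honesty and shrinking leaves, to transfer the local uniform tightness of Assumption~\ref{assumption:5-tightness} to the weighted empirical measure. It is also plausible that this same argument is already what drives the proof of Theorem~\ref{thm:weak}, in which case we would simply reuse that lemma with $K_\epsilon$ now compact in $C$.
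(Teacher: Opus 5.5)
Your case (ii) is correct and matches the paper's argument. Case (i), however, has a genuine gap at the end of your first step. There you pass from ``$\Pr\{\Fhat_x(K_k^c)>\sqrt{\epsilon_k}\}\le\sqrt{\epsilon_k}+o(1)$ for each $k$'' to ``almost surely, along a diagonal subsequence, the realized measures are tight on $C$.''

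Bounds whose right-hand side does not vanish as $N\to\infty$ constrain only the \emph{laws} of the random measures. Borel--Cantelli would need, for a fixed compact, probabilities summable along the subsequence. Letting $K_k$ grow with the index does not help: tightness at level $\eta$ requires one compact controlling every index, and $\mu(K_k^c)\le\mu(K_{k_0}^c)$ for $K_{k_0}\subseteq K_k$ says nothing about $\mu(K_{k_0}^c)$.

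The implication is in fact false. Take $\mu_N=\delta_{Z_N}$ with $Z_N(t)=\xi_N t$ and $\xi_N$ iid $\Normal(0,1)$. This satisfies bounds of exactly your form, uniformly in $N$. Yet along every subsequence $\sup_j|\xi_{N_j}|=\infty$ almost surely, so the realized measures are never tight. Your second and third steps apply Prokhorov pathwise, so they rest on this unproved claim.

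The repair is to run the compactness argument one level up, as in Lemma~\ref{lemma:tightness-2}. Choose compacts $K_m\subseteq C$ with $\sup_N\Pr\{\Fhat_x(K_m^c)>2^{-m}\}\le\delta 2^{-m}$, enlarging them to cover finitely many initial $N$. Then $\bigcap_m\{P\in\sP(C):P(K_m)\ge 1-2^{-m}\}$ is closed and uniformly tight, hence compact, and it carries $G_N$-mass at least $1-\delta$. So the laws $G_N$ of $\Fhat_x$ are tight in $\sP\{\sP(C)\}$.

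Your identification then works at the level of laws. The map $\Phi(P)=P\circ\iota^{-1}$ from $\sP(C)$ to $\sP(\sY)$ is continuous and, by your Lusin--Souslin argument, injective. If $G_{N_j}\to G$, then $\Phi_{\#}G=\delta_{F_x}$ by Theorem~\ref{thm:weak}, which forces $G$ to be the point mass at $F_x$ viewed as a law on $C$. Convergence in law to a constant is convergence in probability.

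This identification genuinely differs from the paper's, and is somewhat cleaner. The paper reruns the MMD argument on $C$, which tacitly needs $\kappa\circ(\iota\times\iota)$ to be characteristic on $\sP(C)$; that fact rests on the same injectivity.

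Two smaller corrections to your first step:
\begin{itemize}
\item The claim $\sup_i\{\|X_i-x\|:w_i(x)>0\}\cinp 0$ is not what the cited results give. It typically fails for the averaged weights, which put tiny positive mass on distant points. What you need, and what holds (Lemma~\ref{lemma:weight-localization}), is $\sum_i w_i(x)1(X_i\notin U)\cinp 0$.
\item There is no single estimation half for the averaged weights. Honesty must be applied tree by tree and then averaged, as in Lemma~\ref{lemma:tightness-1}.
\end{itemize}
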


\begin{remark}[Sufficient Conditions]
  The simplest way to verify Assumption~\ref{assumption:6-ctightness} (i) is to assume that the $Y_i$'s are supported on a sufficiently smooth class of functions and make a moment assumption, which is the same strategy as for Assumption~\ref{assumption:5-tightness}. If the $Y_i$'s are supported on a set $\sS$, we only need that $\sS$ embeds compactly into $C$ rather than $\sY$. This holds for $\sS$ a Sobolev space with index $s > 1/2$ or a \Holder\ space with smoothness index $s > 0$.
\end{remark}

\paragraph{Pointwise Inference for Selected Functionals}

Although the \fdrf\ estimates an entire conditional law, the inferential targets we consider are low-dimensional summaries. Most can be written as
\begin{math}
  \theta(x) = \Psi\{\eta_g(x)\}
\end{math}
where $\eta_g(x) = \E\{g(Y_i) \mid X_i = x\}$, $g(\cdot)$ is a summary of a given curve, and $\Psi(\cdot)$ is smooth. The corresponding estimator is
\begin{math}
  \widehat \eta_g(x) = \sum_{i = 1}^N w_i(x) \, g(Y_i)
\end{math}
and $\widehat \theta(x) = \Psi\{\widehat \eta_g(x)\}$.

The confidence interval construction in Section~\ref{sec:inference-target-functionals} can be justified by appealing to results for (generalized) random forests in \citet[][Theorem~1]{wager2018estimation} and \citet[][Theorem~5]{athey2019generalized}, which make minimal assumptions about how the trees in the ensemble are built. For example, the linearized estimator of $\Psi\{\eta_g(x)\}$ given by 
\begin{math}
  a_x^\top \{\widehat \eta_g(x) - \eta_g(x)\}
  =
  \sum_{i = 1}^N w_i(x) \{a_x^\top g(Y_i)\}
\end{math}
where $a_x = \left.\nabla_\eta \Psi(\eta)\right|_{\eta = \eta_g(x)}$ is exactly an honest regression-forest estimator with scalar pseudo-response $Z_{a_x} = a_x^\top g(Y)$. In the Supplementary Material, we argue that under Assumption~\ref{assumption:1-rf} and additional assumptions on $\nabla \Psi$ and $g(\cdot)$, a central limit theorem holds with
\begin{align*}
  \sigma_{N,\theta}(x)^{-1}\{\widehat \theta(x) - \theta(x)\}
  \cind\ \Normal(0,1),
\end{align*}
for some sequence $\sigma_{N,\theta}(x)$. We also give conditions under which the targets in this manuscript satisfy these assumptions.

\section{Analysis of NHANES}

NHANES accelerometer data provide a useful setting for illustrating how the \fdrf\ can be used to answer questions that go beyond conditional means or scalar summaries as part of an analysis of functional data. Here, each of 4,099 subjects contribute an activity profile during a monitoring period. Many natural questions are distributional in nature, such as how covariates like age, income, smoking status, and other demographic characteristics shift the typical intensity and timing of activity, the variation in the trajectories, and other derived features of the profile.

\paragraph{Overall Predictives}
We fit an \fdrf\ using 24-hour average accelerometer profiles from the NHANES 2005-2006 survey as the response, controlling for smoking status, age, race, education level, income, marital status, and sex as predictors; details on these covariates are given in the Supplementary Material. The response curves were defined by $Y_i(t) = \log\{1 + W_i(t)\}$ where $W_i(t)$ denotes the number of steps an individual took, averaged over a brief time interval, as measured by a physical activity monitor worn by the subjects; the log transform was used to control outliers. We fit the \fdrf\ with $\lambda = 0.5$ and $B = 200$ random features for each splitting rule.

Figure~\ref{fig:nhanes-age-income-distribution} shows the estimated predictive distribution within a stratum of white/male/married high-school graduates as poverty income ratio (PIR) and age vary.
There are several interesting patterns in the median activity curves; in particular, we observe an interaction between PIR and age within this stratum, with higher-income individuals having an increasing activity profile as PIR increases, while for younger individuals the trend is reversed. The overall predictive distributions in Figure~\ref{fig:nhanes-age-income-distribution} also show that the degree of \emph{uncertainty} in the profiles varies by both age and PIR; for example, there is much more variability in trajectories for high-income younger individuals than for high-income older individuals.

\begin{figure}[t]
  \centering
  \includegraphics[width=\textwidth]{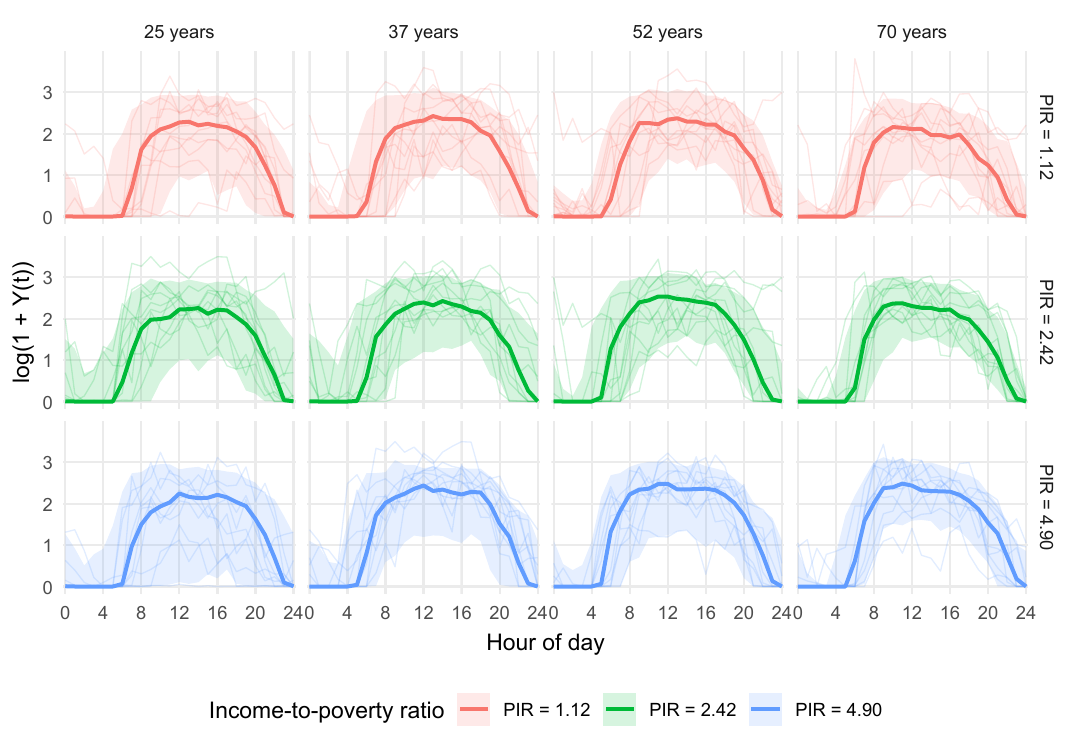}
  \caption{Predictive distributions for NHANES hourly activity profiles. Panels vary age and poverty-income ratio while holding the remaining covariates fixed at their template values. Shaded bands are pointwise 10th--90th percentiles, thick lines are pointwise medians, and thin lines show sampled predictive trajectories.}
  \label{fig:nhanes-age-income-distribution}
\end{figure}

\paragraph{Goodness of Fit}
Next, we determine whether the \fdrf\ gives better out-of-sample predictions than alternatives. We use $20$ random $80/20$ training-testing splits of the sample and fit the \fdrf, KNN, and functional linear model (FLM) residual-bootstrap estimators on the training data, and evaluate each fit on the testing data using proper scoring rules. Table~\ref{tab:funcde-flm-cauchy-comparison} reports the mean score difference, the fraction of splits in which the first method in the comparison had a lower held-out score, and the $P$-value from the resampling-corrected $t$-test for the mean score difference \citep{bouckaert2004evaluating}.
The \fdrf\ has lower held-out scores than both KNN and FLM on almost all splits, with aggregate $P$-values below $0.02$ in all four comparisons against these baselines. KNN also improves on FLM, although the evidence is weaker.

\begin{table}[t]
\centering
\small
\begin{tabular}{llrrr}
\toprule
Score                   & Comparison   & Mean difference & Prop. better & Aggregate $P$-value \\
\midrule
\multirow{3}{*}{Energy} & \fdrf\ vs. FLM & -0.013          & 1.0000       & 0.005               \\
                        & \fdrf\ vs. KNN & -0.010          & 0.9500       & 0.016               \\
                        & KNN vs. FLM  & -0.003          & 0.7000       & 0.535               \\
  \midrule
\multirow{3}{*}{Kernel} & \fdrf vs. FLM & -0.011          & 1.0000       & $<0.001$            \\
                        & fDRF vs. KNN & -0.007          & 0.9500       & 0.014               \\
                        & KNN vs. FLM  & -0.004          & 0.7500       & 0.246               \\
\bottomrule
\end{tabular}
\caption{Repeated train/test comparison of conditional distribution estimators.\label{tab:funcde-flm-cauchy-comparison}}

\end{table}

\paragraph{Day-Averaged Median Trajectories}
To check where the median estimates are separated from their sampling uncertainty, Figure~\ref{fig:nhanes-raw-trajectory-age-contrasts} shows the corresponding differences relative to the 25-year-old template within each PIR stratum. In the Supplementary Material, Figure~\ref{fig:nhanes-raw-trajectory-median} shows the conditional pointwise median together with pointwise $95\%$ confidence bands. These bands validate the difference in median profile across age in low-income individuals.

\begin{figure}[t]
  \centering
  \includegraphics[width=\textwidth]{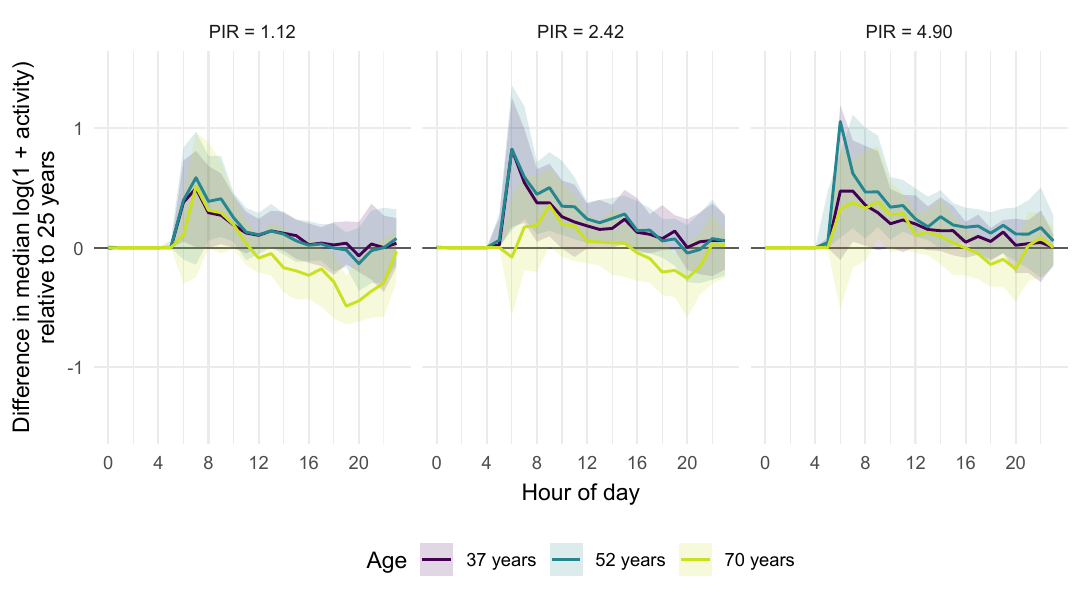}
  \caption{Pointwise differences in median day-averaged log activity trajectories relative to the 25-year-old template within each poverty-income ratio. Bands are pointwise 95\% confidence intervals and the horizontal line marks no difference.}
  \label{fig:nhanes-raw-trajectory-age-contrasts}
\end{figure}

\paragraph{Peak Time and Maximal Activity}
Let $\mathcal T\subset[0,24)$ denote the observation grid and define $P_i = \min\{t\in\mathcal T:Y_i(t)=\max_{s\in\mathcal T}Y_i(s)\}$ to be the \emph{peak time} of individual $i$, and let $M_i = \max_t Y_i(t)$ denote the \emph{maximal effort} of individual $i$. In Figure~\ref{fig:nhanes-peak-effort-predictive-violins} of the Supplementary Material we display the predictive distribution of $P_i$ and $M_i$ for the individuals in the different strata used above.
To account for the fact that time-of-day is circular (e.g., 1am and 11pm are equally close to 12am), we define $\mu_{P_i}$ and $\sigma_{P_i}$ to be the \emph{circular mean and standard deviation} of $P_i$,
\begin{align*}
  \mu_{P_i} = \left[\frac{12}{\pi}\operatorname{atan2}\left( \E\{\sin(\pi P_i / 12)\}, \E\{\cos(\pi P_i / 12)\} \right)\right]_{24}
  \qquad
  \sigma_{P_i} = \frac{12}{\pi} \sqrt{-2 \log R_{P_i}}
\end{align*}
where $R_{P_i} = \sqrt{\E\{\sin(\pi P_i / 12)\}^2 + \E\{\cos(\pi P_i / 12)\}^2}$ \citep{lee2010circular}. Here $[u]_{24}=u-24\lfloor u/24\rfloor\in[0,24)$ is $u$ modulo 24. From Figure~\ref{fig:nhanes-peak-effort-functional-intervals} we can see that older individuals tend to have earlier average peak times and that mean maximal effort increases with age up to a point before decreasing among elderly individuals, with a steeper drop-off for lower-income individuals. Generally, higher-income individuals have at every age a higher degree of variability in their peak time as well, with a sizeable dip in variability occurring in older individuals.

\begin{figure}[t]
  \centering
  \includegraphics[width=\textwidth]{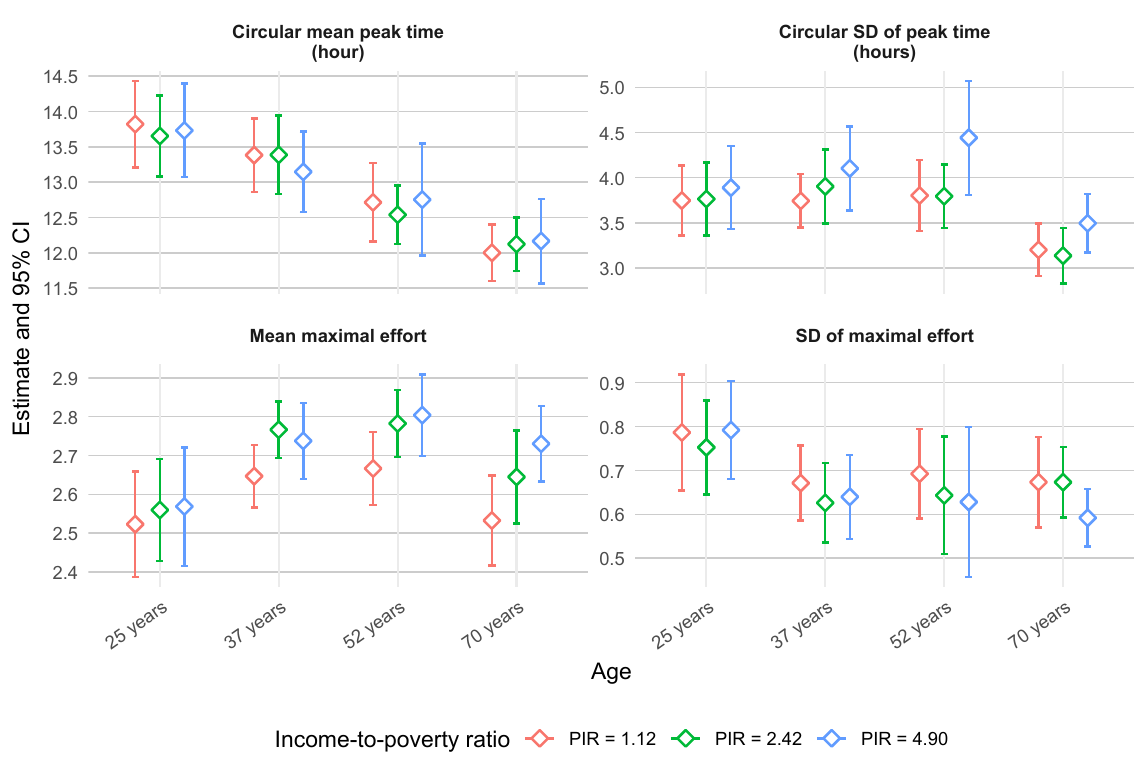}
  \caption{Inference for the circular mean and circular standard deviation of peak time and the mean and standard deviation of maximal effort.}
  \label{fig:nhanes-peak-effort-functional-intervals}
\end{figure}

\paragraph{Exceedance Function}

A useful summary of an entire activity profile is the \emph{exceedance function} \citep{kundu2025exceedance}
\begin{math}
  E_i(u) = \frac{1}{24}\int_0^{24} 1\{Y_i(t) > u\} \ dt.
\end{math}
This measures the fraction of the day individual $i$ spends with activity above threshold $u$. Unlike the peak time or maximal effort, $E_i(\cdot)$ summarizes how activity is distributed across intensities rather than where it is concentrated in time, and it does so without committing in advance to a particular cutpoint between ``sedentary,'' ``light,'' and ``moderate-to-vigorous'' activity. Both the level of $E_i(u)$ and its dependence on $u$ are of interest, as two subgroups may have similar average time above a low threshold but very different probabilities of sustaining higher-intensity activity.

Figure~\ref{fig:nhanes-exceedance-median} displays the conditional median of $E_i(u)$ with pointwise $95\%$ confidence intervals within strata. Across all PIR levels, individuals aged $37$, $52$, and $70$ tend to spend a larger fraction of the day above moderate intensity thresholds than $25$-year-olds, with the difference concentrated in the range $u \in [2, 6]$ and vanishing above $u \approx 7$ where exceedance is near zero for everyone. Additionally, the magnitude of this age gap increases with PIR. Finally, the low-PIR stratum shows a qualitatively different pattern at the upper end of the threshold range: the $70$-year curve dips meaningfully below the $25$-year curve for $u \in [5, 7]$, indicating that older low-income individuals spend less time at the higher intensities that younger low-income individuals still reach. This kind of crossover might be missed by a cutpoint analysis, but is easy to read off the conditional exceedance functions estimated by the \fdrf.

\begin{figure}[t]
  \centering
  \includegraphics[width=\textwidth]{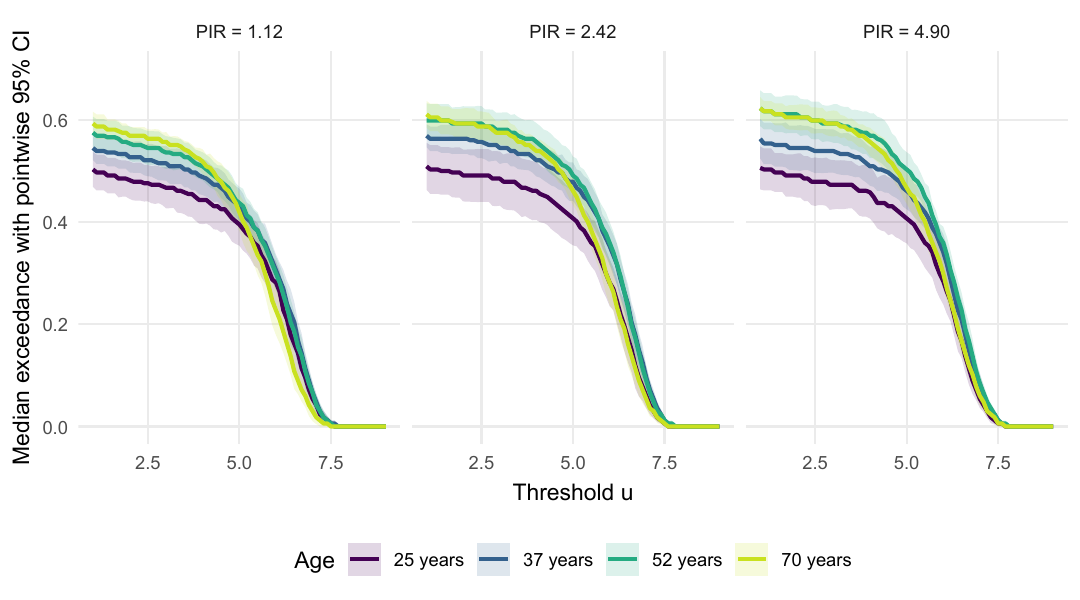}
  \caption{Median exceedance functions for weekly NHANES activity curves, with pointwise 95\% confidence bands. Facets show poverty-income ratio and color encodes age.}
  \label{fig:nhanes-exceedance-median}
\end{figure}

\begin{figure}[t]
  \centering
  \includegraphics[width=\textwidth]{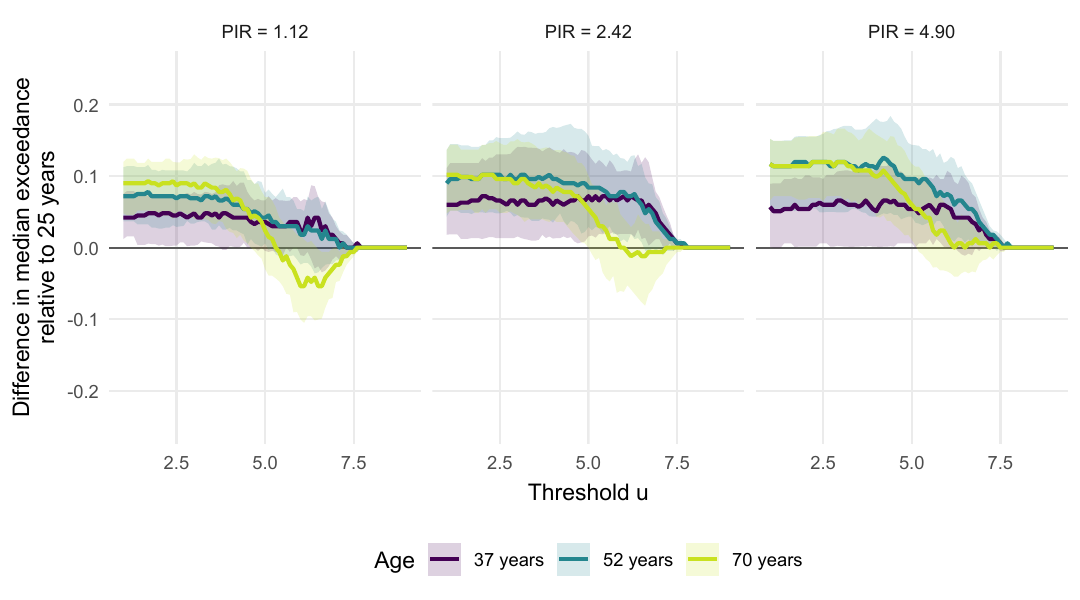}
  \caption{Pointwise differences in median exceedance functions relative to the 25-year-old template within each poverty-income ratio. Bands are pointwise 95\% confidence intervals and the horizontal line marks no difference.}
  \label{fig:nhanes-exceedance-age-contrasts}
\end{figure}

\section{Discussion}

In this work we developed the \fdrf\ for conditional distribution estimation with functional responses, developed its theoretical properties, and illustrated its strong performance on a variety of synthetic and real-data examples. Establishing theoretical results for \fdrf\ required using tools from functional analysis to account for the fact that $\MMD$ only metrizes convergence in distribution on locally compact Polish spaces.

The main strength of the \fdrf\ is that it targets the conditional distribution of a functional response rather than a prespecified scalar summary or conditional mean curve, and so allows the analyst to ask very rich questions. Relative to KNN estimators, \fdrf{s} build neighborhoods that are adaptive to the covariates. Relative to FLMs, they have very flexible mean and error structure. The MMD split criterion also lets the analyst encode relevant notions of functional similarity. Moreover, because $\Fhat_x$ is supported on observed functions, predictive draws remain realistic and automatically preserve constraints present in the data.

\paragraph{Limitations and Future Work}
A limitation of \fdrf{s} is that the estimator is discrete. While predictive samples remain realistic, \fdrf{s} cannot generate genuinely new curves outside the training support. Weight construction can also create difficulties in settings requiring common support across covariate values, such as multiply robust causal or mediation estimators based on inverse probability weighting; making the \fdrf\ usable in these settings will be the subject of future work. The method also depends on the choice of kernel, and different kernels emphasize different aspects of the functional distribution. In practice this means that kernel choice should be guided by the scientific question or examined through sensitivity analyses. Finally, the development here implicitly assumes regularly-sampled dense observations, whereas functional data are often irregularly sampled, partially observed, or subject to registration uncertainty. In this work, our theoretical results are derived for continuous-time curves, while extending these results to account for discretely-sampled approximations remains an important direction for future research.

\paragraph{Acknowledgments}
This work was partially supported by NSF grant DMS-2144933.

\bibliographystyle{apalike}
\bibliography{mybib}

\clearpage

\appendix

{\LARGE \noindent \bf Supplementary Material}

\section{Proofs}

This section provides the details behind the weak-consistency results stated in Theorems~\ref{thm:weak} and~\ref{thm:continuous}. The argument is organized as follows. Lemma~\ref{lemma:weight-localization} is standard, and shows that the forest weights concentrate on training observations whose covariates are local to the target point $x$. This localization is then combined with the local compact-containment assumptions in Lemma~\ref{lemma:tightness-1} to show that the random fitted measures place most of their mass on compact sets, with high probability. Lemma~\ref{lemma:mmd-continuity} establishes continuity of the MMD functional, and Lemma~\ref{lemma:tightness-2} lifts tightness of the random measures to tightness of their laws as random elements of $\sP(\sY)$ (or of $\sP(C[0,1])$) via Lemma~\ref{lemma:mmd-continuity}, Lemma~\ref{lemma:tightness-1}, and Prohorov's theorem.  These ingredients allow the final subsequence argument to identify every possible weak limit as the point mass at the true conditional law $F_x$.

\begin{lemma}
  \label{lemma:weight-localization}
  For decision tree $m$ in the random forest let $w_{m,i}(x) = \frac{1(i \in I_m(x))}{n_m(x)}$ where $I_m(x) = \{i \in S_m : X_i \in R_{\ell_m(x)}\}$ is the set of observations in the training set of tree $m$ that are associated to the same leaf node as $x$, and let $w_i(x) = \E\{w_{m,i}(x) \mid (X_i, Y_i)_{i=1}^N\}$. Then under Assumptions~\ref{assumption:1-rf}---\ref{assumption:4-kernel}, for any $x \in [0,1]^P$ and open neighborhood $U$ of $x$, we have $\sum_i w_i(x) 1(X_i \notin U) \cinp 0$.
\end{lemma}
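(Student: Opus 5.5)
The plan is to reduce the statement to a bound on the diameter of the leaf containing $x$. This is the standard shrinking-leaf argument of \citet{wager2018estimation}, which \citet{cevid2022distributional} also use in proving Theorem~\ref{thm:mmd}.

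Fix $x$ and an open $U \ni x$, and choose $r>0$ with $\{z : \|z - x\|_\infty \le r\} \cap [0,1]^P \subseteq U$. For a single tree $m$, the weights $w_{m,i}(x)$ are nonnegative and sum to one. They are supported on the leaf $R_{\ell_m(x)}$. Therefore
\begin{align*}
  \sum_i w_{m,i}(x) 1(X_i \notin U) \le 1\{\operatorname{diam}_\infty(R_{\ell_m(x)}) > r\}.
\end{align*}
Taking the conditional expectation over the forest randomness and then the full expectation gives
\begin{align*}
  \E\Big\{\sum_i w_i(x) 1(X_i \notin U)\Big\} \le \Pr\{\operatorname{diam}_\infty(R_{\ell_m(x)}) > r\},
\end{align*}
where the probability is over both the data and the tree randomness. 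The left side is a nonnegative random variable. By Markov's inequality it suffices to show that the right side tends to $0$ as $N \to \infty$.

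For the leaf-diameter bound I will use Lemma~2 of \citet{wager2018estimation}, which \citet{cevid2022distributional} restate for the same tree-growing scheme. It applies under Assumption~\ref{assumption:1-rf}: subsampling with $s_N\to\infty$, $\alpha$-regularity with leaves of size between $\kappa$ and $2\kappa-1$, and the random-split condition P5. It also needs Assumption~\ref{assumption:3-covariate}, a density bounded away from zero on $[0,1]^P$. The lemma states that for each coordinate $j$, the side length of $R_{\ell_m(x)}$ in direction $j$ is at most of order $s_N^{-c}$ with probability tending to one, for an explicit $c = \frac{\log((1-\alpha)^{-1})}{\log(\alpha^{-1})}\frac{\pi}{P}\cdot(\text{const})>0$.

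The proof of that lemma has two parts. First, the tree must have depth at least of order $\log(s_N/\kappa)/\log(\alpha^{-1})$. This follows because each split keeps at least a fraction $\alpha$ of the points and the leaves are small. Second, by P5, each coordinate is chosen in roughly a fraction $\pi/P$ of the splits along the path to $x$. A Chernoff bound controls this count. Each such split shrinks the number of points in the node along that coordinate by a factor at most $1-\alpha$. Because the covariate density is bounded above and below, a uniform concentration bound over boxes converts shrinkage in point counts into shrinkage in Lebesgue side length.

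Honesty (P2) plays no role here. The diameter bound concerns only the partition, which is built from the splitting half of the data, whereas $I_m(x)$ uses the estimation half. Since $s_N \asymp N^\beta \to \infty$, we get $\Pr\{\operatorname{diam}_\infty(R_{\ell_m(x)})>r\}\to 0$, and the claim follows.

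The main obstacle is the final step of the diameter lemma, namely turning the $\alpha$-fraction guarantees on sample counts into geometric shrinkage. This needs a VC-type uniform deviation bound over axis-aligned rectangles, together with the lower density bound, so that small empirical mass forces small side length. I would cite this step directly from \citet{wager2018estimation} rather than reprove it. Assumption~\ref{assumption:2-outcome} and Assumption~\ref{assumption:4-kernel} are not needed for this lemma.
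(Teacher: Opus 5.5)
Your proposal is correct and follows essentially the same route as the paper. You bound the single-tree mass outside $U$ by the indicator that the leaf containing $x$ has diameter at least $r$, then invoke the Wager--Athey shrinking-leaf lemma (the paper cites it as their Lemma~1), and finally pass to the averaged weights $w_i(x)$ through Markov's inequality and boundedness of the single-tree quantity.
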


\begin{proof}
  Fix $x\in[0,1]^P$ and let $U$ be an open neighborhood of $x$. Choose $r>0$ such that $B(x,r)\cap[0,1]^P\subseteq U$. For tree $m$, define
  \[
    A_{m,N}(x,U)
    =
    \sum_{i=1}^N w_{m,i}(x)1(X_i\notin U),
    \qquad
    \Delta_{m,N}(x)
    =
    \operatorname{diam}\{R_{\ell_m(x)}\}.
  \]
  Since $x\in R_{\ell_m(x)}$, if $\Delta_{m,N}(x)<r$, then every $X_i\in R_{\ell_m(x)}$ lies in $B(x,r)\cap[0,1]^P\subseteq U$. Therefore
  \[
    A_{m,N}(x,U)
    =
    \frac{1}{n_m(x)}
    \sum_{i\in I_m(x)} 1(X_i\notin U)
    \le
    1\{\Delta_{m,N}(x)\ge r\}.
  \]
  By Lemma~1 of \citet{wager2018estimation}, $\Delta_{m,N}(x) \cinp 0$, so $A_{m,N}(x,U) \cinp 0$. Now let
  \[
    A_N(x,U)
    =
    \sum_{i=1}^N w_i(x)1(X_i\notin U)
    =
    \E\{A_{m,N}(x,U) \mid (X_i, Y_i)_{i=1}^N\}.
  \]
  For any $\epsilon>0$, Markov's inequality gives
  \[
    \Pr\{A_N(x,U)>\epsilon\}
    \le
    \frac{1}{\epsilon}\E\{A_N(x,U)\}
    =
    \frac 1 \epsilon \E\{A_{m,N}(x,U)\} \to 0
  \]
  because $A_{m,N}(x,U) \cinp 0$ and $A_{m,N}(x, U)$ is bounded. 
  Thus $A_N(x,U) = \sum_{i = 1}^N w_i(x) \, 1(X_i \notin U) \cinp 0$ as claimed.
\end{proof}

\begin{lemma}
  \label{lemma:tightness-1}
  Under Assumptions~\ref{assumption:1-rf}---\ref{assumption:5-tightness}, the random probability measures $\Fhat_x$ are asymptotically tight in probability. In particular, for every $\eta > 0$ and $\delta > 0$ there exists a compact set $K \subseteq \sY$ such that
  \begin{align*}
    \sup_{N \ge 1} \Pr\{\Fhat_x(K^c) > \eta\} \le \delta.
  \end{align*}
  If Assumption~\ref{assumption:6-ctightness} is added, then this conclusion also holds with $K \subseteq C([0,1])$ chosen to be compact relative to the $\|\cdot\|_\infty$-norm on $C([0,1])$.
\end{lemma}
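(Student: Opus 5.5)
Fix $x$, $\eta>0$ and $\delta>0$. The plan is to split the mass that $\Fhat_x$ places outside a compact set into a part coming from observations with covariates far from $x$ and a part coming from observations near $x$ whose responses escape the compact set. Concretely, apply Assumption~\ref{assumption:5-tightness} with tolerance $\epsilon = \eta\delta/4$ to obtain a neighborhood $U$ of $x$ and a compact $K = K_\epsilon$ with $\sup_{z\in U}\Pr(Y_i\notin K\mid X_i=z)\le \epsilon$. Then
\begin{align*}
  \Fhat_x(K^c) = \sum_{i=1}^N w_i(x)1(Y_i\notin K)
  \le \sum_{i=1}^N w_i(x)1(X_i\notin U) + \sum_{i=1}^N w_i(x)1(X_i\in U)1(Y_i\notin K)
  =: A_N + B_N .
\end{align*}
By Lemma~\ref{lemma:weight-localization}, $A_N\cinp 0$, so $\Pr(A_N>\eta/2)\le\delta/2$ for all $N\ge N_0$.

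For $B_N$ I would use Markov's inequality together with honesty (P2). Write $w_i(x)=\E\{w_{m,i}(x)\mid \text{data}\}$ and bound $\E B_N = \E\{\sum_i w_{m,i}(x)1(X_i\in U)1(Y_i\notin K)\}$. Under honesty the tree structure, and in particular the leaf $\ell_m(x)$ and the set $I_m(x)$, is determined by the split half of the subsample together with the covariates of the estimation half (as in \citet{wager2018estimation}, the estimation-half responses do not influence the partition). Conditioning on the tree and on all $X$'s, the weights $w_{m,i}(x)$ are fixed and each $Y_i$ with $i$ in the estimation half is distributed as $F_{X_i}$. Hence
\begin{align*}
  \E\{B_N\mid \Tree_m, X_{1:N}\} \le \sum_i w_{m,i}(x)1(X_i\in U)\sup_{z\in U}\Pr(Y\notin K\mid X=z)\le \epsilon,
\end{align*}
so $\E B_N\le\epsilon$, and Markov gives $\Pr(B_N>\eta/2)\le 2\epsilon/\eta=\delta/2$. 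Combining, $\Pr\{\Fhat_x(K^c)>\eta\}\le\delta$ for $N\ge N_0$. For the finitely many $N<N_0$, each $\Fhat_x$ is a finite mixture of point masses at the $Y_i$. The joint law of $(Y_1,\dots,Y_N)$ is tight on the Polish space $\sY^N$ (separable Banach spaces are Polish), so we may enlarge $K$ by a compact set $K'$ with $\Pr(\text{some }Y_i\notin K', i\le N_0)\le\delta$. The union $K\cup K'$ is still compact and gives the supremum over all $N\ge1$.

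For the $C([0,1])$ version, I would run the same argument with $K$ compact in $\|\cdot\|_\infty$. Under Assumption~\ref{assumption:6-ctightness}(i) this $K$ is supplied directly. Under (ii), a $\|\cdot\|_\sY$-compact set is $\|\cdot\|_\infty$-compact because the embedding is continuous. The finite-$N$ step uses tightness of $\sY$-valued, hence $C$-valued, variables in the same way.

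The main obstacle is the honesty conditioning step. It must be made precise that, given the tree and the covariates, the estimation-sample responses remain conditionally distributed as $F_{X_i}$, even though the subsample selection and the splits depend on the covariates. I would handle this by conditioning on the subsample indices, the split-half data and all covariates, and using that P2 forces the splits to ignore estimation-half responses.
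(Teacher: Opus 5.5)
Your proposal is correct and is essentially the paper's proof. It uses the same choice $\epsilon=\eta\delta/4$, the same split of $\Fhat_x(K^c)$ into a far-from-$x$ term handled by Lemma~\ref{lemma:weight-localization} and a near-$x$ term handled by Markov plus honesty, and a finite union of compacts for $N<N_0$; there the paper applies Markov to the mean measure $\bar F_{N,x}=\E\Fhat_x$, whereas you keep all atoms $Y_1,\dots,Y_{N_0}$ in a compact $K'$, a harmless variation.

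One small slip, shared with the paper: under Assumption~\ref{assumption:6-ctightness}(i) the embedding runs $C\hookrightarrow\sY$, so $\sY$-tightness of the $Y_i$ does not give $\|\cdot\|_\infty$-tightness for the small-$N$ step. You can obtain it instead by covering $[0,1]^P$ with finitely many of the neighborhoods from Assumption~\ref{assumption:5-tightness} and taking the union of the corresponding $C$-compact sets.
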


\begin{proof}
  We prove the result under Assumption~\ref{assumption:1-rf}---\ref{assumption:5-tightness}. The result under Assumption~\ref{assumption:6-ctightness} (ii) is a corollary of this, while the same argument applies under Assumption~\ref{assumption:6-ctightness} (i), with $K$ compact relative to $\|\cdot\|_\infty$. 
  Fix $\eta > 0$ and $\delta > 0$. Throughout, we suppress dependence of quantities on $N$. By Assumption~\ref{assumption:5-tightness} applied with $\epsilon = \eta \delta / 4$ there exists a neighborhood $U$ of $x$ and a compact set $K \subseteq \sY$ such that
  \begin{math}
    \sup_{z \in U} F_z(K^c) \le \epsilon.
  \end{math}
  Now, write
  \begin{align*}
    \Fhat_x(K^c) = \sum_{i =1 }^N w_i(x) 1(Y_i \notin K)
    \le R(U) + S(U, K)
  \end{align*}
  where $R(U) = \sum_i w_i(x) 1(X_i \notin U)$ and $S(U, K) = \sum_i w_i(x) 1(X_i \in U, Y_i \notin K)$. Consequently,
  \begin{align}
    \label{eq:target}
    \Pr\{\Fhat_x(K^c) > \eta\} \le \Pr\{R(U) > \eta / 2\} + \Pr\{S(U, K) > \eta / 2\}.
  \end{align}
  We first show that there exists an $N^\star$ such that for $N \ge N^\star$ the left hand side of \eqref{eq:target} is less than $\delta$. First, by Lemma~\ref{lemma:weight-localization}, we have $\Pr\{R(U) > \eta / 2\} \to 0$ as $N \to \infty$; take $N^\star$ so that $\Pr\{R(U) > \eta / 2\} \le \delta / 2$ for all $N \ge N^\star$. For the second term, we apply Markov's inequality to obtain
  \begin{align}
    \label{eq:lem1-S}
    \Pr\{S(U, K) > \eta / 2\} \le 2 \E\{S(U, K)\} / \eta.
  \end{align}
  This is given by
  \begin{align*}
    \sum_i \E\{w_i(x) \, 1(X_i \in U) \, 1(Y_i \notin K)\}
    &= 
    \sum_i \E\{w_{m,i}(x) \, 1(X_i \in U) \, 1(Y_i \notin K)\}
    \\&= 
    \sum_i \E\{w_i(x) \, 1(X_i \in U) \, \Pr(Y_i \notin K \mid X_i)\}
    \\&\le
    \{\sup_{z \in U} F_{z}(K^c)\} \E \sum_i w_{m,i}(x) 1(X_i \in U)
    \\&\le
    \sup_{z \in U} F_{z}(K^c)
    \le
    \eta \delta / 4,
  \end{align*}
  where the second equality holds by the honesty assumption, which states that $w_{m,i}(x)$ is independent of $Y_i$ given $X_i$ whenever $w_{m,i}(x) \ne 0$. Combining this with \eqref{eq:lem1-S} gives $\Pr\{S(U, K) > \eta / 2\} \le \delta / 2$. This establishes $\Pr\{\Fhat_x(K^c) > \eta\} \le \delta$ for $N \ge N^\star$.

  This leaves the initial segment $N = 1,\ldots,N^\star$ to bound. Fix $N$ in this range and let $K_N$ be a compact set to be specified later. By Markov's inequality we have
  \begin{align*}
    \Pr\{\Fhat_x(K_N^c) > \eta\} \le \frac{\E\{\Fhat_x(K_N^c)\}}{\eta}
    = \frac{\bar{F}_{N,x}(K_N^c)}{\eta}
  \end{align*}
  where $\bar{F}_{N,x}(\cdot)$ is the expected value. By Theorem~1.3 of \citet{billingsley2013convergence}, $\bar{F}_{N,x} \in \sP(\sY)$ is tight because $\sP(\sY)$ (or, for the second part of the result, $\sP(C)$ under $\|\cdot\|$) is a Polish space, and hence $K_N$ can be chosen to be a compact set satisfying $\bar{F}_{N,x}(K_N^c) \le \delta \eta$. Hence, with that choice, we have $\Pr\{\Fhat_x(K_N^c) > \eta\} \le \delta$ for $N \le N^\star$.
  Combining the finite segment with the tail segment, we can take our compact set to be $K^\star = \bigcup_{N = 1}^{N^\star} K_N \cup K$.
\end{proof}

\begin{lemma}
  \label{lemma:mmd-continuity}
  Suppose that $\kappa$ is bounded and continuous on $\sY \times \sY$. Then the mapping $Q_x : P \mapsto \MMD(P, F_x)$ is continuous with respect to the weak topology on $\sP(\sY)$. The same result also holds with $(\sY, \|\cdot\|_{\sY})$ replaced with $(C, \|\cdot\|_\infty)$.
\end{lemma}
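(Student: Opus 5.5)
The plan is to use the explicit double-integral formula for the squared MMD and show that each of its terms depends continuously on $P$ in the weak topology. Fix $F_x$ and write
\begin{align*}
  Q_x(P)^2
  =
  \int\!\!\int \kappa(y,y') \, P(dy)\,P(dy')
  - 2\int \Big\{\int \kappa(y,y')\,F_x(dy')\Big\} P(dy)
  + \int\!\!\int \kappa(y,y')\,F_x(dy)\,F_x(dy').
\end{align*}
The last term does not depend on $P$. Since $Q_x = \sqrt{Q_x^2}$ and the square root is continuous on $[0,\infty)$, it suffices to show that the first two terms are weakly continuous in $P$.

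For the middle term, define $h(y) = \int \kappa(y,y')\,F_x(dy')$. Boundedness of $\kappa$ gives $|h| \le \sup|\kappa|$. Continuity of $h$ follows from dominated convergence: if $y_n \to y$, then $\kappa(y_n,y') \to \kappa(y,y')$ for every $y'$, and these integrands are uniformly bounded. Thus $h$ is bounded and continuous, so $P \mapsto \int h\,dP$ is weakly continuous by the very definition of weak convergence.

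For the quadratic term, the key step is that $P_n \to P$ weakly on $\sY$ implies $P_n\otimes P_n \to P\otimes P$ weakly on $\sY\times\sY$. Because $\sY$ is separable, this is Theorem~2.8 of \citet{billingsley2013convergence}. Separability matters here for two reasons. It makes the Borel $\sigma$-field of the product equal to the product $\sigma$-field, so $P\otimes P$ is a Borel measure. It also ensures that products of bounded continuous functions form a convergence-determining class. Since $\kappa$ is bounded and continuous on $\sY\times\sY$, it follows that $\int\!\!\int \kappa\,d(P_n\otimes P_n) \to \int\!\!\int \kappa\,d(P\otimes P)$.

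The weak topology on $\sP(\sY)$ is metrizable, for example by the Prohorov metric, because $\sY$ is separable metric. So sequential continuity is enough, and this yields continuity of $Q_x$.

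The $C$ case follows by repeating the argument verbatim with $(C,\|\cdot\|_\infty)$ in place of $\sY$. This requires $\kappa$ restricted to $C\times C$ to be bounded and continuous in $\|\cdot\|_\infty$. Under Assumption~\ref{assumption:6-ctightness}(i), this holds because $\iota$ is continuous, so $\kappa\circ(\iota\times\iota)$ is continuous. Under case (ii), one works on $\sY$ directly, or equivalently with the kernel's restriction to $C$; I would read the statement as assuming this continuity. $C$ is separable, so the product-measure step goes through unchanged.

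The only delicate point is the product-measure weak convergence. If citing Billingsley is felt insufficient, the alternative is a direct argument. By Prohorov's theorem on the Polish space $\sY$, the sequence $\{P_n\}$ is tight. On a compact $K\times K$, Stone--Weierstrass lets $\kappa$ be uniformly approximated by finite sums $\sum_j f_j(y)g_j(y')$, for which convergence is immediate. Tightness then controls the contribution from outside $K\times K$.
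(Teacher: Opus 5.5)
Your proof is correct and follows the same route as the paper. You expand $Q_x(P)^2$ into its double integrals and use that, $\sY$ being separable, $P_n \to P$ weakly implies $P_n \otimes P_n \to P \otimes P$ weakly, so the bounded continuous kernel integrates continuously. The differences are minor: you handle the cross term through the bounded continuous function $h(y) = \int \kappa(y,y')\,F_x(dy')$ rather than through $P_n \otimes F_x \to P \otimes F_x$, and you explicitly justify passing from sequential to full continuity via metrizability of the weak topology, which the paper leaves implicit.
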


\begin{proof}
  It suffices to prove continuity of $Q_x(\cdot)^2$. For any $P \in \sP(\sY)$ we have 
  \begin{align*}
    Q_x(P)^2 = \iint \kappa(y, y') \ P(dy) \, P(dy')
    - 2 \iint \kappa(y, y') \ P(dy) \, F_x(dy')
    + \iint \kappa(y, y') \ F_x(dy) \, F_x(dy').
  \end{align*}
  Now, suppose $F_N \to P$ weakly in $\sP(\sY)$. Because $\sY$ is a separable Banach space, $F_N \otimes F_N \to P \otimes P$ and $F_N \otimes F_x \to P \otimes F_x$ weakly on $\sY \times \sY$. Because $\kappa$ is bounded and continuous on $\sY \times \sY$, the integrals above evaluated at $Q_x(F_N)^2$ converge so that $Q_x(F_N)^2 \to Q_x(P)^2$. The argument with $(\sY, \|\cdot\|_{\sY})$ replaced with $(C, \|\cdot\|_\infty)$ is identical.
\end{proof}

\begin{lemma}[Tightness of the Laws]
  \label{lemma:tightness-2}
  Suppose that Assumptions~\ref{assumption:1-rf}---\ref{assumption:5-tightness} hold. Let $G_N \in \sP\{\sP(\sY)\}$ denote the law of $\Fhat_{x}$. Then the sequence of laws $\{G_N\}_{N = 1}^\infty$ is tight on $\sP(\sY)$ endowed with the weak topology. That is, for every $\delta > 0$ there exists a compact set $\sK \subseteq \sP(\sY)$ such that 
  \begin{align*}
    \limsup_{N \to \infty} G_N(\sK^c) \le \delta.
  \end{align*}
  If we additionally assume Assumption~\ref{assumption:6-ctightness} then this result holds for some $\sK$ chosen compact in $\sP(C)$ under the topology induced by $\|\cdot\|_\infty$ and $G_N$ interpreted as being in $\sP\{\sP(C)\}$.
\end{lemma}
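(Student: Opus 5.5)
The plan is to build the compact set $\sK$ from a sequence of compact sets in $\sY$ supplied by Lemma~\ref{lemma:tightness-1}, and then invoke Prohorov's theorem on $\sY$. Fix $\delta > 0$. For each $k \ge 1$, apply Lemma~\ref{lemma:tightness-1} with $\eta_k = 2^{-k}$ and $\delta_k = \delta 2^{-k}$. This gives a compact $K_k \subseteq \sY$ such that
\begin{math}
  \sup_{N} \Pr\{\Fhat_x(K_k^c) > 2^{-k}\} \le \delta 2^{-k}.
\end{math}
Define
\begin{align*}
  \sK = \bigcap_{k \ge 1} \left\{ P \in \sP(\sY) : P(K_k^c) \le 2^{-k} \right\}.
\end{align*}

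The first step is to show that $\sK$ is relatively compact in the weak topology. Every $P \in \sK$ satisfies $P(K_k^c) \le 2^{-k}$ for all $k$, so the family $\sK$ is uniformly tight. Since $\sY$ is a separable Banach space, hence Polish, Prohorov's theorem (\citealp{billingsley2013convergence}) gives relative compactness of $\sK$.

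The second step is closedness. Each $K_k$ is closed, so $K_k^c$ is open, and by the portmanteau theorem $P \mapsto P(K_k^c)$ is weakly lower semicontinuous. Hence $\{P : P(K_k^c) \le 2^{-k}\}$ is weakly closed, and so is the intersection $\sK$. Because $\sP(\sY)$ with the weak topology is metrizable (for example by the L\'evy--Prohorov metric), a closed, relatively compact set is compact.

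The third step is a union bound:
\begin{align*}
  G_N(\sK^c) = \Pr(\Fhat_x \notin \sK) \le \sum_{k \ge 1} \Pr\{\Fhat_x(K_k^c) > 2^{-k}\} \le \sum_{k\ge 1} \delta 2^{-k} = \delta
\end{align*}
for every $N$. This yields the $\limsup$ bound, in fact uniformly in $N$. Measurability of the event $\{\Fhat_x \in \sK\}$ holds because $\Fhat_x(K_k^c) = \sum_i w_i(x) 1(Y_i \notin K_k)$ is a measurable function of the data.

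For the $C([0,1])$ version, I would repeat the argument with the sets $K_k$ taken compact in $(C, \|\cdot\|_\infty)$, using the second part of Lemma~\ref{lemma:tightness-1} under Assumption~\ref{assumption:6-ctightness}. The space $C$ is also Polish, so Prohorov's theorem and the portmanteau theorem apply in the same way. One extra check is needed here: $\Fhat_x$ must be a genuine element of $\sP(C)$. Under case (ii) this is automatic. Under case (i), Lemma~\ref{lemma:tightness-1} places the relevant mass in $C$, and the $Y_i$ lying in $K_k \subseteq C$ give atoms in $C$.

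The main obstacle is the one-step application of Lemma~\ref{lemma:tightness-1} with a countable family $(\eta_k, \delta_k)$. We need the lemma's bound to hold uniformly in $N$ (it does, via the $\sup_N$), so that a single $\sK$ works for all $N$ simultaneously. The other delicate point is the compactness of $\sK$ itself, which rests on closedness via the portmanteau theorem rather than on compactness of each constraint set separately.
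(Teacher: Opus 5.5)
Your proof is correct and is essentially the paper's argument. It uses the same compact sets $K_k$ from Lemma~\ref{lemma:tightness-1} with $\eta = 2^{-k}$ and tolerance $\delta 2^{-k}$, and the same set $\sK = \bigcap_k \{P : P(K_k^c) \le 2^{-k}\}$, shown closed via the portmanteau theorem and compact via Prohorov. It also uses the same union bound, which, as you note, holds uniformly in $N$.

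One small caveat, shared with the paper, concerns your extra remark for case (i) of Assumption~\ref{assumption:6-ctightness}. That remark does not by itself show $\Fhat_x \in \sP(C)$, because observations with $Y_i \notin C$ can still receive small positive weight. The paper handles this the same way, by simply interpreting $G_N$ as a law on $\sP(C)$.
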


\begin{proof}
  We first proceed under Assumption~\ref{assumption:1-rf}---\ref{assumption:5-tightness}. Fix $\delta > 0$ and apply Lemma~\ref{lemma:tightness-1} with $\eta = 2^{-m}$ for each $m \ge 1$ and probability tolerance $\delta 2^{-m}$. This gives compact sets $K_m$ such that
  \begin{align*}
    \sup_N \Pr\{\Fhat_x(K_m^c) > 2^{-m}\} \le \delta 2^{-m}.
  \end{align*}
  Define the set
  \begin{align*}
    \sK = \bigcap_{m = 1}^\infty \left\{ P \in \sP(\sY) : P(K_m) \ge 1 - 2^{-m} \right\}.
  \end{align*}
  We will show that $\sK$ is compact and that $\limsup_{N \to \infty} G_N(\sK^c) \le \delta$. First, by the Portmanteau theorem (\citealp{billingsley2013convergence} Theorem~2.1) the mapping $P \mapsto P(K_m)$ is upper semi-continuous because $K_m$ is closed; hence each set $\{P \in \sP(\sY) : P(K_m) \ge 1 - 2^{-m}\}$ is closed and therefore $\sK$ is closed.

  By Prohorov's theorem (\citealp{billingsley2013convergence} Theorem~5.1), compactness of $\sK$ follows from uniform tightness of $\sK$. To show this, given $\epsilon > 0$ we choose $K_m$ as our compact set with $m$ such that $2^{-m} < \epsilon$. Then every $P \in \sK$ satisfies $P(K_m^c) < \epsilon$ by the definition of $\sK$.

  It remains to show that $\limsup_{N \to \infty} G_N(\sK^c) \le \delta$. But
  \begin{align*}
    \limsup_{N \to \infty} G_N(\sK^c)
    &= \limsup_{N \to \infty} G_N\left( \bigcup_{m \ge 1} \{\Fhat_x(K_m^c) > 2^{-m}\}  \right)
    \\&
    \le \sum_{m = 1}^\infty \sup_{N} G_N\{\Fhat_x(K_m^c) > 2^{-m}\}
    \\&
    \le \sum_{m = 1}^{\infty} \delta 2^{-m} = \delta.
  \end{align*}
  Because $\delta > 0$ was arbitrary, this proves tightness of the laws $\{G_N\}$ on $\sP(\sY)$.

  Under Assumption~\ref{assumption:6-ctightness} (ii), the result is immediate as all the sets $\sK$ are compact with respect to $C$ under $\|\cdot\|_\infty$. Under Assumption~\ref{assumption:6-ctightness} (i), the argument proceeds in exactly the same fashion by taking the $K_m$'s to be compact in $C([0,1])$ using Lemma~\ref{lemma:tightness-1} and taking $\sK = \bigcap_{m = 1}^\infty \{P \in \sP(C([0,1])) : P(K_m) \ge 1 - 2^{-m}\}$ and concluding tightness of the laws $\{G_N\}$ on $\sP(C([0,1]))$.
\end{proof}

\begin{theorem}[Weak consistency in probability]
  Suppose that Assumptions~\ref{assumption:1-rf}---\ref{assumption:5-tightness} hold. Then $\Fhat_x \to F_x$ weakly-in-probability as random elements in $\sP(\sY)$. If Assumption~\ref{assumption:6-ctightness} holds as well, then $\Fhat_x \to F_x$ weakly-in-probability as random elements in $\sP(C[0,1])$.
\end{theorem}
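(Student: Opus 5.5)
The plan is a subsequence argument on the laws $G_N$ of $\Fhat_x$, viewed as random elements of the Polish space $\sP(\sY)$ with the weak topology. Let $d$ be a metric metrizing weak convergence on $\sP(\sY)$, e.g.\ the Prohorov or bounded-Lipschitz metric. Then $d(\Fhat_x, F_x) \cinp 0$ is equivalent to $G_N \to \delta_{F_x}$ weakly in $\sP\{\sP(\sY)\}$, because convergence in distribution to a constant is the same as convergence in probability. It therefore suffices to show that every subsequence of $\{G_N\}$ has a further subsequence converging weakly to $\delta_{F_x}$.

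Step one is relative compactness. By Lemma~\ref{lemma:tightness-2}, $\{G_N\}$ is tight on $\sP(\sY)$. Since $\sP(\sY)$ is Polish when $\sY$ is separable Banach, Prohorov's theorem gives, along any subsequence, a further subsequence $N_k$ and some $G \in \sP\{\sP(\sY)\}$ with $G_{N_k} \to G$ weakly.

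Step two identifies $G$. By Lemma~\ref{lemma:mmd-continuity}, $Q_x(P) = \MMD(P, F_x)$ is continuous on $\sP(\sY)$, and it is bounded since $\kappa$ is bounded. The continuous mapping theorem then gives that $Q_x(\Fhat_x)$, along $N_k$, converges in distribution to $Q_x(\Pi)$ with $\Pi \sim G$. Theorem~\ref{thm:mmd} gives $Q_x(\Fhat_x) = \|\mu(\Fhat_x) - \mu(F_x)\|_{\sH} \cinp 0$, so $Q_x(\Pi) = 0$ almost surely under $G$. Because $\kappa$ is characteristic (Assumption~\ref{assumption:4-kernel}), $\MMD(\Pi, F_x) = 0$ forces $\Pi = F_x$. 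Hence $G = \delta_{F_x}$. The limit does not depend on the subsequence, so the whole sequence satisfies $G_N \to \delta_{F_x}$, and therefore $d(\Fhat_x, F_x) \cinp 0$.

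For the $C[0,1]$ statement I would repeat the argument in $\sP(C)$ under $\|\cdot\|_\infty$, which is Polish because $C$ is separable. Tightness comes from the second half of Lemma~\ref{lemma:tightness-2}, and continuity of $Q_x$ on $\sP(C)$ from the second half of Lemma~\ref{lemma:mmd-continuity}. The step I expect to be the main obstacle is making sense of $\kappa$ and $\MMD$ on $C$. Under Assumption~\ref{assumption:6-ctightness}(i), $\kappa$ restricted to $C \times C$ is continuous under $\|\cdot\|_\infty$ because the inclusion $\iota$ is continuous. It remains characteristic on $\sP(C)$ because $\iota$ is an injective continuous map between Polish spaces, so it is Borel-bimeasurable (Lusin--Souslin); hence pushforwards $\iota_\# P$ determine $P$, and the $\MMD$ computed in $C$ equals the $\MMD$ computed in $\sY$. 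One must also check that $\Fhat_x$ and $F_x$ are concentrated on $C$, which follows from the compact sets $K_m \subseteq C$ of Lemma~\ref{lemma:tightness-1}: these give $F_x(C) = 1$ in the limit, and $\Fhat_x$ places its mass on the observed curves, which lie in $C$ almost surely. Under Assumption~\ref{assumption:6-ctightness}(ii), I would instead take the $\sY$-result and push it forward through the continuous embedding $\sY \hookrightarrow C$ using the continuous mapping theorem. This is valid because the map $P \mapsto \iota_\# P$ is weakly continuous.
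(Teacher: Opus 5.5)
Your proposal is correct and follows the paper's proof essentially step for step. Both use tightness of the laws $G_N$ from Lemma~\ref{lemma:tightness-2}, Prohorov extraction of a convergent sub-subsequence, and identification of the limit as $\delta_{F_x}$ via continuity of $P \mapsto \MMD(P,F_x)$, Theorem~\ref{thm:mmd}, and characteristicness, followed by the subsequence principle, with case (ii) of Assumption~\ref{assumption:6-ctightness} handled by pushing forward through the embedding. Your extra checks in case (i) fill in details the paper dismisses as ``identical with $\sY$ replaced by $C$'': continuity and characteristicness of $\kappa$ on $C$ via Lusin--Souslin, and concentration of $\Fhat_x$ and $F_x$ on $C$ (for which $F_z(C)=1$ follows directly from Assumption~\ref{assumption:5-tightness} with $K_\epsilon \subseteq C$, with no limiting argument needed).
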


\begin{proof}
  We prove the first claim. By Lemma~\ref{lemma:tightness-2}, the laws $G_N$ form a tight family on the Polish space $\sP(\sY)$. Let $N_r$ be an arbitrary subsequence. 
  By tightness, there exists a further subsequence $N_{r_j}$ and a $\sP(\sY)$-valued random element $Q$ such that $G_{N_{r_j}}$ converges weakly to the law of $Q$ (Theorem~5.1 of \citealp{billingsley2013convergence}). By Lemma~\ref{lemma:mmd-continuity} and the continuous mapping theorem, $\DMMD(\Fhat_{x}, F_x) \cind \DMMD(Q, F_x)$ along the subsequence $N_{r_j}$. On the other hand, Theorem~\ref{thm:mmd} states that $\DMMD(\Fhat_x, F_x) \cinp 0$; in particular, this holds along $N_{r_j}$, and so $\DMMD(Q, F_x) = 0$ almost surely. Because $\kappa$ is assumed to be characteristic, we have $Q = F_x$ almost surely; hence, for every subsequence $N_r$ there exists a further subsequence $N_{r_j}$ such that $G_{N_{r_j}} \to \delta_{F_x}$ weakly. By the Urysohn subsequence principle (Theorem~2.6 of \citealp{billingsley2013convergence}), it follows that $G_{N} \to \delta_{F_x}$ weakly.

  The argument for the second claim in case (i) of Assumption~\ref{assumption:6-ctightness} is identical, with $\sY$ replaced by $C([0,1])$ throughout. For case (ii), because $\sY$ is continuously embeddable into $C([0,1])$, the result follows from the first claim and the continuous mapping theorem.
\end{proof}

\section{Data Generating Mechanism for Section~\ref{sec:manifold-faithfulness-experiment}}
\label{sec:appendix-dgp}

The functional response was generated on an equally spaced grid $t_1,\ldots,t_G \in [0,1]$ with $G=101$, and each $Y_i$ was normalized to be a density. For each subject we sampled covariates
\[
  X_i = (X_{i1}, X_{i2}), \qquad
  X_{i1} \sim \operatorname{Unif}(-1,1), \quad
  X_{i2} \sim \Normal(0,1),
\]
We first draw a latent mixture indicator
\begin{math}
  Z_i \sim \operatorname{Bernoulli}\{p(X_{i1})\}
  \text{ and }
  p(x) = \{1+\exp(2.2x)\}^{-1},
\end{math}
where $Z_i=1$ corresponds to a ``morning peak'' in physical activity. The baseline peak location was $c_i=0.32$ if $Z_i=1$ and $c_i=0.68$ otherwise. The realized peak location and width were then
\[
  \mu_i =
  \Pi_{[0.15,0.85]}
  \left\{
    c_i + \epsilon_{\mu i}
  \right\},
  \qquad
  \epsilon_{\mu i} \sim
  \Normal\left(0,\,
    \left[0.015 + 0.055\{1+\exp(-2X_{i1})\}^{-1}\right]^2
  \right),
\]
and
\[
  \sigma_i =
  \Pi_{[0.035,0.11]}
  \left[
    \exp\{\log(0.055) + 0.25X_{i1} + \epsilon_{\sigma i}\}
  \right],
  \qquad
  \epsilon_{\sigma i} \sim \Normal(0,0.10^2),
\]
where $\Pi_A$ denotes projection onto the interval $A$. Given $(\mu_i,\sigma_i)$,
we formed the unnormalized profile
\[
  q_i(t)
  =
  \left[
    \exp\left\{-\frac{1}{2}\left(\frac{t-\mu_i}{\sigma_i}\right)^2\right\}
    +
    0.25
    \exp\left\{-\frac{1}{2}
      \left(\frac{t-\min(\mu_i+0.08,0.95)}{1.4\sigma_i}\right)^2
    \right\}
    + 0.02
  \right]
  \exp\{W_i(t)\}.
\]
The smooth perturbation $W_i$ was generated from the finite Fourier expansion
\[
  W_i(t)
  =
  \sum_{k=1}^6 a_{ik}\sin(2\pi kt)
  +
  \sum_{k=1}^6 b_{ik}\cos(2\pi kt),
\]
with independent coefficients
\[
  a_{ik}, b_{ik}
  \sim
  \Normal\left(
    0,\,
    \left[
      \frac{0.06 + 0.08\{1+\exp(-X_{i1})\}^{-1}}{k^{3/2}}
    \right]^2
  \right).
\]
Finally, the observed curve was
\[
  Y_i(t)
  =
  \frac{\max\{q_i(t), 0\}}{\int_0^1 \max\{q_i(u), 0\} \ du}
\]
Thus $X_{i1}$ changes the conditional law through the morning/evening mixture probability, the amount of phase variation, the width of the peak, and the magnitude of smooth multiplicative shape perturbations.

\section{More on Kernel Choice}

The kernel used by the \fdrf\ should be viewed as part of the statistical specification of the problem of interest. In particular, the kernel determines which aspects of the conditional law are made visible to the splitting rule. An $L_2$ Gaussian kernel gives a natural baseline because it compares curves through their overall level and shape, but it can underweight differences that are scientifically important yet small in $L_2$, such as changes in roughness, local oscillation, or derivative behavior. Transform kernels of the form \eqref{eq:wynne} provide a simple way to target these features: rather than comparing $y$ and $y'$ directly, one compares $T(y)$ and $T(y')$ in a Hilbert space chosen to encode the desired geometry. When $T$ is injective and sufficiently regular, this preserves the characteristic property of the Gaussian kernel, while allowing the analyst to emphasize features of the response that are most relevant for the application.

\begin{table}[t]
\small
\centering
\begin{tabularx}{\textwidth}{>{\raggedright\arraybackslash}p{0.18\textwidth} >{\raggedright\arraybackslash}p{0.29\textwidth} X}
\toprule
\textbf{Option} & \textbf{Kernel / transform} & \textbf{When to use it} \\
\midrule
Identity / $L_2$ &
$T_{\rm id}(u)=u$. &
Good default for global level and shape changes.  Characteristic on the usual Hilbert-space setup when the identity map is admissible.  It can be insensitive to roughness or covariance changes that have small $L_2$ signal. \\
\addlinespace
Sobolev / derivative &
$T_\lambda(u)=(\sqrt\lambda u,\sqrt{1-\lambda}\,Du)$, equivalently $\|u-v\|_{H^1_\lambda}^2=\lambda\|u-v\|_2^2+(1-\lambda)\|D(u-v)\|_2^2$. &
Strong candidate for the main \fdrf\ analysis.  It detects shifts in level, peak timing, smoothness, and roughness.  Including the $u$ component preserves injectivity; choosing $\lambda\approx 1/2$ after scale normalization is a reasonable default. \\
\addlinespace
FPCA / low-rank operator &
$T_F(u)=\sum_{r=1}^F a_r\langle u,e_r\rangle_2 e_r$, with empirical eigenfunctions $e_r$ and weights $a_r$, often $a_r=\lambda_r^{1/2}$. &
Useful when curves are noisy or sparsely observed and most relevant variation is low-dimensional.  Choose $F$ by variance explained, e.g. 90--95\%, or by held-out score.  Because finite $F$ discards directions, this is a pragmatic non-characteristic kernel and should be reported as a sensitivity analysis. \\
\addlinespace
Square / moment expansion &
$T_{\rm sqr}(u)=(u,u^2)$, with separate bandwidths for $u$ and $u^2$: $\exp\{-\|u-v\|_2^2/(2\gamma_1^2)-\|u^2-v^2\|_2^2/(2\gamma_2^2)\}$. &
Targets heteroskedasticity, amplitude variation, and second-order differences that a mean-sensitive kernel may miss.  Requires enough integrability, e.g. $u\in L_4$, and benefits from centering/scaling before squaring. \\
\addlinespace
Smoothed spectral / operator &
$T(u)=C^{1/2}u$ or $C_{k_0}u$, where $C$ downweights high frequencies.  This includes cosine-exponential or other smoothing operators. &
Useful when the relevant signal is smooth and high-frequency noise would dominate $L_2$.  If infinitely many positive spectral weights are retained, $T$ can remain injective; finite truncations trade characteristicness for stability. \\
\addlinespace
Integral / random-projection &
For $C\in\mathcal L_1^+(\sY)$ and scalar kernel $k_0$, set $\kappa_{C,k_0}(u,v)=\int k_0(\langle u,h\rangle,\langle v,h\rangle)d\mathcal N_C(h)$.  If $k_0(a,b)=\cos(a-b)$, this equals $\exp\{-\langle C(u-v),u-v\rangle/2\}$. &
Appealing for dense grids, irregular observations, or computational shortcuts via random features.  It compares distributions through many random linear probes of the function.  With an injective $C$ and suitable translation-invariant $k_0$, it can be characteristic. \\
\bottomrule
\end{tabularx}
\caption{Summary of possible choices of kernel function for building the \fdrf.\label{tab:kernels}}
\end{table}

In Table~\ref{tab:kernels}, we list out some options for $T(\cdot)$, with most of these also suggested by \citet{wynne2022kernel}, who also studies the impact of the choice of transformation on the statistical power in the context of two-sample testing. Transforms, such as Sobolev transforms or moment expansions, can make the forest responsive to distributional changes in smoothness, timing, or scale that would be muted under the identity transform. On the other hand, low-rank or smoothed transforms, such as FPCA-based or operator-based kernels, may improve stability when curves are noisy, sparsely observed, or contain high-frequency measurement error; finite truncations also generally sacrifice characteristicness because some directions in the function space are discarded. For this reason, we find the Sobolev kernel to be a strong default when derivative information is meaningful, and also recommend both reporting the $L_2$ kernel result as a baseline and using more specialized kernels as sensitivity analyses tied to specific scientific questions.

\subsection{Random Features for Non-Gaussian Radial Kernels}
\label{sec:non-gaussian-random-features}

The random-feature representation is not specific to Gaussian kernel, but can also be applied to a broad a broad class of radial kernels. Let $T:\sY\to\sG$ map the response into a real separable Hilbert space. If
\begin{align}
  \label{eq:scale-mixture-kernel}
  \kappa_S(y,y')
  =
  \E_S\left[
    \exp\left\{-\frac{S \|y - y'\|_2^2}{2\ell^2}\right\}
  \right]
\end{align}
for a nonnegative mixing variable $S$, then the usual cosine features remain valid after multiplying the Gaussian projection by $\sqrt{S}$ \citep{rahimi2007random,langrene2026spectral}.

\citet{rahimi2007random} list several kernels htat have simple mixing laws. The rational-quadratic kernel $\kappa_{RQ}(y,y') = \{1 + \|y - y'\|_2^2 / (2\alpha \ell^2)\}^{-1}$ can be obtaiend by taking $S \sim \Gam(\alpha, \alpha)$.
The rational-quadratic mixture is a standard option \citep[Section~4.2]{rasmussen2006gaussian}, and the Cauchy kernel is its $\alpha = 1$ case. The Laplace kernel can be obtained by taking $S = Z_0^{-2}$ with $Z_0 \sim \Normal(0,1)$.

To apply these alternate kernels, for a quadrature grid $(t_r,\Delta_r)_{r=1}^R$ in $\sG=L_2([0,1])$, we draw $S_j$ from the appropriate law above, $Z_{rj}\iid\Normal(0,1)$, and $b_j\sim\Uniform(0,2\pi)$, then use
\begin{align}
  \xi_{rj}=\sqrt{S_j}Z_{rj},
  \qquad
  \varphi_j(y)
  =
  \sqrt{2}\cos\left\{
    \frac{1}{\ell}
    \sum_{r=1}^R \sqrt{\Delta_r}\,\xi_{rj}[T(y)](t_r)
    +b_j
  \right\}.
  \label{eq:scale-mixture-grid-random-feature}
\end{align}
For an equally spaced grid, $\sqrt{\Delta_r} = R^{-1/2}$. Product-space transforms are handled by concatenating their weighted coordinates. The shared $S_j$ makes the coefficients marginally dependent and should be sampled once per feature. Their averaged squared empirical mean differences give the MMD for \eqref{eq:scale-mixture-kernel} by the standard random-feature argument \citep{rahimi2007random}.

\begin{figure}[p]
  \centering
  \includegraphics{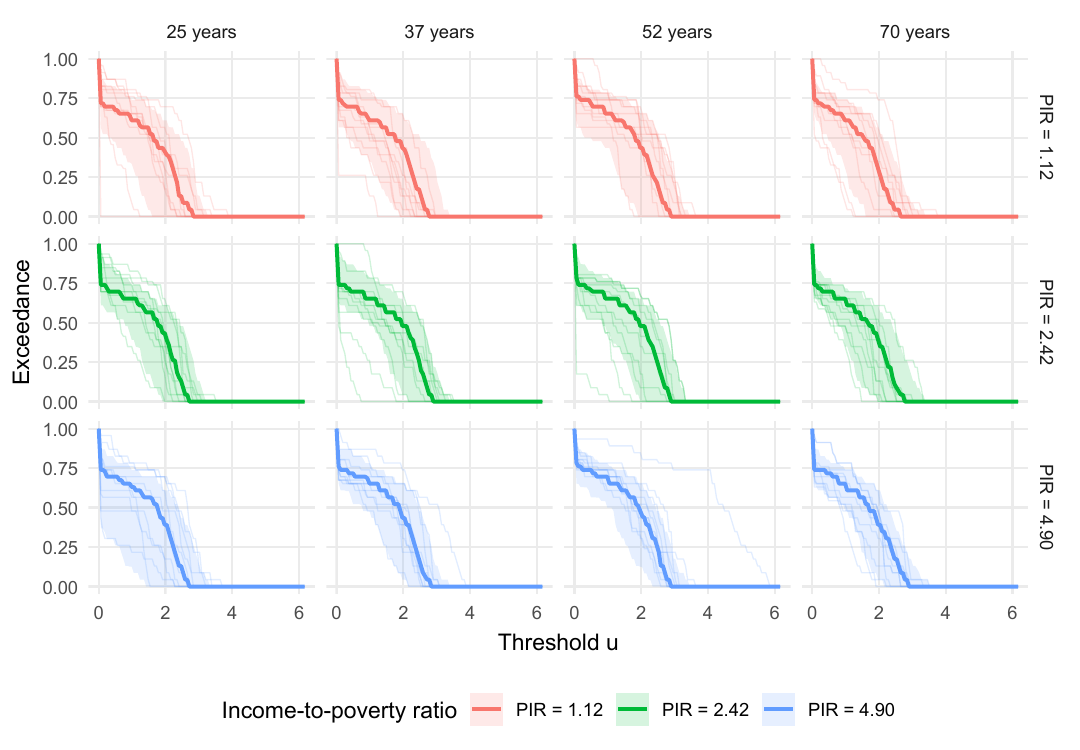}
  \caption{Predictive distribution of the exceedance function in different strata as a function of the activity threshold $u$.}
  \label{fig:exceed}
\end{figure}

\section{Algorithms}
\label{sec:appendix-algorithms}

This section collects the implementation details behind the \fdrf. Algorithm~\ref{alg:fdrf-fit-predict} describes how to implement the method as described in Sections~\ref{sec:random-forest} and~\ref{sec:estimating-conditional-distributions}: grow many honest trees, drop a new covariate value through each tree, average the terminal-leaf weights, and return a weighted empirical distribution over the observed functions. Algorithm~\ref{alg:fdrf-grow-tree} describes how a single tree is grown. 

\begin{algorithm}[t]
\caption{Fitting and predicting with the functional distributional random forest\label{alg:fdrf-fit-predict}}
\begingroup
\footnotesize
\begin{enumerate}
\setlength{\itemsep}{0.15em}
\setlength{\parsep}{0pt}
\setlength{\parskip}{0pt}

\item
\textbf{Inputs.}
Data $\{(X_i,Y_i)\}_{i=1}^N$, number of trees $M$, subsample size $s_N$,
honesty fraction $\rho$, candidate-variable parameter $m_{\rm try}$, minimum
node size $n_{\min}$, balance parameter $\alpha$, kernel $\kappa$, transform
$T:\sY\to\sG$, and number of random features $B$.

\item
\textbf{Preprocess.}
Compute $T(Y_i)$ for $i=1,\ldots,N$. If the length scale is not supplied, set
$\ell=\median\{\|T(Y_i)-T(Y_j)\|_{\sG}:i<j\}$.

\item
\textbf{Grow trees.}
For each $m=1,\ldots,M$, draw $S_m\subseteq\{1,\ldots,N\}$ without replacement,
with $|S_m|=s_N$. For each tree, split $S_m$ into an honest split sample
$S_m^{\rm split}$ and estimation sample $S_m^{\rm est}$, with
$|S_m^{\rm split}|=\lfloor\rho s_N\rfloor$. Starting from the root cell
$R_{b_0}=\sX$, grow a tree using Algorithm~\ref{alg:fdrf-grow-tree}, using
only $S_m^{\rm split}$ to choose splits.

\item
\textbf{Populate terminal leaves.}
After the tree structure is fixed, attach the honest estimation observations to
the leaves. That is, for each terminal leaf $\ell\in\Leaves(\Tree_m)$, store
$\{i\in S_m^{\rm est}:X_i\in R_\ell\}$.

\item
\textbf{Predict at \(x\).}
Let $\ell_m(x)$ be the leaf of tree $m$ containing $x$, and define
$I_m(x)=\{i\in S_m^{\rm est}:X_i\in R_{\ell_m(x)}\}$ and
$n_m(x)=|I_m(x)|$. Set
\[
  w_{m,i}(x)=\frac{1\{i\in I_m(x)\}}{n_m(x)},
  \qquad
  w_i(x)=\frac{1}{M}\sum_{m=1}^M w_{m,i}(x).
\]

\item
\textbf{Return the conditional law.}
Return
$\Fhat_x(\cdot)=\sum_{i=1}^N w_i(x)\delta_{Y_i}(\cdot)$.

\end{enumerate}
\endgroup
\end{algorithm}

\begin{algorithm}[t]
\caption{Growing one \fdrf\ tree}
\label{alg:fdrf-grow-tree}
\begingroup
\footnotesize
\begin{enumerate}
\setlength{\itemsep}{0.15em}
\setlength{\parsep}{0pt}
\setlength{\parskip}{0pt}

\item
\textbf{Initialize a node.}
At node $b$, let $R_b$ be the current cell,
$I_b=\{i\in S_m^{\rm split}:X_i\in R_b\}$, $n_b=|I_b|$, and
$Y_b=\{Y_i:i\in I_b\}$.

\item
\textbf{Select Variables.}
Draw $q_b=\min\{\max[\operatorname{Poisson}(m_{\rm try}),1],P\}$ candidate coordinates and let $J_b\subseteq\{1,\ldots,P\}$ be the selected set.

\item
\textbf{Construct the candidate split set.}
For $j\in J_b$, consider cutpoints $c$ between consecutive distinct values of
$X_{ij}$ among $i\in I_b$. Let $\mathcal A_b$ contain the splits satisfying the $\alpha$-regularity condition.

\item
\textbf{Score candidate splits.}
For each $(j,c)\in\mathcal A_b$, define
$R_{bL}(j,c)$, $R_{bR}(j,c)$, $Y_{bL}(j,c)$, and $Y_{bR}(j,c)$ as in
Section~\ref{sec:random-forest}. Choose $(j_b,c_b)$ by the MMD split rule
\eqref{eq:fdrf-split}. When exact MMD is computationally infeasible, replace
$\DMMD$ in \eqref{eq:fdrf-split} by the random-feature approximation based on
$\{\varphi_{W_j,b_j}\}_{j=1}^B$ in \eqref{eq:random-features}.

\item
\textbf{Split or stop.}
If $\mathcal A_b$ is empty, declare $b$ terminal. Otherwise, store
$(j_b,c_b)$, set
$R_{bL}=R_b\cap\{x:x_{j_b}\le c_b\}$ and
$R_{bR}=R_b\cap\{x:x_{j_b}>c_b\}$, and recursively apply the same steps to the
two child nodes.

\end{enumerate}
\endgroup
\end{algorithm}

Algorithm~\ref{alg:fdrf-confidence-intervals} gives the little-bags procedure used for uncertainty quantification for plug-in functionals of the conditional law, as described in Section~\ref{sec:inference-target-functionals}. This approach uses the half-sampling strategy for computing standard errors. We use an honesty fraction of $\rho = 1/2$ by default as well. For details on the validity of this approach, see \citet{naf2023confidence} and \citet{sexton2009standard}. A variant of this strategy is also used by \citet{athey2019generalized} to estimate a component of the sandwich matrix variance approximation for generalized random forests.

We again note that the standard errors produced by Algorithm~\ref{alg:fdrf-confidence-intervals} overstate the sampling variability due to not accounting for within-bag variability. We describe how to account for this if desired, however we again note that empirically we found this to reduce the coverage of intervals substantially and routinely induce pathologies such as negative variance estimates. A simple approach is to approximate $\widehat V_{\texttt{between}}(x) \approx \widehat V_{\infty}(x) + B(x) / M_{\texttt{bag}}$ where $B(x) / M_{\texttt{bag}}$ is the within-bag variance. We can approximate $\widehat V_{\infty}(x)$ by fitting a second forest with a bag size of $2 M_{\texttt{bag}}$ and setting
\begin{math}
  \widehat V_{\infty}(x) \approx 2V_{\texttt{between}}(2M_{\texttt{bag}}) - V_{\texttt{between}}(M).
\end{math}
We can then take $\sqrt{\widehat V_\infty(x)}$ as an approximation of the standard error. This approach can be improved further to account for the fact that the total number of trees is also finite by adding the correction $\widehat B(x) / T$ where $\widehat B(x) = 2 M \{\widehat V_{\texttt{between}}(M) - \widehat V_{\texttt{between}}(2M)\}$. To address negative variance estimates, \citet{athey2019generalized} recommend using a Bayesian shrinkage estimator to ensure positivity.

\begin{algorithm}[t]
\caption{Pointwise confidence intervals for \fdrf\ functionals}
\label{alg:fdrf-confidence-intervals}
\begingroup
\footnotesize
\begin{enumerate}
\setlength{\itemsep}{0.15em}
\setlength{\parsep}{0pt}
\setlength{\parskip}{0pt}

\item \textbf{Inputs.} Training data $\{(X_i,Y_i)\}_{i=1}^N$, target covariate values $x_1,\ldots,x_K$, a scalar functional $\theta:\sP(\sY)\to\Reals$, confidence level $1-\alpha$, and little-bag parameters $(L, M)$.

\item \textbf{Grow little bags.} For each bag $b=1,\ldots,L$ construct $M$ honest \fdrf\ trees based on a half-sample. Let $\Fhat^{(b,t)}_{x_k}$ denote the tree-level conditional law at $x_k$ from tree $t$ in bag $b$.

\item
\textbf{Compute full-forest estimates.}
For each target value $x_k$, average all tree-level laws across all bags,
\begin{math}
  \Fhat_{x_k}
  =
  \frac{1}{L M}
  \sum_{b=1}^{L}
  \sum_{t=1}^{M}
  \Fhat^{(b,t)}_{x_k},
\end{math}
and set
\begin{math}
  \widehat\theta(x_k)=\theta(\Fhat_{x_k}).
\end{math}

\item
\textbf{Compute bag-level estimates.}
For each bag $b=1,\ldots,L$ and target value $x_k$, form
\begin{math}
  \Fhat^{(b)}_{x_k}
  =
  \frac{1}{M}
  \sum_{t=1}^{M}
  \Fhat^{(b,t)}_{x_k},
  \
  \widehat\theta^{(b)}(x_k)
  =
  \theta(\Fhat^{(b)}_{x_k}).
\end{math}
\item
\textbf{Compute between-bag variability.}
Let
\begin{math}
  \widetilde\theta(x_k)
  =
  \frac{1}{L}
  \sum_{b=1}^{L}
  \widehat\theta^{(b)}(x_k).
\end{math}
Compute
\[
  \widehat V(x_k)
  =
  \frac{1}{L-1}
  \sum_{b=1}^{L}
  \left\{
    \widehat\theta^{(b)}(x_k)-\widetilde\theta(x_k)
  \right\}^2 .
\]

\item
\textbf{Return pointwise confidence intervals.}
Return
\begin{math}
  \widehat\theta(x_k)
  \pm
  z_{1-\alpha/2}
  \sqrt{\widehat V(x_k)}
\end{math}, for
\begin{math}
  k=1,\ldots,K
\end{math}.

\end{enumerate}
\endgroup
\end{algorithm}

\section{Additional Simulation Results}

\paragraph{Visualization of Conditional Distribution Estimates}

To visualize the accuracy of the \fdrf\ for the simulation settings in Section~\ref{sec:simulation-study}, a single dataset was generated from each of the DGPs. Conditional distributions from both the true distribution and the fitted model are given in Figure~\ref{fig:dgm-faceted} and Figure~\ref{fig:sobolev-drf-faceted}, respectively. To provide a concrete predictive distribution for the \fdrf\ to target, we fixed $X_i$ so that $X_i^\top\beta = \eta$ by taking the entries of $X_i$ to be constant.  We see that the \fdrf\ is capable of accurately capturing the variability in the predictive distribution across all settings, and closely matches the shape and spread of the true predictive distributions. It is also capable of capturing varying degrees of smoothness, spread, and so forth.

\begin{figure}[p]
  \centering
  \hspace*{-4em}
  \includegraphics[width=.9\textwidth]{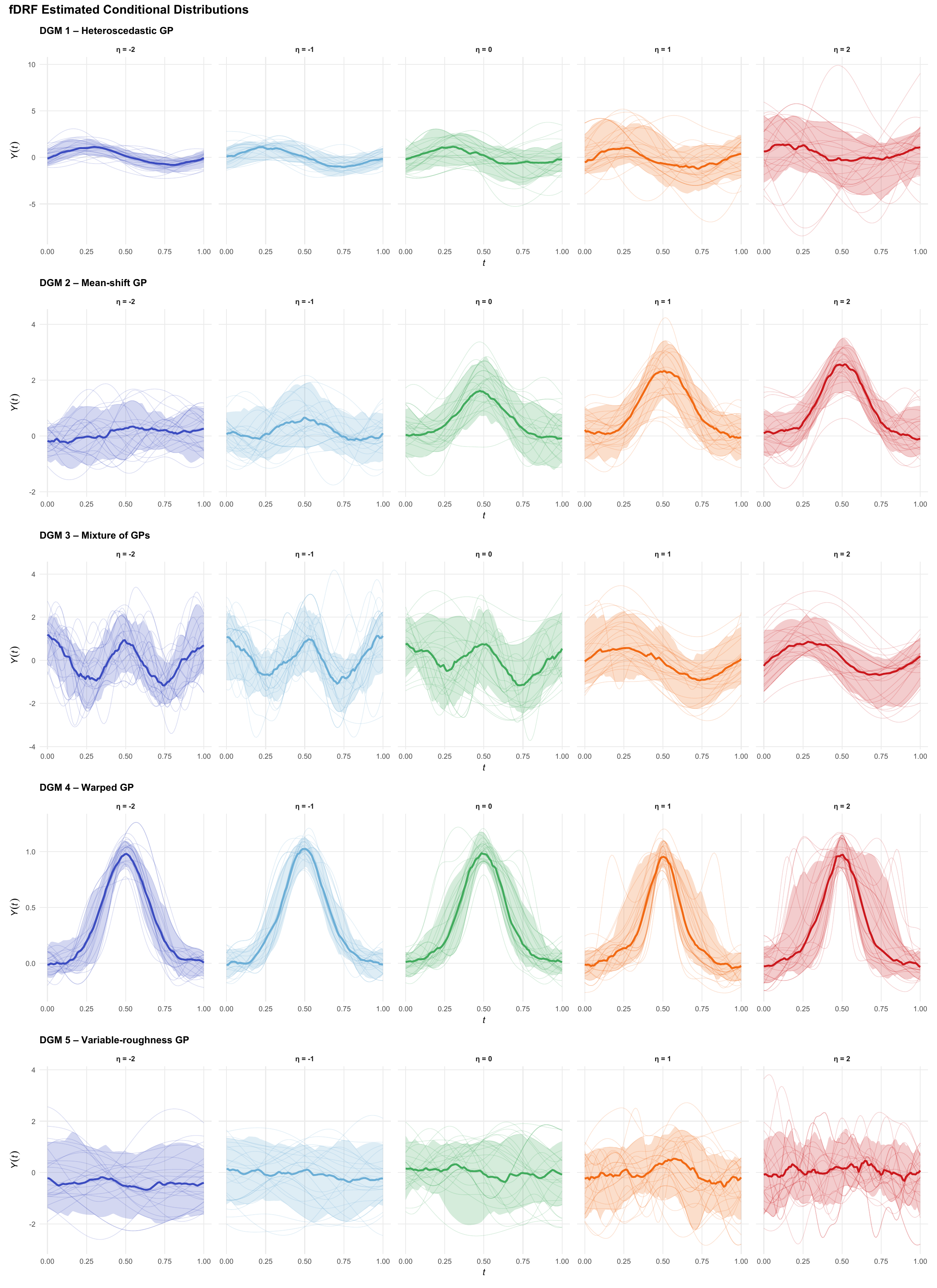}
  \caption{Estimated conditional distributions from the \fdrf\ for a single simulated dataset for each of the data generating mechanisms.}
  \label{fig:sobolev-drf-faceted}
\end{figure}

\paragraph{Results Using $L_2$-Based Scoring Rules}

We repeated the comparison in Figure~\ref{fig:simulation} using $T(y)=y$ rather than $T(y)=(y,Dy)$ to construct the energy and Gaussian kernel scores. The data-generating mechanisms, fitted methods, and other simulation settings were unchanged.

Results are summarized in Figure~\ref{fig:simulation-l2}. The conclusions for the heteroskedastic, mixture, and warped Gaussian-process settings are qualitatively similar to those obtained using the Sobolev-based scores. The \fdrf\ methods remain among the leading methods in these settings, although the pairwise score differences are generally smaller and fewer pairwise comparisons receive an $(H,M,L)$ tag. The three \fdrf\ variants also perform more similarly, with the FPCA-based \fdrf\ becoming relatively more competitive in the mixture and warped settings.

The most substantial differences occur in the mean-shift and roughness settings. For the mean-shift Gaussian process, the functional linear model has the best average rank under both $L_2$-based scores, and the pairwise comparisons favor it over each of the \fdrf\ variants. This is consistent with the fact that the distributional change in this setting occurs through the conditional mean, which the functional linear model can approximate well. For the Roughness DGP, the strong advantage of the Sobolev-based \fdrf\ under the Sobolev scores disappears almost entirely. Although the functional linear model has the best average rank under the $L_2$-based scores, the score differences are small and none of the pairwise comparisons receives an $(H,M,L)$ tag. We therefore interpret this result as a failure of the $L_2$-based scores to distinguish methods that capture conditional roughness from those that do not, rather than as evidence that the functional linear model accurately captures the changes in roughness.

Overall, using the $L_2$ norm makes the fitted methods appear more similar and reduces the apparent improvement of the \fdrf\ over the functional linear model. This comparison provides further evidence that Sobolev-based scoring rules are preferable when differences in smoothness and roughness are important features of the functional distribution.

\begin{figure}[t]
  \centering
  \includegraphics[width=.9\textwidth]{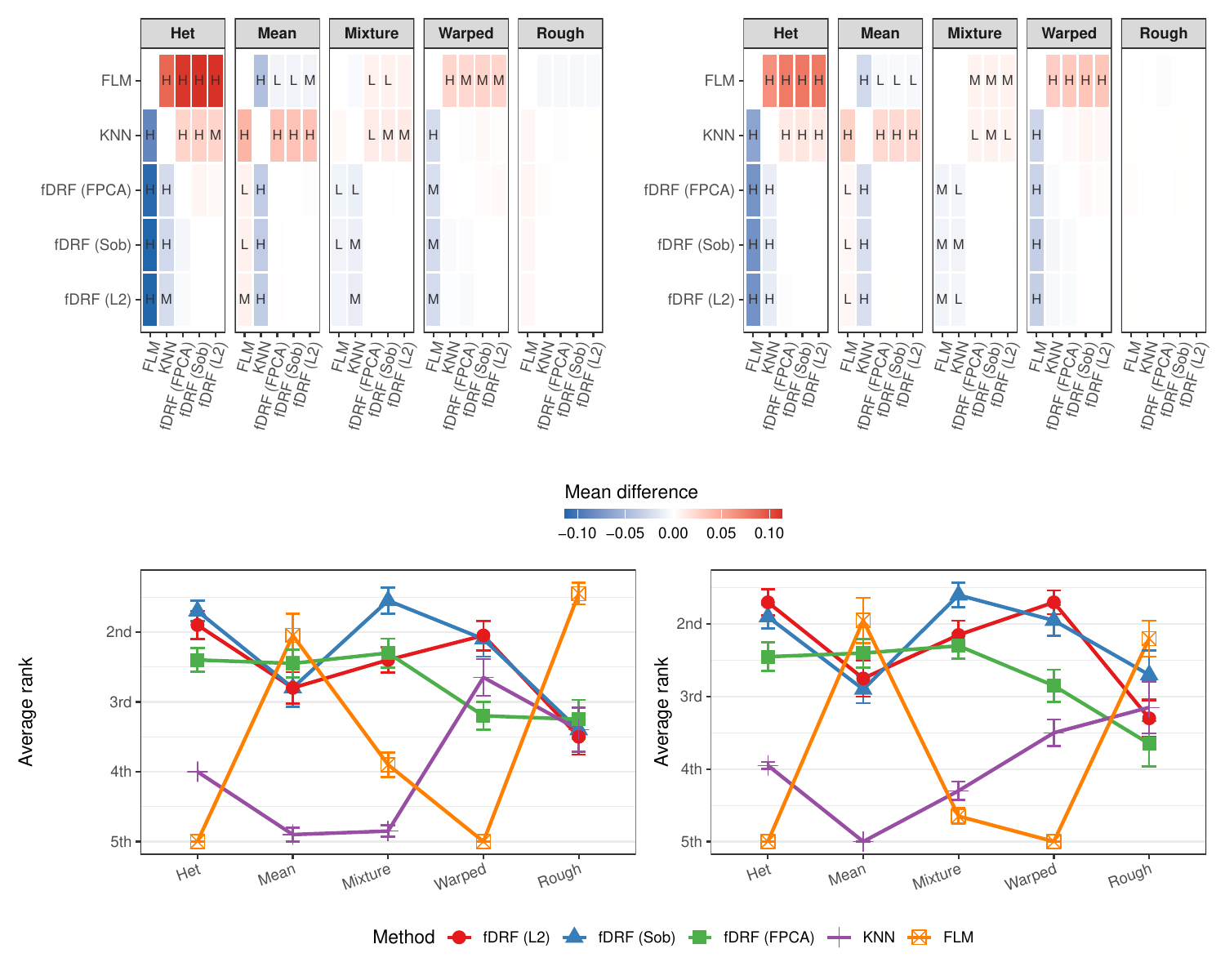}
  \caption{Simulation results using $L_2$-based scoring rules. Top: pairwise comparison heatmaps under the $L_2$ energy score (left) and the Gaussian kernel score constructed from the $L_2$ norm (right). Blue indicates that the row method performed better, while red indicates that the column method performed better; cells marked $(H,M,L)$ have median paired-test $P$-values below $0.001$, $0.01$, and $0.05$, respectively. Bottom: average ranking of each method in each simulation setting under the $L_2$ energy score (left) and Gaussian kernel score (right). Error bars show the mean rank $\pm$ one standard error across the 20 replicated datasets.}
  \label{fig:simulation-l2}
\end{figure}

\section{Additional Analysis of NHANES}

In this section we provide additional analyses for the NHANES dataset. In the main text, Figure~\ref{fig:nhanes-peak-effort-functional-intervals} summarizes peak time and maximal effort through four scalar functionals: the circular mean and circular standard deviation of $P_i$ and the mean and standard deviation of $M_i=\max_t Y_i(t)$. The full predictive distributions for $P_i$ and $M_i$ are given in Figure~\ref{fig:nhanes-peak-effort-predictive-violins}. We see that differences in the predictive distributions across ages and income strata are quite subtle for both $M_i$ and $P_i$, indicating that even within the strata of white/male/married high-school graduates there is substantial variability in these quantities.

Figure~\ref{fig:nhanes-raw-trajectory-median} complements Figure~\ref{fig:nhanes-age-income-distribution} and Figure~\ref{fig:nhanes-raw-trajectory-age-contrasts} by providing pointwise confidence intervals for the median of the day-averaged activity trajectory. We see clearly that, as income rises, the median profile for older individuals tends to increase overall while the younger individuals have median profiles that are relatively stable across income.

\begin{figure}[t]
  \centering
  \includegraphics[width=\textwidth]{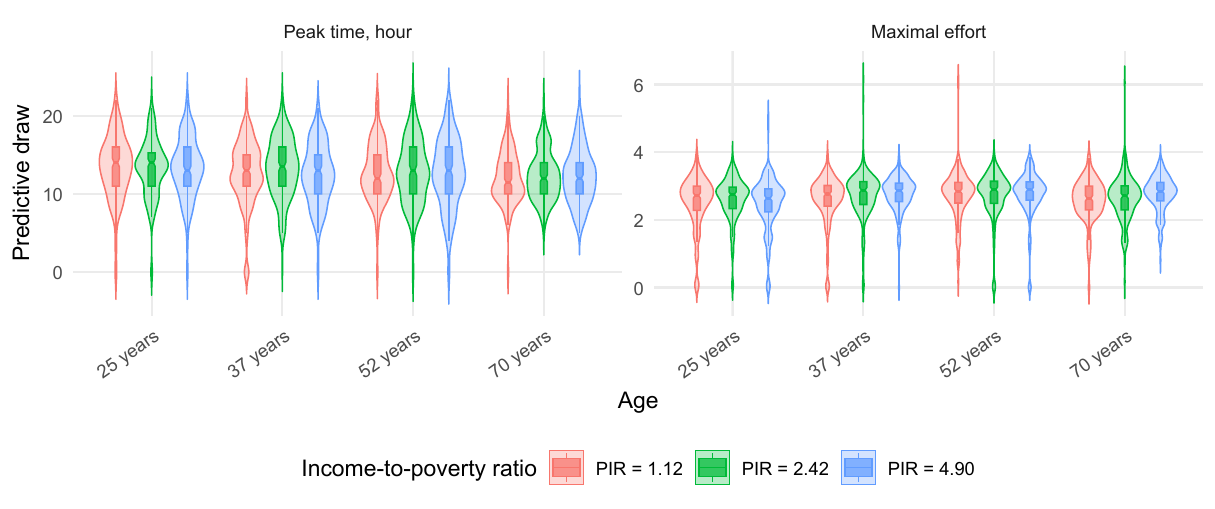}
  \caption{Predictive distribution of the peak time (left) and maximal effort (right) across strata, displayed with boxplots and violin plots. Boxplots give upper and lower quartiles and the median, with whiskers extending up to 1.5 times the interquartile range.}
  \label{fig:nhanes-peak-effort-predictive-violins}
\end{figure}

\begin{figure}[t]
  \centering
  \includegraphics[width=\textwidth]{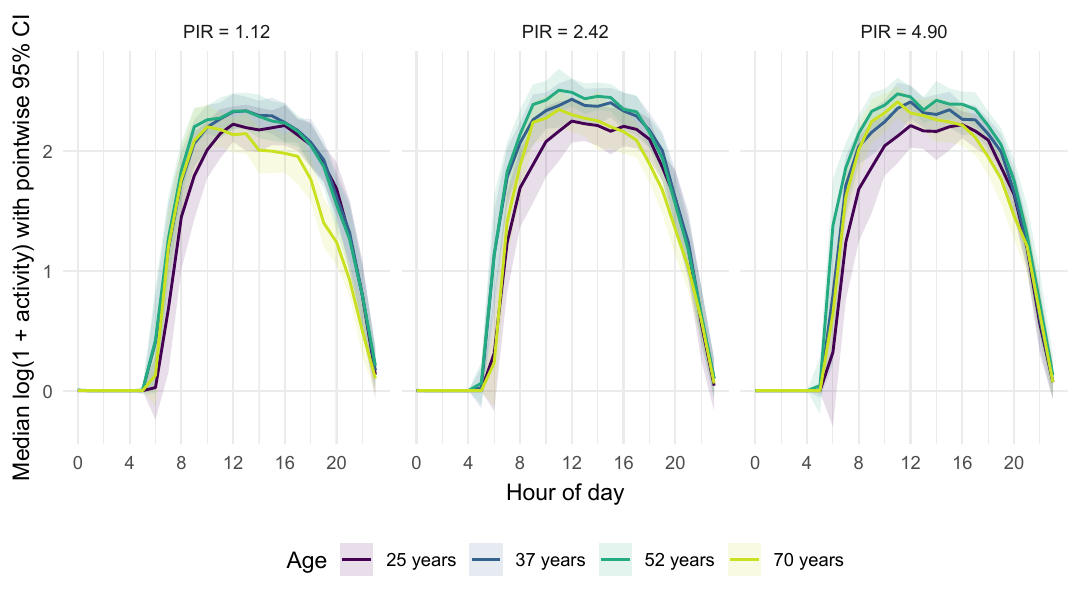}
  \caption{Pointwise median day-averaged log activity trajectories with pointwise 95\% confidence bands. Facets show poverty-income ratio and color encodes age.}
  \label{fig:nhanes-raw-trajectory-median}
\end{figure}

\subsection{Codings for NHANES Covariates}

The conditional distribution models in the NHANES application use seven baseline covariates: smoking status, age, race/ethnicity, education, family poverty-income ratio, marital status, and sex. These are the variables \texttt{smoke}, \texttt{age}, \texttt{race}, \texttt{edu}, \texttt{income}, \texttt{marital}, and \texttt{sex} in the analysis code provided in the Supplementary Material. Codings for these variables are given below.

\begin{itemize}
\item \texttt{smoke}: Smoking status is derived from \texttt{SMQ040}, ``Do you now smoke cigarettes?'' We define \texttt{smoke}=1 for respondents who report smoking every day and \texttt{smoke}=0 otherwise.
\item \texttt{age}: Age is measured in years with values 85 and above top-coded at 85. 
\item \texttt{race}: Race takes values in $\{1, 2, 3, 4, 5\}$ with 1 = Mexican American, 2 = Other Hispanic, 3 = Non-Hispanic White, 4 = Non-Hispanic Black, and 5 = Other race, including multi-racial.
\item \texttt{edu}: Adult education takes values in $\{1, 2, 3, 4, 5\}$ with 1 = less than 9th grade, 2 = 9--11th grade including 12th grade with no diploma, 3 = high school graduate/GED or equivalent, 4 = some college or AA degree, 5 = college graduate.
\item \texttt{income}: Family income is represented by the family poverty-income ratio. This ranges from 0 to 5, with values 5 and above top-coded at 5. 
\item \texttt{marital}: Marital status takes values in $\{1, 2, 3, 4, 5, 6\}$ with 1 = married, 2 = widowed, 3 = divorced, 4 = separated, 5 = never married, and 6 = living with partner.
\item \texttt{sex}: Sex takes values in $\{1, 2\}$ with 1 = male and 2 = female.
\end{itemize}

There was limited missingness in the variables used in our analysis, and a complete-case analysis was performed.

\subsection{Coverage of Little-Bag Confidence Intervals}
\label{sec:simulation-coverage}

We now study whether the little-bag procedure described in Section~\ref{sec:inference-target-functionals} provides reliable uncertainty quantification for scalar functionals of the conditional law. 
We focus on targets of the form
\begin{math}
  \theta_g(x)
  =
  \mathbb E\{g(Y)\mid X=x\},
\end{math}
estimated by the fDRF plug-in estimator 
\begin{math}
  \widehat\theta_g(x)
  =
  \int g(y)\,\widehat F_x(dy)
  =
  \sum_{i=1}^N w_i(x)g(Y_i).
\end{math}
We generate covariates $X_i \sim \Uniform([0,1]^5)$ and the index
\begin{math}
  \eta_i
  =
  \sqrt{12}\,\beta^\top\left(X_i-\frac12\mathbf 1\right),
\end{math}
where
\begin{math}
  \beta
  =
  \frac{(1,0.8,-0.6,0.4,-0.2)^\top}
  {\|(1,0.8,-0.6,0.4,-0.2)\|_2},
\end{math}
We then evaluate coverage at the covariate values
\begin{math}
  x_\eta
  =
  \frac12\mathbf 1
  +
  \frac{\eta}{\sqrt{12}}\beta
\end{math}
for $\eta \in \{-1, 0, 1\}$. We consider sample sizes $N \in \{400, 800, 1600, 3200\}$, and for each replicated dataset we compute $\widehat \theta(x_\eta) \pm 1.96 \, \widehat{\operatorname{se}}\{\widehat\theta(x_\eta)\}$. In our main implementation we use $L_{\mathrm{bag}}=50$ little bags and $M_{\mathrm{bag}}=20$ trees per bag.

We consider three targets across three DGPs. The first is the exceedance functional $\theta(x) = \E\{\int_0^1 1\{Y_i(t) > 0.4\} \mid X_i = x\}$ under a heteroskedastic Gaussian-process model, so that the target depends on conditional dispersion rather than only on the conditional mean. The second is a derivative-based roughness functional $\theta(x) = \E[\log \{1 + \int_0^1 ( DY_i(t) )^2 \ dt\} \mid X_i = x]$ where $Dy(t)$ denotes the derivative of $y(t)$, under a Gaussian-process model in which the length-scale varies with $\eta$. The third is a peak-location probability
\begin{math}
  \theta(x) = \Pr\{\arg\max_t Y_i(t) < 0.5 \mid X_i = x\}
\end{math}
under a two-regime phase-mixture model. Because the argmax map is discontinuous, this third setting is interpreted as a nonregular stress test rather than as a setting in which the little-bag intervals should necessarily achieve nominal coverage.

Results are given in Figure~\ref{fig:coverage} for the \fdrf\ using the Sobolev norm with $\lambda = 0.5$ and the $L_2$ norm. We see good coverage of the interval estimates for both the exceedance and peak-location functionals for both kernels. The roughness functional is quite difficult to estimate for both kernels, although we do see the weighted Sobolev kernel is able to cover the functional at all values of $\eta$ when the sample size is large. By contrast, the coverage for the $L_2$ kernel on this function is poor; this is because the $L_2$ kernel is not as sensitive as the Sobolev kernel to changes in the roughness of functions. This provides evidence that the choice of kernel is important for obtaining good coverage for estimators in addition to leading to better performance as measured by proper scoring rules.

\begin{figure}
  \centering
  \includegraphics[width=1\textwidth]{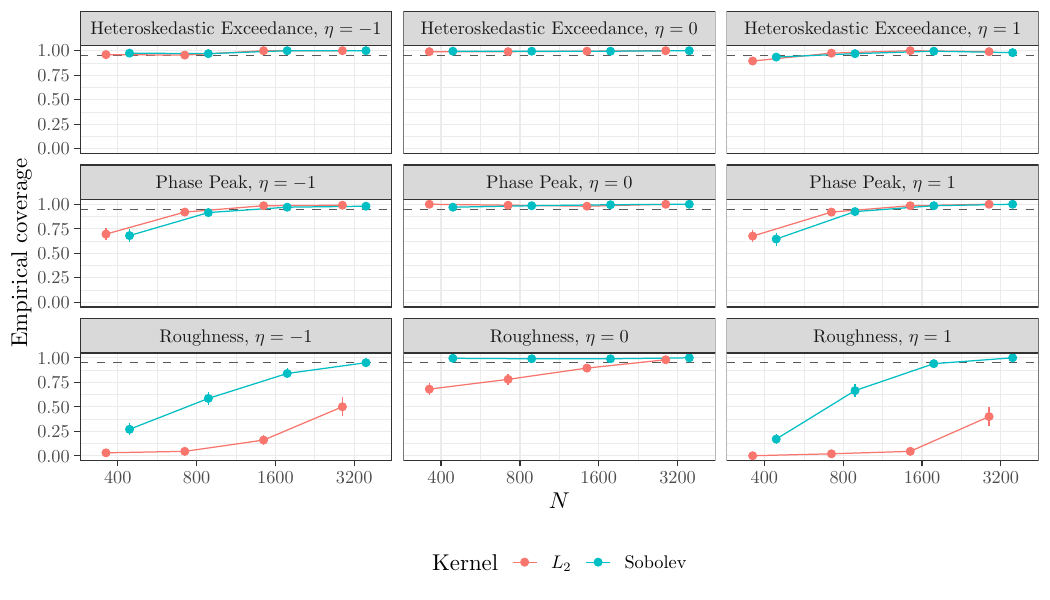}
  \caption{Empirical coverage of nominal 95\% confidence intervals under the settings described in Section~\ref{sec:simulation-coverage}.\label{fig:coverage}}
\end{figure}

Full details of the data-generating mechanisms are given in Supplementary Section~\ref{sec:supp-coverage-details}.

\section{Additional Details for the Coverage Simulation}
\label{sec:supp-coverage-details}

This section gives the full details for the coverage experiment summarized in Section~\ref{sec:simulation-coverage}. The goal is to assess whether the little-bag procedure used for fDRF functionals provides calibrated frequentist confidence intervals for scalar features of the conditional law. Throughout, the estimand is a population-level scalar functional. We focus on targets of the form
\begin{math}
    \theta_g(x)
    =
    \mathbb E\{g(Y)\mid X=x\},
\end{math}
for which the fDRF plug-in estimator is
\begin{math}
    \widehat\theta_g(x)
    =
    \int g(y)\,\widehat F_x(dy)
    =
    \sum_{i=1}^N w_i(x)g(Y_i).
\end{math}
After applying the scalar map $g(\cdot)$, the estimator is an honest forest weighted average of scalar pseudo-outcomes $g(Y_i)$.

\subsection{Common simulation setup}
In all settings, covariates are generated as
\begin{math}
    X_i \sim \Uniform([0,1]^5),
\end{math}
and the conditional law of the response depends on $X_i$ through the index
\[
    \eta_i
    =
    \sqrt{12}\,\beta^\top\left(X_i-\frac12\mathbf 1\right),
    \qquad
    \beta
    =
    \frac{(1,0.8,-0.6,0.4,-0.2)^\top}
    {\|(1,0.8,-0.6,0.4,-0.2)\|_2}.
\]
Coverage is evaluated at the three fixed covariate values
\[
    x_\eta
    =
    \frac12\mathbf 1
    +
    \frac{\eta}{\sqrt{12}}\beta,
    \qquad
    \eta\in\{-1,0,1\}.
\]
These values are interior to the covariate space and correspond to low, central, and high values of the signal index. All examples use the approximate MMD split criterion with $B = 200$ random features and consider 200 replicated datasets.

\subsection{Data Generating Process 1: Heteroskedastic Exceedance}

The first data-generating mechanism is designed to test coverage for a bounded functional that depends on conditional dispersion. We generate
\begin{math}
  Y_i(t)
  =
  \mu(t)
  +
  \sigma(\eta)Z_i(t),
\end{math}
where $Z_i$ is a zero-mean squared-exponential Gaussian process with unit marginal variance and length-scale $0.2$. That is,
\begin{math}
  \Cov\{Z_i(t),Z_i(t')\}
  =
  \exp\left\{
    -\frac{(t-t')^2}{2(0.2)^2}
  \right\}.
\end{math}
The mean function is
\begin{math}
    \mu(t)
    =
    0.5\sin(2\pi t)
    +
    0.3\sin(4\pi t),
\end{math}
and the conditional scale is
\begin{math}
    \sigma(\eta)
    =
    0.25
    +
    \frac{0.55}{1+\exp(-1.5\eta)}.
\end{math}
The target functional is the fraction of the domain over which the curve exceeds the
threshold $0.4$,
\[
  \theta(x) =
  \E\left\{
    \int_0^1 \mathbf 1\{Y_i(t)>0.4\}\,dt
  \mid X_i = x\right\}
  =
  \int_0^1 \Phi\left\{ \frac{\mu(t)-0.4}{\sigma(\eta(x))} \right\}
\]
This setting is intended to be favorable for the little-bag procedure: the target is bounded and smooth as a function of the index $\eta$, but it is still distributional because changing $\sigma(\eta)$ changes exceedance probabilities without changing the conditional mean function.

\subsection{Data Generating Process 2: Roughness Functional}

The second data-generating mechanism is designed to test uncertainty quantification for a
derivative-based feature of the conditional law. We generate
\begin{math}
    Y_i(t)
    =
    0.4\sin(2\pi t)
    +
    0.4Z_{i, \ell(\eta)}(t),
\end{math}
where $Z_{i, \ell}$ is a zero-mean squared-exponential Gaussian process with unit marginal
variance and length-scale
\begin{math}
    \ell(\eta)
    =
    0.05
    +
    \frac{0.25}{1+\exp(1.5\eta)}.
\end{math}
Thus larger values of $\eta$ correspond to shorter length-scales and rougher sample paths. The target functional is
\begin{align*}
    \theta(x)
    =
    \mathbb E\{g(Y_i)\mid X_i = x\}
  \quad \text{where} \quad
    g(y)
    =
    \log\left\{
        1+\int_0^1 [Dy(t)]^2\,dt
    \right\},
\end{align*}
where $Dy(t)$ is the derivative of $y(t)$. The true value of $\theta(x)$ is computed by Monte Carlo on a large dataset. This setting is especially relevant for functional distributional estimation because the signal is in the covariance and smoothness of the response, not in its conditional mean.

The purpose of this comparison is to distinguish failures of interval calibration from failures due to bias in the estimated conditional law. If the $L_2$-kernel forest does not split effectively on roughness differences, it may produce biased point estimates and undercoverage even if the little-bag standard error accurately estimates the variability around the biased estimator.

\subsection{Data Generating Mechanism 3: Phase-Mixture Peak Probability}

The third data-generating mechanism is a nonregular stress test motivated by our peak-time summaries. We generate a latent mixture indicator
\begin{math}
  Z_i\sim\operatorname{Bernoulli}\{p(\eta_i)\},
  \
  p(\eta)
  =
  \frac{1}{1+\exp(-1.5\eta)}.
\end{math}
Conditional on $Z_i = 1$, the curve has a peak centered near $0.35$, while conditional on $Z_i = 0$, the curve has a peak centered near $0.65$. Specifically,
\[
    Y_i(t)
    =
    A_i
    \exp\left\{
        -\frac{(t-C_i)^2}{2W_i^2}
    \right\}
    +
    0.03 \, \varepsilon_i(t),
\]
where $\varepsilon_i$ is a mean-zero Gaussian process with covariance $\kappa(t,t') = \exp\left\{\frac{-(t - t')^2 }{2 \ell^2}\right\}$ and $\ell = 0.12$. The amplitude and width are generated as
\[
    A_i=\exp(0.10\xi_i),
    \qquad
    W_i=0.06\exp(0.10\zeta_i),
    \qquad
    \xi_i,\zeta_i\sim \Normal(0,1).
\]
The center is set to $C_i = 0.65 - 0.3 Z_i + \delta_i$ where $\delta_i \sim \Normal(0, 0.025^2)$.

The target functional is the probability that the curve peaks in the first half of the domain,
\[
    \theta(x)
    =
    \Pr\{\arg\max_t Y_i(t)<0.5\mid X_i=x\}.
\]
Ties are broken by taking the first maximizer on the grid. The truth \(\theta(x_\eta)\) is computed by Monte Carlo on a large synthetic dataset. Unlike the preceding two targets, \(g_{\mathrm{peak}}\) is discontinuous as a functional of the curve: small perturbations can change the argmax when there are nearly tied peaks. We therefore interpret this setting as a stress test for the little-bag intervals rather than as a setting in which nominal coverage is guaranteed by standard smooth-functional heuristics.

\section{Theoretical Arguments for Confidence Intervals}

This appendix gives formal results for asymptotic normality of estimates, which provide some justification for our confidence interval construction. For the purpose of this section, we will analyze the ideal infinite-tree variant of the \fdrf\ so that we do not need to include a correction for within-bag Monte Carlo variability. To do this, we reduce each specified low-dimensional estimand to an honest regression forest with an appropriate scalar pseudo-response.

Let $g: \sY \to \Reals^m$ and let
\begin{math}
  \eta_g(x)
  =
  \E\{g(Y)\mid X=x\},
\end{math}
and $\widehat\eta_g(x) = \sum_{i=1}^N w_i(x)g(Y_i)$. For $a \in \Reals^m$, let $Z_a = a^\top g(Y)$, $m_a(x) = \E(Z_a \mid X = x)$, and $m_{2,a}(x) = \E(Z_a^2 \mid X = x)$. For a scalar target $\theta(x) = \Psi\{\eta_g(x)\}$ with estimator $\widehat \theta(x) = \Psi\{\widehat \eta_g(x)\}$, let $a_x = \left.\nabla \Psi(\eta)\right|_{\eta = \eta_g(x)}$. We make the following additional assumptions.

\begin{assumption}[Targetwise forest regularity]
  \label{assumption:7-inference}
  Fix $x \in [0,1]^P$ in the interior and suppose Assumption~\ref{assumption:1-rf} and Assumption~\ref{assumption:3-covariate} hold. Additionally, the trees use subsamples of size $s_N \asymp N^\beta$ where
  \[
    \beta_{\min}
    =
    \left[
      1+
      \frac{\pi}{P}
      \frac{
        \log\{(1-\alpha)^{-1}\}
      }{
        \log(\alpha^{-1})
      }
    \right]^{-1}
    <
    \beta
    <
    1.
  \]
\end{assumption}

\begin{assumption}[Properties of the Outcome Distribution]
  \label{assumption:8-inference}
  For every direction
  \begin{math}
    a\in\mathcal A_x
    =
    \{e_1,\ldots,e_m,a_x\},
  \end{math}
  where $e_j$ denotes the $j$th coordinate vector, the functions $m_a$ and $m_{2,a}$ are Lipschitz on $[0,1]^P$ and, for some $\delta>0$ and $C<\infty$,
  \[
    \sup_{x\in[0,1]^P}
    \E\left[
      |Z_a-m_a(x)|^{2+\delta}
      \mid X=x
    \right]
    \le C.
  \]
  Finally, for all $a \in \mathcal A_x$, $\Var(Z_a \mid X = x) > 0$.
\end{assumption}

\begin{assumption}[Assumptions on $\Psi$]
  \label{assumption:9-inference}
  The gradient $\nabla \Psi$ is Lipschitz in a neighborhood of $\eta_g(x)$.
\end{assumption}

\begin{proposition}[Scalar pseudo-response reduction]
\label{prop:scalar-pseudo-response}
Under the Assumption~\ref{assumption:7-inference}---\ref{assumption:9-inference}, for every fixed $a\in\mathcal A_x$ there is a sequence $\sigma_{N,a}(x)\to0$ such that
\begin{equation}
  \label{eq:scalar-pseudo-clt}
  \frac{
    a^\top\{\widehat\eta_g(x)-\eta_g(x)\}
  }{
    \sigma_{N,a}(x)
  }
  \cind
  \Normal(0,1).
\end{equation}
Here $\sigma_{N,a}^2(x)$ is the asymptotic sampling variance of the
honest regression forest whose scalar pseudo-response is
$Z_a=a^\top g(Y)$.

\end{proposition}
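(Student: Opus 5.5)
The plan is to reduce \eqref{eq:scalar-pseudo-clt} directly to the asymptotic normality theorem for honest regression forests, \citet[Theorem~1]{wager2018estimation}. The key observation is that the \fdrf\ weights $w_i(x)$ are built from the covariates and the splitting subsample only. They do not depend on which scalar summary is later averaged. Hence, for fixed $a$,
\begin{align*}
  a^\top\widehat\eta_g(x) = \sum_{i=1}^N w_i(x)\, Z_{a,i}, \qquad Z_{a,i} = a^\top g(Y_i),
\end{align*}
is exactly the output of an (infinite-tree) honest forest with scalar responses $Z_{a,i}$. Its target is $a^\top\eta_g(x) = m_a(x)$, by linearity of conditional expectation.

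The one subtlety is that the splits are chosen by the MMD criterion on $Y_i$, not by a criterion on $Z_{a,i}$. Wager and Athey's theorem, however, requires only structural properties of the tree-growing procedure: subsampling, honesty, $\alpha$-regularity, symmetry and random splits. It does not require any particular split criterion. These properties are exactly P1--P5 of Assumption~\ref{assumption:1-rf}. Honesty (P2) guarantees that, conditional on the $X_i$'s and the splitting half, the estimation-half responses $Z_{a,i}$ are independent with conditional means $m_a(X_i)$. This is all their variance and bias arguments use.

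I would then verify the remaining hypotheses of their Theorem~1 one by one. Assumption~\ref{assumption:3-covariate} gives the covariate density condition on $[0,1]^P$. Assumption~\ref{assumption:8-inference} gives Lipschitz $m_a$ and $m_{2,a}$, a uniform $(2+\delta)$-moment bound on $Z_a - m_a(x)$, and positive conditional variance at $x$. The last is needed for their incrementality lower bound $\Var(\E[T\mid Z_1]) \gtrsim s^{-1}/\log(s)^P$, which in turn drives the Hájek-projection (ANOVA) argument. Assumption~\ref{assumption:7-inference} gives the subsampling rate $s_N\asymp N^\beta$ with $\beta_{\min}<\beta<1$. This is the rate condition under which their bias bound $O(s^{-\frac12\frac{\log((1-\alpha)^{-1})}{\log(\alpha^{-1})}\frac{\pi}{P}})$ is $o(\sigma_{N,a}(x))$, where $\sigma_{N,a}^2(x)\asymp s_N/N$ up to log factors. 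Their conclusion is $(\widehat m_a(x)-m_a(x))/\sigma_{N,a}(x)\cind\Normal(0,1)$ with $\sigma_{N,a}(x)\to0$, and this is precisely \eqref{eq:scalar-pseudo-clt}.

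The main obstacle is the compatibility check between the \fdrf\ and the form of Wager and Athey's result. Their proof is written for trees whose splits may depend on the responses of the splitting sample. Here those splitting-sample responses are the functional $Y_i$'s rather than $Z_{a,i}$. I would argue that their proofs treat the splitting sample abstractly: symmetry and honesty are all that is needed, and the split rule enters only through the leaf-diameter bound of their Lemma~1, which holds for any $\alpha$-regular random-split tree. A second point is that their theorem is stated for Lipschitz mean and second moment and bounded-moment noise, which is why these conditions are imposed directionwise for every $a\in\mathcal A_x$. Finally, they treat the infinite-tree forest as a U-statistic-type average over subsamples, which matches our ``ideal'' forest with $w_i(x)=\E\{w_{m,i}(x)\mid\text{data}\}$. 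Assumption~\ref{assumption:9-inference} is not needed for this proposition; it is reserved for the subsequent delta-method step for $\Psi$.
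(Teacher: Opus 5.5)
Your proposal is correct and follows the same route as the paper. The paper's proof simply asserts that Assumptions~\ref{assumption:7-inference}--\ref{assumption:9-inference} put $a^\top\widehat\eta_g(x)=\sum_i w_i(x)Z_{a,i}$ within the hypotheses of Wager and Athey's asymptotic normality theorem (cited there as their Theorem~3.1), and then applies that theorem to the pseudo-response $Z_a=a^\top g(Y)$; your argument that MMD splits chosen on the functional $Y_i$'s remain compatible with their honesty/symmetry framework spells out a point the paper leaves implicit.
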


\begin{proof}
  Assumptions~\ref{assumption:7-inference}---\ref{assumption:9-inference} are sufficient to satisfy the hypothesis of Theorem~3.1 of \citet{wager2018estimation}. Applying that theorem to $Z_a$ proves \eqref{eq:scalar-pseudo-clt}.
\end{proof}

\begin{proposition}[Smooth functions of conditional moments]
\label{prop:smooth-moment-target}
Under Assumptions~\ref{assumption:7-inference}---\ref{assumption:9-inference},
\begin{equation}
  \label{eq:smooth-target-clt}
  \frac{
    \widehat\theta(x)-\theta(x)
  }{
    \sigma_{N,a_x}(x)
  }
  \cind
  \Normal(0,1).
\end{equation}
\end{proposition}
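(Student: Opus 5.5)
The plan is a delta-method argument built on Proposition~\ref{prop:scalar-pseudo-response}. Write $\widehat\eta = \widehat\eta_g(x)$ and $\eta = \eta_g(x)$, and Taylor expand $\Psi$ around $\eta$:
\begin{align*}
  \widehat\theta(x) - \theta(x)
  = a_x^\top(\widehat\eta - \eta) + R_N,
  \qquad
  R_N = \int_0^1 \{\nabla\Psi(\eta + s(\widehat\eta - \eta)) - \nabla\Psi(\eta)\}^\top(\widehat\eta - \eta)\,ds .
\end{align*}
Proposition~\ref{prop:scalar-pseudo-response} with $a = a_x$ shows that the linear term divided by $\sigma_{N,a_x}(x)$ is asymptotically $\Normal(0,1)$. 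By Slutsky's theorem, it then suffices to show $R_N/\sigma_{N,a_x}(x) \cinp 0$.

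First we need consistency and a rate for $\widehat\eta$. Apply Proposition~\ref{prop:scalar-pseudo-response} with $a = e_j$ for each $j = 1,\ldots,m$. This gives $e_j^\top(\widehat\eta - \eta) = O_P(\sigma_{N,e_j}(x))$, and since $\sigma_{N,e_j}(x) \to 0$, we get $\|\widehat\eta - \eta\| = O_P(\max_j \sigma_{N,e_j}(x)) = o_P(1)$. So with probability tending to one, $\widehat\eta$ lies in the neighborhood on which $\nabla\Psi$ is $L_\Psi$-Lipschitz (Assumption~\ref{assumption:9-inference}). On that event, $|R_N| \le \tfrac{L_\Psi}{2}\|\widehat\eta - \eta\|^2 = O_P(\max_j \sigma_{N,e_j}(x)^2)$.

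The main obstacle is comparing this quadratic remainder with $\sigma_{N,a_x}(x)$. We need $\max_j \sigma^2_{N,e_j}(x)/\sigma_{N,a_x}(x) \to 0$, which requires the variance scales for the different directions to be of the same order.
\begin{itemize}
\item For the upper bound, Theorem~3.1 of \citet{wager2018estimation} (with its Theorem~5 variance bounds) shows that for an honest forest with subsample size $s_N$, the standard deviation is bounded above and below by constants times $\sqrt{s_N/N}$ up to logarithmic factors. The constants depend on $\Var(Z_a \mid X = x)$ and the moment bound in Assumption~\ref{assumption:8-inference}. So $\sigma_{N,e_j}(x) \lesssim \sqrt{s_N/N}\cdot\text{polylog}(N)$.
\item For the lower bound, the positivity $\Var(Z_{a_x} \mid X = x) > 0$ in Assumption~\ref{assumption:8-inference} gives $\sigma_{N,a_x}(x) \gtrsim \sqrt{s_N/N}/\text{polylog}(N)$. (If $a_x = 0$ the target is degenerate; this case is excluded by the variance positivity.)
\end{itemize}
The ratio is then of order $\sqrt{s_N/N}$ times polylog factors, which tends to $0$ because $s_N \asymp N^\beta$ with $\beta < 1$.

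Combining these pieces gives $R_N/\sigma_{N,a_x}(x) \cinp 0$. Slutsky's theorem then yields \eqref{eq:smooth-target-clt}. If the logarithmic factors in the \citet{wager2018estimation} bounds need care, I would simply note that any polylog loss is absorbed by the polynomial gap $N^{-(1-\beta)/2}$.
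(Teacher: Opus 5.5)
Your proof is correct and matches the paper's: you Taylor-expand $\Psi$, control the quadratic remainder using the local Lipschitz property of $\nabla\Psi$, and compare $\|\widehat\eta_g(x)-\eta_g(x)\|^2 = O_P(s_N/N)$ with the incrementality lower bound $\sigma_{N,a_x}(x) \gtrsim \sqrt{s_N/N}/\{\log(s_N)\}^{P/2}$ before applying Slutsky's theorem. The only minor difference is how you get the rate: you apply Proposition~\ref{prop:scalar-pseudo-response} in the coordinate directions $e_j$, which absorbs the bias into that CLT and leaves only the upper bound $\sigma_{N,e_j}(x)^2 \le (s_N/N)\Var\{T_{s_N,e_j}(x)\}$ (no polylog factor needed), whereas the paper's Lemma~\ref{lem:forest-rates} bounds the Hoeffding variance and the bias separately.
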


\begin{proof}
Let
\begin{math}
  \Delta_N(x)
  =
  \widehat\eta_g(x)-\eta_g(x).
\end{math}
Because $\nabla\Psi$ is locally Lipschitz, a Taylor expansion with
integral remainder gives, on an event whose probability tends to one,
\[
  \left|
    \Psi\{\eta_g(x)+\Delta_N(x)\}
    -
    \Psi\{\eta_g(x)\}
    -
    a_x^\top\Delta_N(x)
  \right|
  \le
  C_\Psi\|\Delta_N(x)\|^2
\]
for a finite constant $C_\Psi$. Lemma~\ref{lem:forest-rates} below implies
that the right-hand side is
$o_p\{\sigma_{N,a_x}(x)\}$. Hence
\[
  \widehat\theta(x)-\theta(x)
  =
  a_x^\top\Delta_N(x)
  +
  o_p\{\sigma_{N,a_x}(x)\}.
\]
Proposition~\ref{prop:scalar-pseudo-response}, applied with
$a=a_x$, and Slutsky's theorem yield
\eqref{eq:smooth-target-clt}.
\end{proof}

\begin{remark}
  Proposition~\ref{prop:smooth-moment-target} does not follow immediately from the delta method because Proposition~\ref{prop:scalar-pseudo-response} does not establish joint asymptotic normality of $\widehat \eta_g(x)$ under a common scaling.
\end{remark}

\begin{lemma}[Rates needed for the nonlinear remainder]
\label{lem:forest-rates}
Let
\begin{math}
  r_N=\sqrt{\frac{s_N}{N}}.
\end{math}
Under Assumptions~\ref{assumption:7-inference}---\ref{assumption:9-inference},
\[
  \|\widehat\eta_g(x)-\eta_g(x)\|
  =
  O_p(r_N)
  \quad \text{and} \quad
  \sigma_{N,a_x}(x)
  \gtrsim
  \frac{
    r_N
  }{
    \{\log(s_N)\}^{P/2}
  }.
\]
Consequently, $\|\widehat \eta_g(x) - \eta_g(x)\|^2 = o_P\{\sigma_{N,a_x}(x)\}$.
\end{lemma}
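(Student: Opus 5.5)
The plan is to handle the two claims of Lemma~\ref{lem:forest-rates} separately, each coordinate-wise through the scalar pseudo-responses $Z_{e_j}=e_j^\top g(Y)$, and then combine them. Both ingredients come from \citet{wager2018estimation}. The first is the bias bound from their Theorem~3.2 and Lemma~2: under Assumption~\ref{assumption:7-inference} and Lipschitz $m_a$, the bias satisfies $|\E\{\widehat m_a(x)\}-m_a(x)|=O(s_N^{-\frac12\frac{\log((1-\alpha)^{-1})}{\log(\alpha^{-1})}\frac{\pi}{P}})$. The second is the variance bounds from their Theorem~3.1, which sandwich the variance as $s_N/N\cdot\{\log s_N\}^{-P}\lesssim\sigma^2_{N,a}(x)\lesssim s_N/N$. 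The lower half of this sandwich uses $\Var(Z_a\mid X=x)>0$ and continuity of $m_{2,a}$ through the $\nu$-incrementality argument. The upper half comes from the Hájek projection and the Efron--Stein bound $\Var\{\widehat m_a(x)\}\le (s_N/N)\Var(T)$, where $\Var(T)$ is bounded by the uniform $2+\delta$ moment in Assumption~\ref{assumption:8-inference}.

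\textbf{Step 1 (rate for $\widehat\eta_g$).} For each $j\le m$, write $\widehat\eta_{g,j}(x)-\eta_{g,j}(x)$ as variance plus bias. By Chebyshev and the upper variance bound, the centered part is $O_p(\sqrt{s_N/N})=O_p(r_N)$. The bias is $O(s_N^{-\gamma'})$ with $\gamma'=\frac12\frac{\pi}{P}\frac{\log((1-\alpha)^{-1})}{\log(\alpha^{-1})}$. The constraint $\beta>\beta_{\min}$ is exactly equivalent to $N^{-\beta\gamma'}=o(N^{(\beta-1)/2})$, so the bias is $o(r_N)$. Since $m$ is fixed, summing over the coordinates gives $\|\widehat\eta_g(x)-\eta_g(x)\|=O_p(r_N)$.

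\textbf{Step 2 (lower bound on $\sigma_{N,a_x}$).} This follows by applying the lower half of Wager--Athey's variance sandwich with $a=a_x\in\mathcal A_x$. The hypotheses they need are supplied by Assumption~\ref{assumption:8-inference} (Lipschitz $m_{a_x}$ and $m_{2,a_x}$, positive conditional variance) together with Assumption~\ref{assumption:3-covariate}. This gives $\sigma_{N,a_x}(x)\gtrsim r_N/\{\log s_N\}^{P/2}$.

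\textbf{Step 3 (combination).} From Step~1, $\|\widehat\eta_g-\eta_g\|^2=O_p(r_N^2)$. Dividing by the lower bound from Step~2 gives the ratio $O_p(r_N\{\log s_N\}^{P/2})$. Since $r_N=N^{-(1-\beta)/2}$ decays polynomially while $\{\log s_N\}^{P/2}$ grows only polylogarithmically, the ratio tends to zero, which is the claim.

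\textbf{Main obstacle.} The delicate point is Step~2. Wager--Athey state the lower variance bound in terms of $\Var\{\widehat m_a(x)\}$ and its Hájek projection, whereas Proposition~\ref{prop:scalar-pseudo-response} defines $\sigma_{N,a}$ only as "the asymptotic sampling variance." I would therefore fix the convention $\sigma^2_{N,a}(x)=\Var\{\mathring{\widehat m}_a(x)\}$, the variance of the Hájek projection, and verify that their $\nu(s)$-incrementality bound, $\nu\gtrsim 1/\log(s)^P$, holds under honesty and $\alpha$-regularity with the leaf size $\kappa$ fixed. Their result is for $X$ uniform on $[0,1]^P$. To extend it to a density bounded away from zero, either the density ratio should enter only as a constant, or I would repeat their argument with the continuity of $g$ from Assumption~\ref{assumption:3-covariate}.
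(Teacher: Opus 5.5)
Your proposal is correct and follows essentially the same route as the paper. The paper likewise bounds the centered part coordinatewise by the complete-U-statistic bound $\Var\{\widehat m_{N,e_j}(x)\}\le (s_N/N)\Var\{T_{s_N,e_j}(x)\}$, makes the Wager--Athey bias $O(s_N^{-\lambda_\alpha/2})$ negligible via $\beta>\beta_{\min}$, and lower-bounds the H\'ajek-projection variance $\sigma^2_{N,a_x}(x)=(s_N/N)\Var\{\mathring T_{s_N,a_x}(x)\}$ through $\nu(s_N)\gtrsim\{\log s_N\}^{-P}$ incrementality before taking the ratio.

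The only small difference is in that last step. The paper turns the incrementality \emph{ratio} into an absolute bound by separately showing $\liminf_N\Var\{T_{s_N,a_x}(x)\}>0$, using $\Var(Z_{a_x}\mid X=x)>0$, honesty, the $2\kappa-1$ leaf-size cap, and leaf shrinkage. You fold this into the ``lower half of the sandwich,'' so if the result you cite gives only the ratio, you should add that step explicitly.
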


\begin{proof}
Write
\begin{math}
  \widehat m_{N,a}(x)
  =
  a^\top\widehat\eta_g(x)
  =
  \sum_{i=1}^N w_i(x)\,a^\top g(Y_i)
\end{math}
for the forest prediction based on the scalar pseudo-response
\begin{math}
  Z_a=a^\top g(Y).
\end{math}
Let $T_{s_N,a}(x)$ denote the corresponding prediction from one honest tree grown on a subsample of size $s_N$.

First consider a coordinate direction $a=e_j$. The infinite
subsampled forest is a complete $U$-statistic of order $s_N$.
The standard Hoeffding variance decomposition \citep[][Theorem~5.2]{hoeffding1948class} therefore gives
\begin{equation}
  \label{eq:coordinate-forest-variance}
  \Var\!\left\{
    \widehat m_{N,e_j}(x)
  \right\}
  \le
  \frac{s_N}{N}
  \Var\!\left\{
    T_{s_N,e_j}(x)
  \right\}.
\end{equation}
By Assumption~\ref{assumption:8-inference} we have
\begin{math}
  \sup_x
  \E\{g_j(Y_i)^2\mid X_i=x\}<\infty,
\end{math}
so that the single-tree second moment and its variance are bounded uniformly in $N$ (recall that by Assumption~\ref{assumption:1-rf} the leaf nodes in a single tree contain between $\kappa$ and $2\kappa - 1$ observations). It follows from \eqref{eq:coordinate-forest-variance} that
\[
  \widehat m_{N,e_j}(x)
  -
  \E\{\widehat m_{N,e_j}(x)\}
  =
  O_p\left(
    \sqrt{\frac{s_N}{N}}
  \right).
\]
To control the bias, define
\begin{math}
  \lambda_\alpha
  =
  \frac{\pi}{P}
  \frac{
    \log\{(1-\alpha)^{-1}\}
  }{
    \log(\alpha^{-1})
  }.
\end{math}
Applied to the scalar response $g_j(Y)$,
Theorem~3.2 of \citet{wager2018estimation} gives
\[
  \left|
    \E\{\widehat m_{N,e_j}(x)\}
    -
    \eta_{g,j}(x)
  \right|
  =
  O\left(
    s_N^{-\lambda_\alpha/2}
  \right).
\]
Because $s_N\asymp N^\beta$ and
\begin{math}
  \beta>
  \frac{1}{1+\lambda_\alpha},
\end{math}
we have
\begin{math}
  s_N^{-\lambda_\alpha/2}
  =
  o\!\left(
    \sqrt{\frac{s_N}{N}}
  \right).
\end{math}
Thus, with
\begin{math}
  r_N=\sqrt{\frac{s_N}{N}},
\end{math}
we have
\begin{math}
  \widehat\eta_{g,j}(x)-\eta_{g,j}(x)
  =
  O_p(r_N)
\end{math}
for every $j=1,\ldots,m$. Hence 
\begin{equation}
  \label{eq:moment-vector-upper-rate}
  \|
    \widehat\eta_g(x)-\eta_g(x)
  \|
  =
  O_p(r_N).
\end{equation}
It remains to obtain a lower bound for the standard deviation in the linearized direction
\begin{math}
  a_x=\nabla\Psi\{\eta_g(x)\}.
\end{math}
Let $\mathring T_{s_N,a_x}(x)$ denote the centered first-order
Hájek projection of the single tree
$T_{s_N,a_x}(x)$. The variance of the leading Hájek term of the
subsampled forest is
\begin{equation}
  \label{eq:forest-hajek-variance}
  \sigma_{N,a_x}^2(x)
  =
  \frac{s_N}{N}
  \Var\{
    \mathring T_{s_N,a_x}(x)
  \}
  =
  \frac{s_N^2}{N}
  \Var\left[
    \E\{
      T_{s_N,a_x}(x)
      \mid Z_1
    \}
  \right].
\end{equation}
We note that, because $s_N(\log N)^{P}/N\to 0$, \citet[Lemma~3.3 and Theorem~3.4] {wager2018estimation} implies that $\sigma_{N,a_x}$ is a valid normalization for the scalar forest central limit theorem. 

By \citet[Corollary~3]{wager2018estimation}, a double-sample,
honest, regular, symmetric random-split tree is
$\nu(s_N)$-incremental, i.e.,
\[
  \frac{
    \Var\{
      \mathring T_{s_N,a_x}(x)
    \}
  }{
    \Var\{
      T_{s_N,a_x}(x)
    \}
  }
  \gtrsim
  \frac{1}{
    \{\log(s_N)\}^{P}
  }.
\]
Moreover, the assumption
\begin{math}
  \Var\{a_x^\top g(Y)\mid X=x\}>0
\end{math}
implies that the single-tree variance is bounded away from zero.
Indeed, honesty makes the estimation responses conditionally
independent of the selected leaf; the estimation leaf contains at
most $2\kappa-1$ observations, and the leaf containing $x$ shrinks to
$x$ with high probabiliy \citep[][Lemma~1]{wager2018estimation}.
Therefore
\begin{math}
  \liminf_{N\to\infty}
  \Var\{
  T_{s_N,a_x}(x)
  \}
  >0.
\end{math}
Combining this fact with
\eqref{eq:forest-hajek-variance} and the incrementality bound yields
\begin{equation}
  \label{eq:linearized-standard-deviation-lower}
  \sigma_{N,a_x}(x)
  \gtrsim
  \frac{
    \sqrt{s_N/N}
  }{
    \{\log(s_N)\}^{P/2}
  }.
\end{equation}
Finally, \eqref{eq:moment-vector-upper-rate} and
\eqref{eq:linearized-standard-deviation-lower} give
\[
  \frac{
    \|
      \widehat\eta_g(x)-\eta_g(x)
    \|^2
  }{
    \sigma_{N,a_x}(x)
  }
  =
  O_p\!\left[
    \sqrt{\frac{s_N}{N}}\,
    \{\log(s_N)\}^{P/2}
  \right]
  =
  o_p(1),
\]
because $s_N\asymp N^\beta$ with $\beta<1$ and $P$ is fixed.
\end{proof}

\begin{remark}
  The presentation here focuses on (functions of) linear functionals as inferential targets. Under suitable conditions, this covers means, probabilities, variances, and circular summaries. Our NHANES analysis also considers nonlinear functionals, such as conditional quantiles. It is possible to handle sufficiently regular nonlinear functionals that are solutions to estimating equations by performing a suitable asymptotic expansion; this is also done by \citet{athey2019generalized}. We defer details to that source.
\end{remark}


\begin{remark}
  We have also elected to provide details for the case of a single design point $x_i$. The NHANES application also considers \emph{contrasts} between different design points, such as the difference in exceedance probabilities between two covariate values. The results above can be extended cover such contrasts using results for generalized $U$-statistics given by \citet{peng2022rates}, as (after linearization) the difference of two honest forest-based estimands is itself a generalized $U$-statistic. In particular, Theorem~1 of \citet{peng2022rates} gives the corresponding asymptotic distribution whenever the difference is $\nu(s_N)$-incremental in the sense of \citet{wager2018estimation} and
  \[
      \frac{s_N}{N\nu(s_N)}\longrightarrow 0.
  \]
  The incrementality result for regression trees of $\nu(s_N)\gtrsim(\log s_N)^{-P}$ established by \citet{wager2018estimation} reduces this requirement to $s_N(\log s_N)^P/N \to 0$; this result is not immediate from \citet{wager2018estimation}, however, as it requires verifying the incrementality condition for the \emph{difference} of two forest evaluations. We do not pursue this extension formally here.

\end{remark}

\subsection{Verification of Assumptions for Specific Examples}

\paragraph{Conditional probabilities.}
For an event $A$ determined by an individual curve, take
\begin{math}
  g_A(y)=1(y\in A).
\end{math}
Then
\begin{math}
  \widehat p_A(x) = \sum_{i = 1}^N w_i(x) 1(Y_i \in A)
\end{math}
is an honest regression-forest prediction. The required conditions reduce to Lipschitz continuity of
\begin{math}
  x\mapsto\Pr(Y\in A\mid X=x)
\end{math}
and nondegeneracy
\begin{math}
  0<\Pr(Y\in A\mid X=x)<1.
\end{math}

\paragraph{Mean and Standard Deviation of Maximal Effort}
Let $M(y) = \max_t y(t)$ and $g_M(y) = (M(y), M(y)^2)$. Write $c(x) = \E\{M(Y_i) \mid X_i = x\}$ and $d(x) = \E\{M(Y_i)^2 \mid X_i = x\}$. A sufficient moment condition is
\[
  \sup_{z\in[0,1]^P}
  \E\{|M(Y_i)|^{4+2\delta}\mid X_i = x\}
  <
  \infty
\]
for some $\delta>0$. We also assume that the conditional moments
\begin{math}
  x \mapsto \E\{M(Y_i)^r\mid X_i=x\},
\end{math}
are locally Lipschitz for $r = 1,\ldots,4$. These conditions imply the required first and second moment regularity for every scalar linear combination of $M$ and $M^2$ (hence, the mean and standard deviation).

For the conditional mean, the linearized pseudo-response is $M$,
and nondegeneracy requires
\begin{math}
  \Var(M(Y_i)\mid X_i=x)>0.
\end{math}
For the conditional standard deviation, the linearized pseudo-response is
\[
  -\frac{c(x)}{\sigma_M(x)}M
  +
  \frac{1}{2\sigma_M(x)}M^2,
\]
whose variance is, up to a nonzero multiplicative constant, the variance of $\{M-c(x)\}^2$. Thus a nondegenerate first-order interval for the conditional standard deviation requires
\begin{math}
  \Var\left(
    \{M(Y_i) - c(x)\}^2
    \mid X_i = x
  \right) > 0.
\end{math}

\paragraph{Peak time.}
Let $\mathcal T\subset[0,24)$ denote the observation grid and set $P(Y_i)=\min\{t\in\mathcal T:Y_i(t)=\max_{s\in\mathcal T}Y_i(s)\}$, so that ties are resolved by selecting the earliest maximizing grid time, and set
\begin{math}
  g_P(Y_i) = \binom{\cos\{\pi P(Y_i) / 12\}}{\sin\{\pi P(Y_i) / 12\}}.
\end{math}
Write
Write
\[
  \eta_P(x)
  =
  \E\{g_P(Y_i) \mid X_i = x\}
  =
  \begin{pmatrix}
    a(x)\\ b(x)
  \end{pmatrix},
  \qquad
  r(x)
  =
  \{a(x)^2+b(x)^2\}^{1/2}.
\]
The circular mean and standard deviation are
\[
  \Psi_P\{a(x),b(x)\}
  =
  \left[
    \left[\frac{12}{\pi}\operatorname{atan2}\{b(x),a(x)\}\right]_{24}, \quad
    \frac{12}{\pi}\sqrt{-2\log r(x)}
  \right],
\]
where $[u]_{24}$ return $u$ mod $24$. The transformation $g_P$ is bounded, and each coordinate of $\Psi_P$ is continuously differentiable whenever $0<r(x)<1$, after choosing a local angular chart for the circular mean. Hence, the required conditions for inference hold.

\end{document}